\newtheorem{lemma}{Lemma}
\newtheorem{definition}{Definition}
\newtheorem{theorem}{Theorem}
\newtheorem{claim}{Claim}
\newtheorem{example}{Example}
\newtheorem{remark}{Remark}
\newcommand{\upperRomannumeral}[1]{\uppercase\expandafter{\romannumeral#1}}
\DeclareMathOperator*{\argmax}{arg\,max}
\newcommand{\Rbar}{\bar{\mathsf{R}}_{\mathsf{q},\delta}}
\newcommand{\T}{\mathcal{T}}
\newcommand{\B}{\mathcal{B}}
\newcommand{\PP}{\mathbb{P}}
\newcommand{\DD}{\mathbb{D}}
\newcommand{\M}{\mathcal{M}}
\newcommand{\K}{\mathcal{K}}
\newcommand{\x}{\mathbf{x}}
\newcommand{\y}{\mathbf{y}}
\newcommand{\z}{\mathbf{z}}
\newcommand{\X}{\mathbf{X}}
\newcommand{\Y}{\mathbf{Y}}
\newcommand{\Z}{\mathbf{Z}}
\newcommand{\ux}{\underline{\mathbf{x}}}
\newcommand{\uy}{\underline{\mathbf{y}}}
\newcommand{\uz}{\underline{\mathbf{z}}}
\newcommand{\pyx}{P_{\kern-0.17em Y \kern -0.06em|\kern -0.1em X}}
\newcommand{\pyhx}{P_{\kern-0.17em \hat{Y} \kern -0.06em|\kern -0.1em X}}
\newcommand{\tyx}{T_{\kern-0.17em Y \kern -0.06em|\kern -0.1em X}}
\newcommand{\tyhx}{T_{\kern-0.17em \hat{Y} \kern -0.06em|\kern -0.1em X}}
\newcommand{\wyx}{W_{\kern-0.17em Y \kern -0.06em|\kern -0.1em X}}
\newcommand{\wyxa}{W_{\kern-0.17em Y \kern -0.06em|\kern -0.1em X=0}}
\newcommand{\wyxb}{W_{\kern-0.17em Y \kern -0.06em|\kern -0.1em X=1}}
\newcommand{\wyxn}{W^{\kern -0.06em \otimes n}_{\kern-0.17em Y \kern -0.06em|\kern -0.1em X}}
\newcommand{\wyxw}{W^{\kern -0.06em \otimes w}_{\kern-0.17em Y \kern -0.06em|\kern -0.1em X}}
\newcommand{\wzx}{W_{\kern-0.17em Z \kern -0.06em|\kern -0.1em X}}
\newcommand{\wzxn}{W^{\kern -0.06em \otimes n}_{\kern-0.17em Z \kern -0.06em|\kern -0.1em X}}
\newcommand{\wzxw}{W^{\kern -0.06em \otimes w}_{\kern-0.17em Z \kern -0.06em|\kern -0.1em X}}
\newcommand{\Pj}{P_{Y\widehat{Y}|X}}
\newcommand{\Sq}{\mathcal{S}_\mathsf{q}}
\newcommand{\A}{\mathcal{A}}
\newcommand{\D}{\mathcal{D}}
\newcommand{\Vh}{\widehat{V}}
\newcommand{\yh}{\hat{y}}
\newcommand{\yyh}{\hat{\y}}
\newcommand{\C}{\mathcal{C}}
\newcommand{\pwx}{P^{w}_{\underline{\mathbf{X}}}}
\newcommand{\pwy}{P^{w}_{\underline{\mathbf{Y}}}}
\newcommand{\pwz}{P^{w}_{\underline{\mathbf{Z}}}}
\newcommand{\ppmx}{P_{\X}^{n,l}}
\newcommand{\ppmz}{P_{\Z}^{n,l}}
\begin{document}
%
\title{Covert Communication With Mismatched Decoders}
%
%
%
\author{
        Qiaosheng Zhang,
        Vincent~Y.~F.~Tan,~\IEEEmembership{Senior~Member,~IEEE}
\thanks{Qiaosheng Zhang is with the Department
of Electrical and Computer Engineering, National University of Singapore (e-mail: elezqiao@nus.edu.sg).} 
\thanks{Vincent~Y.~F.~Tan is with the Department
of Electrical and Computer Engineering and Department of Mathematics, National University of Singapore (e-mail: vtan@nus.edu.sg).}
}

%
%

\markboth{}%
{Shell \MakeLowercase{\textit{et al.}}: Bare Demo of IEEEtran.cls for IEEE Journals}
%



\maketitle

\begin{abstract}
This paper considers the problem of covert communication with mismatched decoding, in which a sender wishes to reliably communicate with a receiver whose decoder is fixed and possibly sub-optimal, and simultaneously to ensure that the communication is covert with respect to a warden. We present single-letter lower and upper bounds on the information-theoretically optimal throughput as a function of the given decoding metric,  channel laws, and the desired level of covertness. These bounds match for a variety of scenarios of interest, such as (i) when the channel between the sender and receiver is a binary-input binary-output channel, and (ii) when the decoding metric is particularized to the so-called erasures-only metric. The lower bound is obtained based on a modified random coding union bound with pulse position modulation (PPM) codebooks, coupled with a non-standard expurgation argument. The proof of the upper bound relies on a non-trivial combination of analytical techniques for the problems of covert communication and mismatched decoding.

\end{abstract}

\flushbottom

\section{Introduction}

In contrast to classical information-theoretic security problems that are concerned with hiding the \emph{content} of information, the problem of \emph{covert communication}\footnote{Covert communication is also known as low probability of detection (LPD) communication in the literature.} instead aims to hide the \emph{fact} that communication is taking place. Covert communication has potential applications in a variety of important scenarios, such as military communications. For example, the communication between two submerged submarines should be covert when an enemy maritime patrol aircraft is present, otherwise serious consequences may 
occur.

Due to its potentially widespread applications, covert communication has attracted significant attention in recent years. The pioneering work by Bash \emph{et al.}~\cite{bash2013limits} first demonstrated a \emph{square-root law} for covert communication, stating that one can only covertly and reliably transmit $\Theta(\sqrt{n})$ bits of message over $n$ channel uses. Building upon~\cite{bash2013limits}, subsequent works further characterized  information-theoretic limits of covert communication for diverse channel models, as well as developed low-complexity covert communication schemes by exploiting a variety of coding techniques. We refer the readers to Subsection~\ref{sec:related} below for a detailed literature review.   

While the problem of covert communication has been extensively studied, most, if not all, of the prior works focused on the setting in which the encoder and decoder can be optimized according to the channel law. However, in some practical scenarios, one may not have accurate knowledge about the channel (e.g., submarines often only have an imperfect channel estimation in the ocean). In other scenarios, even if the channel is precisely known, one may still wish to implement a sub-optimal decoder due to computational complexity considerations. Motivated by these practical considerations, we consider the problem of covert communication with \emph{mismatched decoding}~\cite{csiszar1995channel,merhav1994information,csiszar1981graph,hui1983fundamental}, where the decoder is fixed {\em a priori} and possibly sub-optimal. It is assumed that the decoding rule  is governed by a \emph{given} decoding metric, and the only freedom for designers is to optimize the codebook and the encoder. Given the differences between the current problem and standard covert communication problem, it is then natural to ask (i) \emph{What is the highest rate at which message bits can be transmitted covertly and reliably with mismatched decoding (which is formally referred to as the \emph{covert mismatch capacity})}, and (ii) \emph{In which case using a mismatched decoder also lead to an optimal throughput?}  

To address the aforementioned questions, this work investigates the following covert communication setting. The sender occasionally communicates with the legitimate receiver through a binary-input\footnote{It is also possible to consider a more general setting with multiple non-zero input symbols (by following the lead of~\cite{wang2016fundamental}); however, for simplicity and ease of presentation, we focus on the binary-input setting in this work.} discrete memoryless channel (BDMC), and it is assumed that there is a warden who can eavesdrop their communication through another independent BDMC. The goals are twofold. One one hand, the receiver should be able to reliably reconstruct the message by using the given decoding metric. On the other hand, the covertness constraint requires the warden to be unable to determine whether or not communication is taking place. More specifically, we require that at the warden's side, the output distribution when communication takes place is almost indistinguishable from the output distribution when no communication takes place, where the discrepancy between the two output distributions is measured by the \emph{Kullback-Leibler (KL) divergence}. 

The main contributions of this work can be summarized as follows.
\begin{itemize}
    \item We first develop an achievability scheme and derive a lower bound  on the covert mismatch capacity (Theorem~\ref{thm:achievability}). Our scheme is based on pulse position modulation (PPM) codebooks and a careful expurgation argument. 
    
    \item We also provide a single-letter upper bound on the covert mismatch capacity (Theorem~\ref{thm:converse}), which improves on the trivial upper bound---the \emph{covert capacity}~\cite{bloch2016covert, wang2016fundamental}. 
    
    \item When the given decoding metric matches the channel law (which is referred to as the \emph{matched case}), our lower and upper bounds both equal the covert capacity. 
    
    \item When the channel between the sender and receiver is a binary-input  binary-output channel, the lower and upper bounds coincide and thus we have an exact characterization of the covert mismatch capacity (Theorem~\ref{thm:binary}). It is also worth noting that in this case, covert mismatch capacity exhibits a dichotomy---it equals either the covert capacity or zero.
    
    \item Finally, we apply our lower and upper bounds to the problem of \emph{covert communication with zero undetected error}, which is a special case of covert communication with mismatched decoding (by choosing an appropriate decoding metric). The zero undetected error problem is a classical  information theory problem, and requires that the decoder should never output an incorrect message, and that the probability of \emph{erasure} (i.e., detected error) should tend to zero. Perhaps surprisingly, our lower and upper bounds coincide for covert communication over BDMCs  (Theorem~\ref{thm:eo}), and thus we have an exact single-letter expression for the so-called \emph{covert erasures-only capacity}\footnote{The covert erasures-only capacity is defined as the highest rate at which the covertness could be guaranteed, the probability of undetected error exactly equals zero, and the probability of erasure could vanish.}. In contrast, to the best of our knowledge, there does not exist a \emph{computable} capacity expression for standard (non-covert) communication over BDMCs (while an incomputable expression was given in~\cite{csiszar1995channel}).
\end{itemize}
 

\subsection{Technical challenges and solutions}

Unlike the standard mismatched decoding problem, the requirement of covertness puts forth new challenges in designing achievability schemes and proving new coding theorems. We first discuss two challenges from the achievability’s perspective. 
\begin{itemize}
    
    \item First, our lower bound is established using a low-weight PPM codebook (rather than a more common constant composition codebook), due to the requirement of covertness. The PPM codebook can be viewed as a highly structured sub-class of constant composition codebooks, and its optimality for covert communication was first derived by Bloch and Guha~\cite{bloch2017optimal}. 
    While the PPM codebook is ideal from the perspective of covertness, the use of it also raises issues in the reliability part under mismatched decoding. For example, a standard technique to circumvent the non-independent issue of constant composition codebooks  is to approximate the probability of each constant composition codeword by the probability of its corresponding i.i.d. codeword (see~\cite[Chapter 2.6.5]{scarlett2020information} for an example). However, this technique does not apply here because the probability of each PPM codeword is significantly larger than its corresponding i.i.d. codeword. Another issue is that the fixed decoding rule prevents us from designing decoders that have been shown to be suitable for low-weight codebooks (e.g., the modified information-density decoder~\cite{bloch2016covert,tahmasbi2018first,zhang2021covert}). To overcome these issues, we exploit specific properties of the PPM codebook to obtain a \emph{chunk-wise independent structure}, which allows us to decompose the blocklength $n$ into $\Theta(\sqrt{n})$ disjoint chunks and then analyze each chunk independently (see Eqns.~\eqref{eq:xiao}-\eqref{eq:13} for details). 
    
    \item Second, in the context of covert communication, the two common reliability criteria---\emph{average} and \emph{maximum} probability of error---cannot be simply connected through the standard expurgation technique\footnote{The standard expurgation technique states that for any code of size $|\mathcal{M}|$ with a vanishing average probability of error, one can simply expurgate $|\mathcal{M}|/2$ codewords that have highest probability of error to obtain a new code that has a vanishing maximum probability of error.}. This is because although expurgating codewords is helpful from the perspective of reliability, it changes the output distribution induced by the code and thus the resultant code may no longer  satisfy the covertness constraint. In this work, we adopt the more stringent maximum probability of error as the reliability criterion. To ensure a vanishing maximum probability of error, we use a recently developed result by Tahmasbi and Bloch~\cite{tahmasbi2018first} to show the existence of a code such that  \emph{every} subset of codewords (with a fixed cardinality) satisfies the covertness constraint. This allows us to apply an expurgation argument to the set of ``bad'' codewords and to simultaneously ensure the resultant code is still covert (see Remark~\ref{remark:expurgation} for a detailed discussion).
\end{itemize}

The proof of the upper bound requires a  non-trivial combination of analytical techniques for the problems of covert communication and mismatched decoding. We first use an expurgation argument to show that for any code satisfying the covertness constraint, there must exist a low-weight constant composition subset of codewords whose size is almost as large as the original code. Our next step, which analyzes the probability of error of the resultant subset, is inspired~\cite{kangarshahi2021single}, in which the authors established, for the first time, a single-letter upper bound for the mismatched decoding problem. The main idea is to translate the mismatched-decoding error of the original channel to the maximum-likelihood decoding error of an auxiliary channel. However, their key result~\cite[Theorem 5]{kangarshahi2021single} is not readily applicable to our setting due to the stringent input constraint imposed by the covertness requirement. We circumvent this difficulty by proving a strengthened lemma  (Lemma~\ref{lemma:key}) that allows us to lower bound the probability of error for any \emph{low-weight} constant composition code. A detailed comparison between our strengthened lemma and the original result~\cite[Theorem 5]{kangarshahi2021single} is presented in Section~\ref{sec:converse}, right after Lemma~\ref{lemma:key}. 




\subsection{Related works} \label{sec:related}


The theoretical underpinnings of covert communication have been extensively studied following the pioneering work by Bash \emph{et al.}~\cite{bash2013limits}. Researchers have progressively established information-theoretic limits of covert communication for a variety of channel and network models, including discrete memoryless channels~\cite{CheBJ:13, che_reliable_2014-2, che2014reliable, tahmasbi2018first,tahmasbi2017error, bloch2016covert, wang2016fundamental}, Gaussian channels~\cite{bloch2016covert, wang2016fundamental, yan2019gaussian}, multiuser channels~\cite{arumugam2019covert,arumugam2019embedding, tan2018time, kibloff2019embedding,cho2020treating}, channel with states~\cite{lee2018covert, zivarifard2019keyless}, channel with jammers~\cite{sobers2017covert, song2020stealthy, zhang2021covertadv,zivarifard2021covert},  Rayleigh-fading channels~\cite{tahmasbi2019covert2,zheng2020covert,hu2019covert}, continuous-time channels~\cite{wang2019gaussian, wang2021covert2, zhang2019undetectable}, quantum channels~\cite{bash2015quantum, wang2016optimal,gagatsos2020covert}, MIMO channels~\cite{wang2021covert}, adhoc networks~\cite{cho2020throughput,im2020mobility}, etc. In addition to characterizing the information-theoretic limits, researchers have also studied covert communication from a coding perspective, and have investigated various coding techniques such as concatenated codes~\cite{zhang2019covert}, PPM~\cite{bloch2017optimal}, multilevel coding with PPM~\cite{kadampot2020multilevel,kadampot2019codes,wang2021explicit}, and polar codes~\cite{freche2018polar}. In recent years, the concept of covertness has also been incorporated into other research fields, ranging from information theory to information security and wireless communications. Specific topics include secret key agreement~\cite{tahmasbi2019covert}, source coding~\cite{chou2021universal}, identification-via-channels~\cite{zhang2021covert}, authentication~\cite{chang2021covert},  unmanned aerial vehicle~\cite{zhou2021uav,yan2021optimal,xiaobo2021three}, etc.

The mismatched decoding problem~\cite{csiszar1995channel,merhav1994information,csiszar1981graph,hui1983fundamental} is a classical fiendishly hard problem in information theory, and the objective is to understand the highest communication rate when the decoding rule is fixed and possibly
sub-optimal. It is closely related to other long-standing problems such as the zero-error capacity. While multiple lower and upper bounds have been developed, the capacity of the mismatched decoding problem still remains open. The most notable single-letter lower bound is the so-called \emph{LM rate}, which was first derived by Csiszár and Körner~\cite{csiszar1981graph} and Hui~\cite{hui1983fundamental} based on constant composition codes. Csiszár and Narayan~\cite{csiszar1995channel} later showed that the multi-letter version of the LM rate is in general better than its single-letter 
counterpart. Another line of works studied mismatched decoding for multiuser settings~\cite{lapidoth1996mismatched,somekh2014achievable,scarlett2016multiuser}; in particular, Lapidoth  showed that the LM rate can be improved by treating the point-to-point channel as a multiple access channel and using a multiuser scheme~\cite{lapidoth1996mismatched}. While there have been extensive studies on lower bounds, until recently, much less has been understood about upper bounds. Kangarshahi and Guillén i Fàbregas~\cite{kangarshahi2021single} recently provided a single-letter upper bound on the mismatched capacity, and Somekh-Baruch presented both single-letter and multi-letter upper bounds in a series of works~\cite{somekh2018converse, somekh2021single,somekh2021robust}. These recently developed upper bounds significantly promote the understanding of mismatched decoding. We refer the readers to~\cite{scarlett2020information} for a comprehensive survey of the mismatched decoding problem.    

\subsection{Outline}
The rest of this paper is organized as follows. We provide some notational conventions and preliminaries in Section~\ref{sec:notation}, and formally introduce the problem of covert communication with mismatched decoding in Section~\ref{sec:model}. In Section~\ref{sec:result}, we present lower and upper bounds on the covert mismatch capacity, and further provide analytical and numerical evaluations of these bounds for a variety of scenarios of interest. Finally, Sections~\ref{sec:achievability} and~\ref{sec:converse} respectively provide the detailed proofs of the lower and upper bounds.

\section{Preliminaries} \label{sec:notation}
\subsection{Notation}
Random variables and their realizations are respectively denoted by uppercase and lowercase letters, e.g., $X$ and $x$. Sets are denoted by calligraphic letters, e.g., $\mathcal{X}$. Vectors are denoted by boldface letters, e.g., $\X$ or $\x$, where the length of each vector will be clear from the context. We use $X_i$ or $x_i$ to denote the $i$-th element of the vector $\X$ or $\x$, and $X_a^b$ or $x_a^b$ to denote the subsequence $(X_a, X_{a+1}, \ldots, X_b)$ or $(x_a, x_{a+1}, \ldots, x_b)$. Let $\text{wt}_{\mathrm{H}}(\x)$ be the {\em Hamming weight}, or number of non-zero elements, of the vector  $\x$.

In our calculations, logarithms $\log$ and exponentials $\exp$ are to the natural base  $e$. For any real number $c \in \mathbb{R}$, we define $[c]^+ \triangleq \max\{0,c\}$. For any probability distribution $P$ over the finite set $\mathcal{X}$, we denote its $n$-letter product distribution by $P^{\otimes n}$, and the largest probability and smallest non-zero probability respectively by 
\begin{align}
[P]^{\max} \triangleq \max_{x \in \mathcal{X}}P(x) \quad \mathrm{and} \quad [P]^{\min} \triangleq \min_{x \in \mathcal{X}:P(x)>0}P(x).    
\end{align}
For any two probability distributions $P$ and $Q$ over the same finite set $\mathcal{X}$, their KL divergence and $\chi_2$-distance are respectively given by $\DD(P \Vert Q) \triangleq \sum_{x \in \mathcal{X}} P(x) \log \frac{P(x)}{Q(x)}$ and $\chi_2(P \Vert Q)  \triangleq \sum_{x \in \mathcal{X}} \frac{(P(x)-Q(x))^2}{Q(x)}$.
We say $P$ is \emph{absolutely continuous} with respect to $Q$ (denoted by $P \ll Q$) if the support of $P$ is a subset of the support of $Q$ (i.e., for all $x \in \mathcal{X}$ such that $Q(x)=0$, $P(x)=0$).

\subsection{Preliminaries on the method of types}
Given a length-$n$ vector $\x \in \mathcal{X}^n$, we define its \emph{type} (or \emph{empirical distribution}) as $T_{\x}(x) \triangleq \frac{1}{n} \sum_{j=1}^n \mathbbm{1} \{x_i = x\}$. The \emph{type class} corresponding to a specific type $P \in \mathcal{P}_n(\mathcal{X})$ is denoted by $\T_P \triangleq \{\x \in \mathcal{X}^n: T_{\x} = P \}$.  Given two sequences $\x \in \mathcal{X}^n$ and $\y \in \mathcal{Y}^n$, we define their \emph{joint type} as $T_{\x,\y}(x,y) \triangleq \frac{1}{n} \sum_{j=1}^n \mathbbm{1} \{x_j = x, y_j = y\}$, and the \emph{conditional type} of $\y$ given $\x$ as 
\begin{align}
T_{\y|\x}(y|x) \triangleq \begin{cases}
\frac{T_{\x,\y}(x,y)}{T_{\x}(x)}, &\mathrm{if} \ T_{\x}(x) > 0, \\
\frac{1}{|\mathcal{Y}|}, &\mathrm{otherwise}.
\end{cases}
\end{align}
For a given $\x \in \T_P$ and a conditional distribution $V \in \mathcal{P}(\mathcal{Y}|\mathcal{X})$, the set of $\y \in \mathcal{Y}^n$ such that $(\x,\y)$ has joint type $P \times V$ is denoted by $\T_{V}(\x) \triangleq \{\y \in \mathcal{Y}^n: T_{\x,\y} = P \times V\}$. Let $\mathcal{V}_{P}(\mathcal{Y}|\mathcal{X})$ be the set of all $V \in \mathcal{P}(\mathcal{Y}|\mathcal{X})$ for which the conditional type class of a sequence of type $P$ is non-empty.

\section{Problem setting}
\label{sec:model}


\subsection{Model}
The sender may occasionally communicate with the receiver through a binary-input discrete memoryless channel (BDMC) $(\mathcal{X}, \wyx,\mathcal{Y})$, where the input alphabet  $\mathcal{X} = \{0,1\}$ with `$0$' being the \emph{innocent symbol}, and the output alphabet $\mathcal{Y}$ is assumed to be finite. The \emph{transmission status} of the sender is denoted by a binary-valued variable $\Lambda \in \{0,1\}$: 
\begin{itemize}
    \item When $\Lambda = 1$, the sender sends a  message $M$ (which is uniformly chosen from the message set $\mathcal{M}$) to the receiver;
    \item  When $\Lambda=0$, the sender always sends the innocent symbol `$0$' to the channel $\wyx$.
\end{itemize}
The communication between the sender and receiver is possibly assisted by a shared key $K$, which is uniformly distributed over the key set $\mathcal{K}$. There is also a warden who can eavesdrop the communication through another BDMC $(\mathcal{X}, \wzx,\mathcal{Z})$, where $\mathcal{Z}$ is a finite alphabet. The warden does not know the shared key. For notational convenience, we further define
\begin{align}
&P_0 \triangleq W_{Y|X=0}, \quad P_1 \triangleq W_{Y|X=1}, \\
&Q_0 \triangleq W_{Z|X=0}, \quad Q_1 \triangleq W_{Z|X=1}.
\end{align}
Following the convention in the covert communication literature, we make three assumptions on the channels $\wzx$ and $\wyx$: (A1) $Q_0 \ne Q_1$, (A2) $Q_1 \ll Q_0$ (i.e., $Q_1$ is absolutely continuous with respect to $Q_0$), and (A3) $P_1 \ll P_0$. The first two assumptions (A1) and (A2) respectively preclude the scenarios in which covertness is always guaranteed or would never be guaranteed. Without assumption (A3), the sender and receiver are able to communicate $\Theta(\sqrt{n}\log n)$ bits reliably and covertly, breaking the square-root law~\cite[Theorem 7]{bloch2016covert}.

\subsection{Code and Mismatched decoder}
A \emph{code} $\C$ of blocklength $n$  consists of a message set $\mathcal{M}$, a shared key set $\mathcal{K}$, a collection of length-$n$ codewords $\{\x(m,k)\}_{m \in \mathcal{M}, k \in \mathcal{K}}$ (called the \emph{codebook}), and a encoder $f: \mathcal{M} \times \mathcal{K} \to \mathcal{X}^n$ that maps the message-key pair $(m,k)$ to $\x(m,k)$. 
The channel laws corresponding to $n$ channel uses are denoted by $\wyxn (\y|\x) \triangleq \prod_{i=1}^n \wyx(y_i|x_i)$ and $\wzxn (\z|\x)\triangleq \prod_{i=1}^n \wzx(z_i|x_i)$.
Upon receiving $\y \in \mathcal{Y}^n$ and based on the knowledge of the shared key $k \in \mathcal{K}$, the decoder outputs the message $\widehat{M}$ such that
\begin{align}
    \widehat{M} = \argmax_{m \in \mathcal{M}} \mathsf{q}^n(\x(m,k),\y), \quad \mbox{where } \mathsf{q}^n(\x(m,k),\y) = \prod_{i=1}^n \mathsf{q}(x_i(m,k), y_i),
\end{align}
and $\mathsf{q}: \mathcal{X} \times \mathcal{Y} \to (0,\infty)$ is the given decoding metric. When there is a tie, the decoder simply declares an error.

\begin{remark}{\em
For simplicity, we consider the setting in which $\mathsf{q}(x,y)$ takes on values on the {\em positive} real line, while some other works may allow $\mathsf{q}(x,y)$ to only be non-negative. This helps us to avoid some complicated special cases that arise in covert communication. For example, if there exists a $y \in \mathcal{Y}$ such that $P_1(y)>0$, $\mathsf{q}(1,y) > 0$ and $\mathsf{q}(0,y) = 0$, the sender and receiver can adopt the scheme described in~\cite[Appendix G]{bloch2016covert} to communicate $\Theta(\sqrt{n}\log n)$ bits of message over $n$ channel uses---this breaks the square-root law for covert communication, and the coding rate (measured according to Definition~\ref{def:rate}) would become infinity.  }
\end{remark}

\subsection{Reliability and covertness  criteria} 
In this work, we adopt the more stringent \emph{maximum probability of error} to measure the reliability of communication.
\begin{definition}[Probabilities of error]
{\em
When transmitting $\x(m,k)$, a decoding error occurs if there exists another codeword $\x(m',k)$ having a higher value $\mathsf{q}^n(\x(m',k),\y)$. The corresponding probability of error $P_{\mathrm{err}}(m,k)$ is
\begin{align}
P_{\mathrm{err}}(m,k) &= \PP_{\wyxn}\Big( \exists m'\ne m: \mathsf{q}^n(\x(m',k),\Y) \ge \mathsf{q}^n(\x(m,k),\Y) \Big).   
\end{align}
The \emph{maximal probability of error} of the code $\C$, which is maximized over all the message-key pairs $(m,k) \in \mathcal{M} \times \mathcal{K}$, is defined as $P_{\mathrm{err}}^{\max}(\C) \triangleq \max_{m \in \mathcal{M},k \in \mathcal{K}} P_{\mathrm{err}}(m,k)$.
}
\end{definition}

As is common in the covert communication literature, we measure the covertness  with respect to the warden via a KL divergence metric. To be specific, when the transmission status $\Lambda = 1$ (i.e., communication is taking place), 
the output distribution at the warden's side is denoted by 
\begin{align}
\widehat{Q}_{\C}^n(\z) \triangleq \frac{1}{|\mathcal{M}||\mathcal{K}|} \sum_{m \in \mathcal{M}} \sum_{k \in \mathcal{K}} \wzxn(\z| \x(m,k)), \quad \forall \z \in \mathcal{Z}^n.
\end{align}
Note that $\widehat{Q}_{\C}^n$ is the distribution induced by the code $\C$ and channel $\wzx$. When the transmission status $\Lambda =0$, the output distribution at the warden's side is $Q_0^{\otimes n}$, since the channel input is always the innocent symbol `$0$'.  
As shown in Definition~\ref{def:covert} below, we require the KL divergence between $\widehat{Q}_{\C}^n$  and $Q_0^{\otimes n}$  to be bounded from above by a \emph{covertness parameter} $\delta > 0$.

\begin{definition}[Covertness]\label{def:covert}
{\em
The code is said to be $\delta$-covert if its output distribution $\widehat{Q}_{\C}^n$ satisfies $\DD\left(\widehat{Q}_{\C}^n \Vert Q_0^{\otimes n} \right) \le \delta.$
}
\end{definition}

\begin{remark}
{\em In addition to the KL divergence metric, another widely used covertness metric is the \emph{variational distance} metric~\cite{bash2013limits,CheBJ:13, che_reliable_2014-2, che2014reliable, tahmasbi2018first}. Both of the two metrics impose stringent constraints on the warden's estimator, and require that the best estimator should not outperform random guessing by too much. Besides, other metrics, such as \emph{the probability of missed detection for fixed probability of false alarm}~\cite{tahmasbi2018first}, have also been studied in the literature. Although this work focuses on the KL divergence metric, our results can also be extended to other metrics.
}
\end{remark}

\subsection{Covert mismatch capacity} Prior works on covert communication have already shown that, due to the stringent covertness constraint, one can only transmit $\Theta(\sqrt{n})$ bits reliably and covertly over $n$ channel uses (i.e., the square-root law). Thus, it is natural to define the rate to be the logarithm of the message size  $|\mathcal{M}|$ normalized by $\sqrt{n}$ (rather than $n$ in non-covert communication). Below, we present the definitions of the \emph{achievable rate pair} and \emph{covert mismatch capacity}.

\begin{definition} \label{def:rate}
{\em
For the given decoding metric $\mathsf{q}$ and covertness parameter $\delta > 0$, a rate pair $(R,R_K)$ is said to be $(\mathsf{q},\delta)$-\emph{achievable} if there exists a sequence of codes with increasing blocklength $n$ such that 
\begin{align}
    &\liminf_{n \to \infty} \frac{\log|\mathcal{M}| }{\sqrt{n}} \ge R, \qquad \ \  \limsup_{n \to \infty} \frac{\log|\mathcal{K}| }{\sqrt{n}} \le R_K, \\
    &\lim_{n \to \infty} \DD\left(\widehat{Q}_{\C}^n \Vert Q_0^{\otimes n} \right) \le \delta,  \ \ \  \lim_{n \to \infty} P_{\mathrm{err}}^{\max}(\C) = 0.
\end{align}
The \emph{covert mismatch capacity} $\mathsf{C}_{\mathsf{q},\delta}$  is defined as  the supremum of $R$ over all $(\mathsf{q},\delta)$-achievable rate pairs.
}
\end{definition}

\vspace{10pt}
\section{Main results} \label{sec:result}
This section comprises the main results of this work. In Subsections~\ref{sec:lower} and~\ref{sec:upper}, we respectively present a lower bound and  an upper bound on the covert mismatch capacity $\mathsf{C}_{\mathsf{q},\delta}$. While these bounds do not match in general, Subsection~\ref{sec:binary} shows that they do match when the channel $\wyx$ is a binary-input binary-output channel, and thus we have an exact characterization of the covert mismatch capacity. In Subsection~\ref{sec:ternary}, we provide numerical evaluations of our bounds when the channel $\wyx$ is a binary-input ternary-output channel. In Subsection~\ref{sec:zero}, we introduce the problem of covert communication with zero undetected error, in which the capacity can be characterized based on our  bounds derived for mismatched decoding. 

Before introducing the results, we first define the \emph{weight parameter} $t_\delta$ as
\begin{align}
t_\delta \triangleq \sqrt{\frac{2\delta}{\chi_2 (Q_1 \Vert Q_0)}},
\end{align}
whose operational meaning is that the average Hamming weight of codewords in any $\delta$-covert code should not exceed $t_{\delta}\sqrt{n}(1+o(1))$. Note that the weight parameter $t_{\delta}$ is an increasing function of the covertness parameter $\delta$, and a decreasing function of $\chi_2 (Q_1 \Vert Q_0)$. This makes intuitive sense because the average Hamming weight is allowed to be larger if the covertness constraint is less stringent, or the warden's observed output distributions  $Q_0$ and $Q_1$ become closer to each other.

\subsection{Lower bound} \label{sec:lower}
Theorem~\ref{thm:achievability}  below presents a  single-letter lower bound on the covert mismatch capacity $\mathsf{C}_{\mathsf{q},\delta}$.

\begin{theorem}[Lower bound] \label{thm:achievability}
For the given decoding metric $\mathsf{q}$, any covertness parameter $\delta > 0$, and any pair of BDMCs $(\wyx,\wzx)$, we define \begin{align}
    \underline{\mathsf{R}}_{\mathsf{q},\delta} \triangleq t_\delta \cdot \left(\sup_{s \ge 0} \   \mathbb{E}_{P_1}\!\!\left[\log\frac{\mathsf{q}(1,Y)^s}{\mathsf{q}(0,Y)^s}\right] - \log \mathbb{E}_{P_0}\!\!\left[\frac{\mathsf{q}(1,Y)^s}{\mathsf{q}(0,Y)^s} \right] \right). \label{eq:lower}
\end{align}
The rate pair $(\underline{\mathsf{R}}_{\mathsf{q},\delta},R_K)$ is achievable for any $R_K \ge \max\left\{0,\ t_\delta \cdot \DD\left(Q_1 \Vert Q_0 \right) - \underline{\mathsf{R}}_{\mathsf{q},\delta}\right\}.$
Thus, the covert mismatch capacity $\mathsf{C}_{\mathsf{q},\delta} \ge \underline{\mathsf{R}}_{\mathsf{q},\delta}$.
\end{theorem}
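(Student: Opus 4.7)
The plan is to prove the bound by a random coding argument over a pulse position modulation (PPM) ensemble, combined with a carefully controlled expurgation step that preserves covertness. Partition the $n$ channel uses into $\ell=\lfloor t_\delta\sqrt{n}\rfloor$ chunks of size $w=n/\ell\sim\sqrt{n}/t_\delta$, and draw each codeword $\x(m,k)$ by placing, independently in each chunk, a single ``$1$'' at a uniformly random position. Every codeword then has Hamming weight exactly $\ell\sim t_\delta\sqrt{n}$, which is the weight scale dictated by covertness, and the $\ell$ chunks of any codeword are mutually independent. This chunk-wise independence is the structural property that allows us to avoid the usual difficulties of analysing generic constant-composition codebooks (for which the joint probability of a codeword is not a product).

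A first useful simplification is that, since every PPM codeword has identical ``$0$''-part outside its $\ell$ chosen positions, the product $\prod_i \mathsf{q}(0,y_i)^s$ cancels when two PPM codewords are compared under the product metric $\mathsf{q}^n$, so the mismatched decoder effectively ranks codewords by their cumulative log-ratio $\sum_{j=1}^\ell r(y_{i(j)}^{(j)})$, where $r(y)\triangleq\log(\mathsf{q}(1,y)/\mathsf{q}(0,y))$ and $i(j)$ is the ``$1$''-position in chunk $j$. For the transmitted codeword the relevant outputs are i.i.d.\ $P_1$, so by the law of large numbers its $r$-score concentrates around $\ell\,\mathbb{E}_{P_1}[r(Y)]$. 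For an independently drawn competing codeword, its chunk-positions coincide with the true ones in only $O(1)$ chunks on average (since $\ell/w=t_\delta^2$), so the competitor's $r$-score is, up to a negligible correction, a sum of $\ell$ i.i.d.\ samples of $r(Y)$ under $Y\sim P_0$.

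The probability that a single competitor overtakes the true codeword therefore reduces to the large-deviation estimate $\Pr_{P_0^{\otimes\ell}}[\sum_{j=1}^\ell r(Y_j)\ge \ell\,\mathbb{E}_{P_1}[r]]$, which by the Cram\'er--Chernoff bound is at most $\exp\!\bigl(-\ell\sup_{s\ge 0}\{s\,\mathbb{E}_{P_1}[r]-\log\mathbb{E}_{P_0}[e^{sr}]\}\bigr)$, i.e., the exponent which, multiplied by $\ell=t_\delta\sqrt{n}$, yields exactly $\underline{\mathsf{R}}_{\mathsf{q},\delta}\sqrt{n}$. A union bound over the $|\mathcal{M}||\mathcal{K}|-1$ competitors then drives the ensemble-average error to zero whenever $\log(|\mathcal{M}||\mathcal{K}|)/\sqrt{n}$ stays strictly below $\underline{\mathsf{R}}_{\mathsf{q},\delta}$. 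Covertness and the upgrade from average to maximum error are handled jointly by invoking the Tahmasbi--Bloch covert random-coding concentration result~\cite{tahmasbi2018first}, in a form strong enough to guarantee that, with high probability over the random PPM ensemble, \emph{every} sub-ensemble of a prescribed size induces an output distribution within $\delta+o(1)$ of $Q_0^{\otimes n}$ in KL divergence. Fix such a realisation, expurgate the worst half of codewords in $P_{\mathrm{err}}(m,k)$, and split the surviving rate between the message and the key; the key-rate condition $R_K\ge t_\delta\DD(Q_1\Vert Q_0)-\underline{\mathsf{R}}_{\mathsf{q},\delta}$ arises because covertness forces $|\mathcal{M}||\mathcal{K}|\ge\exp\!\bigl(t_\delta\sqrt{n}\,\DD(Q_1\Vert Q_0)\bigr)(1+o(1))$ in order that the induced distribution at the warden resemble $Q_0^{\otimes n}$.

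The hardest step, and the one I expect to absorb most of the technical effort, is the interaction between reliability and covertness at the expurgation stage. The standard ``throw out the worst half'' trick perturbs the induced distribution $\widehat{Q}_\C^n$, which in a covert setting is potentially catastrophic; the PPM ensemble exacerbates the issue because per-codeword probabilities are much larger than under an i.i.d.\ ensemble, so the usual approximation-by-i.i.d.\ shortcut (as in~\cite[Chapter 2.6.5]{scarlett2020information}) is unavailable, and the fixed mismatched decoder forbids the modified information-density trick of~\cite{bloch2016covert,tahmasbi2018first}. The ``every subset is covert'' strengthening of Tahmasbi--Bloch is the tool that lets us expurgate without losing covertness, but it must be combined with a careful calibration of $w$ so that the Cram\'er tail controlling reliability and the concentration tail controlling covertness are simultaneously deep in their small-probability regimes. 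Peripheral issues---handling the $O(1)$ coincident chunk positions between competitor and true codeword, controlling the LLN fluctuation of the true $r$-score, and the tie-breaking convention for the mismatched decoder---are routine once this main balance is set up.
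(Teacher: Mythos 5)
Your high-level plan is essentially the paper's: a PPM ensemble, exploitation of chunk-wise independence, a Chernoff/RCU-type argument yielding the exponent $\underline{\mathsf{R}}_{\mathsf{q},\delta}$, the Tahmasbi--Bloch ``every subset is covert'' concentration, and a covertness-preserving expurgation. However, there are two concrete errors in the execution.

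First, the reliability union bound should range over the $|\mathcal{M}|-1$ competitors within the sub-code $\C_k$, not over $|\mathcal{M}||\mathcal{K}|-1$. The decoder knows the shared key $k$, so only codewords $\{\x(m',k)\}_{m'\ne m}$ compete. With your formulation, reliability would require $\log(|\mathcal{M}||\mathcal{K}|)/\sqrt{n} < \underline{\mathsf{R}}_{\mathsf{q},\delta}$ while covertness forces $\log(|\mathcal{M}||\mathcal{K}|)/\sqrt{n} \ge t_\delta\DD(Q_1\Vert Q_0)$; when $t_\delta\DD(Q_1\Vert Q_0) > \underline{\mathsf{R}}_{\mathsf{q},\delta}$ these are contradictory, so your scheme as written cannot simultaneously satisfy covertness and reliability. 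The whole point of the key is to pad the codebook beyond the reliability budget for resolvability purposes while the decoder only ever searches over $|\mathcal{M}|$ codewords.

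Second, ``expurgate the worst half'' cannot be combined with the subset-resolvability lemma. The Tahmasbi--Bloch argument (Lemma~\ref{lemma:subset}) union-bounds over the $\binom{|\mathcal{M}||\mathcal{K}|}{\lambda_1|\mathcal{M}||\mathcal{K}|}\approx\exp\{|\mathcal{M}||\mathcal{K}|\,H_b(\lambda_1)\}$ subsets, and gives a positive success probability only when $\frac{2\lambda_1\lambda_2^2}{\log^2(\cdot)} > H_b(\lambda_1)$. The left-hand side is $\Theta(\lambda_2^2/n^2)$, which is tiny, so this fails badly for $\lambda_1 = 1/2$ (where $H_b(\lambda_1)=\log 2$). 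The paper therefore expurgates only an $n^{-11}$ fraction of each sub-code, for which $H_b(1-n^{-11}) = \Theta(n^{-11}\log n)$ is small enough; this in turn is only viable because the ensemble-average error per sub-code is driven down to the super-polynomially small $2n\exp(-n^{1/4})$, so that Markov still yields a vanishing maximum error after removing only an $n^{-11}$ fraction. The ``throw out the worst half'' shortcut is exactly the step the paper flags as unavailable in the covert setting (footnote~5 and Remark~\ref{remark:expurgation}). A more minor calibration point: taking $\ell=\lfloor t_\delta\sqrt{n}\rfloor$ leaves no slack to absorb the $\DD(\widehat{Q}^n_{\widetilde{\C}}\Vert\ppmz)$ contribution on top of $\DD(\ppmz\Vert Q_0^{\otimes n})$; the paper uses the slightly smaller $l=\big\lfloor\sqrt{(2\delta-n^{-1/3})n/\chi_2(Q_1\Vert Q_0)}\big\rfloor$ precisely to leave an $n^{-1/3}$ margin.
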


The proof of Theorem~\ref{thm:achievability} is presented in Section~\ref{sec:achievability}. Some remarks on Theorem~\ref{thm:achievability} are in order.
\begin{enumerate}
\item When particularizing the decoding metric $\mathsf{q}$ to the channel law $\wyx$ (i.e., $\mathsf{q}(0,0) = P_0(0)$, $\mathsf{q}(0,1) = P_0(1)$, $\mathsf{q}(1,0) = P_1(0)$, and $\mathsf{q}(1,1) = P_1(1)$), it can be shown that $s = 1$ maximizes the right-hand side (RHS) of~\eqref{eq:lower} and thus 
\begin{align}
\underline{\mathsf{R}}_{\mathsf{q},\delta} = t_{\delta} \cdot \mathbb{E}_{P_1}\big[\log (P_1(Y)/P_0(Y))\big] = t_{\delta}\cdot \DD(P_1 \Vert P_0).
\end{align}
Recall that the \emph{covert capacity} for covert communication~\cite{bloch2016covert, wang2016fundamental} is also $\mathsf{C}_{\delta} = t_{\delta} \cdot \DD(P_1 \Vert P_0)$. This means that our lower bound is optimal under the matched case (i.e., the case when the decoding metric matches the channel law). 

\item While the form of $\underline{\mathsf{R}}_{\mathsf{q},\delta}$ in~\eqref{eq:lower} looks like the \emph{GMI rate}~\cite{kaplan1993information,ganti2000mismatched} (compared to the LM rate~\cite{csiszar1981graph, hui1983fundamental, csiszar1995channel, merhav1994information}) in the mismatched decoding literature\footnote{We refer the  readers to~\cite[Section 2.3]{scarlett2020information} for a comprehensive survey of the GMI and LM rates in the mismatched decoding problem. Roughly speaking, the GMI rate is derived using an i.i.d. codebook, while the LM rate is derived using a constant composition codebook. The LM rate is in general larger than the GMI rate.}, $\underline{\mathsf{R}}_{\mathsf{q},\delta}$ is actually more similar to the LM rate since its derivation relies on a special sub-class of constant composition codebooks---the PPM codebook. As we shall see in Subsection~\ref{sec:binary}, when the output alphabet $\mathcal{Y}$ is binary, $\underline{\mathsf{R}}_{\mathsf{q},\delta}$ possesses the same property as the LM rate, i.e., it equals either the covert capacity or zero.
    
    \item The choice of $R_K$ in Theorem~\ref{thm:achievability} ensures that the size of the whole codebook is at least $\exp\{t_{\delta}\DD(Q_1 \Vert Q_0 )\sqrt{n} \}$, which, from a channel resolvability perspective, is the necessary condition for driving $\DD(\widehat{Q}_{\C}^n \Vert Q_0^{\otimes n})$ to be less than $\delta$.   
\end{enumerate}

\subsection{Upper bound} \label{sec:upper}
In the following, we first introduce the definition of \emph{maximal joint conditional distribution} (adapted from~\cite[Definition 1]{kangarshahi2021single}), and then present a  single-letter upper bound on the covert mismatch capacity $\mathsf{C}_{\mathsf{q},\delta}$.  

\begin{definition}[Maximal joint conditional distribution] \label{def:maximal}
{\em
For each pair $(y, \yh)\in \mathcal{Y}\times \mathcal{Y}$, we define
\begin{align}
        \Sq(y,\yh) \triangleq \left\{x \in \mathcal{X} \Big| x = \argmax_{x' \in \mathcal{X}} \frac{\mathsf{q}(x',\yh)}{\mathsf{q}(x',y)} \right\}.
    \end{align}
A joint conditional distribution $\Pj \in \mathcal{P}(\mathcal{Y}\times \mathcal{Y}|\mathcal{X})$ is called a \emph{maximal joint conditional distribution} if $\Pj(y,\yh|x) = 0$ for all $(y, \yh) \in \mathcal{Y} \times \mathcal{Y}$ and $x \notin \Sq(y,\yh)$.  
The set of maximal joint conditional distribution  is defined as
\begin{align}
\mathcal{M}_{\max}(\mathsf{q}) \triangleq \left\{
\Pj \in \mathcal{P}(\mathcal{Y}\times \mathcal{Y}|\mathcal{X}): \Pj \mbox{ is maximal}\right\}.    
\end{align}
}
\end{definition}

\begin{theorem}[Upper bound] \label{thm:converse}
For the given decoding metric $\mathsf{q}$, any covertness parameter $\delta > 0$, and any pair of BDMCs $(\wyx,\wzx)$, we define \begin{align}
\bar{\mathsf{R}}_{\mathsf{q},\delta}  \triangleq t_{\delta} \cdot \left[ \min_{\Pj \in \mathcal{M}_{\max}(\mathsf{q}): P_{Y|X}=\wyx}  \DD\left(P_{\widehat{Y}|X=1} \Vert P_{\widehat{Y}|X=0}\right) \right],   \label{eq:minimizer} 
\end{align}
and  the covert mismatch capacity $\mathsf{C}_{\mathsf{q},\delta} \le \bar{\mathsf{R}}_{\mathsf{q},\delta}$.
\end{theorem}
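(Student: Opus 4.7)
The plan is to combine the expurgation techniques familiar from covert communication with the mismatched-decoding converse of Kangarshahi and Guill\'en i F\`abregas~\cite{kangarshahi2021single}, restricted to the low-weight regime imposed by covertness. The intuition is that after expurgating to a constant-composition subcode of weight $w \approx t_\delta\sqrt{n}$, the mismatched decoding problem on $\wyx$ reduces to an ML decoding problem on an auxiliary channel $P_{\widehat{Y}|X}$, whose $\sqrt{n}$-normalized rate is then bounded by $t_\delta\,\DD(P_{\widehat{Y}|X=1}\Vert P_{\widehat{Y}|X=0})$, exactly as in the standard covert-capacity converse. Minimizing over the choice of auxiliary channel, and noting that the code designer has no control over it, yields $\bar{\mathsf{R}}_{\mathsf{q},\delta}$.

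I would first fix any code $\C$ satisfying $\lim_{n\to\infty} P_{\mathrm{err}}^{\max}(\C)=0$ and $\DD(\widehat{Q}_{\C}^n\Vert Q_0^{\otimes n})\le\delta$. A standard $\chi_2$-expansion of the covertness constraint forces the \emph{average} Hamming weight of the codewords to be at most $t_\delta\sqrt{n}(1+o(1))$. Pigeonholing over the polynomially many low-weight types then extracts a constant-composition subcode $\C'$ whose codewords all share a common Hamming weight $w=t_\delta\sqrt{n}(1+o(1))$ and whose cardinality satisfies $\log|\C'|\ge\log|\C|-o(\sqrt{n})$, so that the $\sqrt{n}$-normalized rate is preserved. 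Next, for any $\Pj\in\mathcal{M}_{\max}(\mathsf{q})$ with $P_{Y|X}=\wyx$, let $P_{\widehat{Y}|X}$ denote the other conditional marginal. The maximality property $\Pj(y,\yh|x)>0\Rightarrow x\in\Sq(y,\yh)$ implies that the true input $x$ maximizes the ratio $\mathsf{q}(x',\yh)/\mathsf{q}(x',y)$ among all $x'\in\mathcal{X}$; a coupling argument in the vein of~\cite{kangarshahi2021single} then shows that reliable mismatched decoding of $\C'$ on $\wyx$ forces reliable ML decoding of $\C'$ on the auxiliary channel $P_{\widehat{Y}|X}$. It thus suffices to upper bound $\log|\C'|/\sqrt{n}$ under this latter constraint.

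The main obstacle, and the purpose of the strengthened Lemma~\ref{lemma:key}, is precisely this last step. The generic \cite[Theorem~5]{kangarshahi2021single} bounds $\log|\C'|$ by the \emph{ordinary} capacity of $P_{\widehat{Y}|X}$, which yields only a trivial $\Theta(n)$-type bound once normalized by $\sqrt{n}$ and disregards the covertness-induced constant composition. I would instead run a Fano-plus-data-processing argument directly on the single weight-$w$ type, using the Taylor expansion $I(X;\widehat{Y}) = (w/n)\,\DD(P_{\widehat{Y}|X=1}\Vert P_{\widehat{Y}|X=0})(1+o(1))$ valid when $X\sim\mathrm{Bernoulli}(w/n)$ with $w/n\to 0$. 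This produces $\log|\C'|\le w\,\DD(P_{\widehat{Y}|X=1}\Vert P_{\widehat{Y}|X=0})(1+o(1)) = t_\delta\sqrt{n}\,\DD(P_{\widehat{Y}|X=1}\Vert P_{\widehat{Y}|X=0})(1+o(1))$. Since this inequality holds for every admissible $\Pj$, taking the minimum over all $\Pj\in\mathcal{M}_{\max}(\mathsf{q})$ with $P_{Y|X}=\wyx$ yields $\mathsf{C}_{\mathsf{q},\delta}\le\bar{\mathsf{R}}_{\mathsf{q},\delta}$, as required.
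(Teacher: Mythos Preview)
Your high-level strategy---expurgate to a constant-composition subcode of weight $\approx t_\delta\sqrt{n}$ and then pass to an auxiliary channel $P_{\widehat{Y}|X}$ via a maximal joint conditional distribution---is exactly the paper's. The gap is in replacing the combinatorial type-counting step by a Fano-plus-data-processing argument.

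The coupling lemma you invoke (Claim~\ref{lemma:link} in the paper, i.e.,~\cite[Lemma~3]{kangarshahi2021single}) does \emph{not} show that reliable mismatched decoding on $\wyx$ forces reliable ML decoding on $P_{\widehat{Y}|X}$. It establishes something narrower: if $(\y,\yyh)$ lie in a maximal joint conditional type class $\T_{V_{Y\widehat{Y}|X}}(\x(m))$ and $\yyh$ has a \emph{type conflict} with another codeword $\x'$---meaning $T_{\yyh|\x'}=T_{\yyh|\x(m)}$ exactly---then $\mathsf{q}^n(\x(m),\y)\le\mathsf{q}^n(\x',\y)$. A type conflict is only a special case of an ML tie on the auxiliary channel; two codewords with different conditional types with $\yyh$ can still satisfy $P_{\widehat{Y}|X}^{\otimes n}(\yyh|\x')\ge P_{\widehat{Y}|X}^{\otimes n}(\yyh|\x(m))$. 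So vanishing mismatch error only forces type conflicts to be rare, not ML errors, and there is no decoder on the $\widehat{Y}$-channel whose error event coincides with ``type conflict for the transmitted codeword.'' Consequently Fano cannot be invoked to convert this into $\log|\C'|\le nI(P^{cc},P_{\widehat{Y}|X})$.

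The paper instead proves Lemma~\ref{lemma:key} by direct type-class counting: if~\eqref{eq:condition} holds for every $\Vh\in\widehat{\mathscr{V}}_{P^{cc}}$, a pigeonhole argument produces a codeword $\x(m)$ for which a $1-O(1/a)$ fraction of each shell $\T_{\Vh}(\x(m))$ has type conflicts with $a$ other codewords; Claim~\ref{lemma:link} together with~\cite[Lemma~2]{kangarshahi2021single} then transfers this to a large mismatch error on the original channel. The strengthening over~\cite[Theorem~3]{kangarshahi2021single} is that~\eqref{eq:condition} is imposed $\Vh$-by-$\Vh$ rather than via $\min_{\Vh}|\T_{\Vh}(\x)|\big/\max_{\Vh}|\T_{P^{cc}\Vh}|$: in the covert regime both $|\T_{\Vh}(\x)|$ and $|\T_{P^{cc}\Vh}|$ share the first-order exponent $nH(\Vh_0)$ and differ only at scale $\sqrt{n}$, so the $\min/\max$ form of the condition is never satisfied. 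Your Taylor expansion of the mutual information is correct and is precisely what the paper uses in~\eqref{eq:xi1}--\eqref{eq:xi2} to verify~\eqref{eq:condition}, but it is applied downstream of the combinatorial step, not as a substitute for it.
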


The proof of Theorem~\ref{thm:converse} is presented in Section~\ref{sec:converse}. From Theorems~\ref{thm:achievability} and~\ref{thm:converse}, it is clear that the covert mismatch capacity satisfies $\underline{\mathsf{R}}_{\mathsf{q},\delta} \le \mathsf{C}_{\mathsf{q},\delta} \le \bar{\mathsf{R}}_{\mathsf{q},\delta}$.
Also note that the covert capacity is a trivial upper bound on the covert mismatch capacity, i.e., $\mathsf{C}_{\mathsf{q},\delta} \le \mathsf{C}_{\delta}$, since in the standard covert communication problem, the decoder can be chosen arbitrarily and thus the capacity can only be higher.
Indeed, the upper bound $\bar{\mathsf{R}}_{\mathsf{q},\delta}$ is in general better than the trivial upper bound $\mathsf{C}_{\delta}$. This is because the set $\mathcal{M}_{\max}(\mathsf{q})$ contains a joint conditional distribution $\widetilde{P}_{Y\widehat{Y}|X}$ whose marginal distributions $\widetilde{P}_{Y|X} = \widetilde{P}_{\widehat{Y}|X} = \wyx$, implying 
\begin{align}
\bar{\mathsf{R}}_{\mathsf{q},\delta}  \le    t_{\delta} \cdot \DD(\widetilde{P}_{\widehat{Y}|X=1} \Vert \widetilde{P}_{\widehat{Y}|X=0}) = t_{\delta} \cdot \DD(P_1 \Vert P_0) = \mathsf{C}_{\delta}.
\end{align}
As a result, the covert mismatch capacity satisfies $\underline{\mathsf{R}}_{\mathsf{q},\delta} \le \mathsf{C}_{\mathsf{q},\delta} \le \bar{\mathsf{R}}_{\mathsf{q},\delta} \le \mathsf{C}_{\delta}$.


\subsection{Covert mismatch capacity for binary-input binary-output channels} \label{sec:binary}

When the channel $\wyx$ between the sender and receiver is a binary-input binary-output channel (i.e., $\mathcal{X} = \mathcal{Y} = \{0,1\}$), Theorem~\ref{thm:binary} below states that the lower bound $\underline{\mathsf{R}}_{\mathsf{q},\delta}$ and upper bound $\bar{\mathsf{R}}_{\mathsf{q},\delta}$ coincide, and thus we have an exact characterization of the covert mismatch capacity. Theorem~\ref{thm:binary} also shows a dichotomy of $\mathsf{C}_{\mathsf{q},\delta}$---it equals either the covert capacity $\mathsf{C}_{\delta}$ or zero. Note that the output alphabet $\mathcal{Z}$ at the warden's side is not necessarily binary.

\begin{theorem} \label{thm:binary}
Let $\mathcal{X} = \mathcal{Y} = \{0,1\}$, and without loss of generality we assume the channel $\wyx$ satisfies $P_0(0) + P_1(1) \ge P_0(1) + P_1(0)$.
\begin{enumerate}
    \item When $\mathsf{q}(0,0)   \mathsf{q}(1,1) > \mathsf{q}(0,1)   \mathsf{q}(1,0)$, the covert mismatch capacity is $\mathsf{C}_{\mathsf{q},\delta} = \underline{\mathsf{R}}_{\mathsf{q},\delta} =  \bar{\mathsf{R}}_{\mathsf{q},\delta} = t_{\delta} \cdot \DD(P_1 \Vert P_0)$,
which also equals the covert capacity $\mathsf{C}_{\delta}$.

\item When $\mathsf{q}(0,0)   \mathsf{q}(1,1) \le \mathsf{q}(0,1)  \mathsf{q}(1,0)$, we have $\mathsf{C}_{\mathsf{q},\delta} = \underline{\mathsf{R}}_{\mathsf{q},\delta} =  \bar{\mathsf{R}}_{\mathsf{q},\delta} = 0$.
\end{enumerate}
\end{theorem}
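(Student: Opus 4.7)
Let $r(y)\triangleq \mathsf{q}(1,y)/\mathsf{q}(0,y)$, so that Case~1 corresponds to $r(1)>r(0)$ and Case~2 to $r(1)\le r(0)$. The normalisation hypothesis $P_0(0)+P_1(1)\ge P_0(1)+P_1(0)$ is equivalent to $P_1(1)\ge P_0(1)$ (and hence also to $P_0(0)\ge P_1(0)$ and $P_0(0)P_1(1)\ge P_0(1)P_1(0)$). The discussion following Theorem~\ref{thm:converse} already supplies the chain $\underline{\mathsf{R}}_{\mathsf{q},\delta}\le \mathsf{C}_{\mathsf{q},\delta}\le \Rbar\le \mathsf{C}_\delta=t_\delta\,\DD(P_1\Vert P_0)$, so in each case the plan is to pin one outer endpoint and let the sandwich do the rest: in Case~1 I will push $\underline{\mathsf{R}}_{\mathsf{q},\delta}$ up to $\mathsf{C}_\delta$, and in Case~2 I will push $\Rbar$ down to $0$.

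\textbf{Case 1.} Write $f(s)=s\,\E_{P_1}[\log r(Y)]-\log \E_{P_0}[r(Y)^s]$ and compute $\sup_{s\ge 0}f(s)$ in closed form. The first-order condition $f'(s)=0$ amounts to the statement that the $r^{s}$-tilted $P_0$ measure places mass $P_1(1)$ on $y=1$; this is a single linear equation in $r(0)^{s^*}$ and $r(1)^{s^*}$, yielding $(r(1)/r(0))^{s^*}=P_0(0)P_1(1)/(P_0(1)P_1(0))$. Case~1 makes the denominator of $s^*$ positive, while the WLOG hypothesis makes its numerator non-negative, so $s^*\ge 0$ exists. Substituting $s^*$ back and using the identity $P_0(0)r(0)^{s^*}+P_0(1)r(1)^{s^*}=P_0(0)r(0)^{s^*}/P_1(0)$ that pops out of the first-order condition, the expression collapses term-by-term to $P_1(0)\log(P_1(0)/P_0(0))+P_1(1)\log(P_1(1)/P_0(1))=\DD(P_1\Vert P_0)$. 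Hence $\underline{\mathsf{R}}_{\mathsf{q},\delta}\ge t_\delta\,\DD(P_1\Vert P_0)$, which combined with the sandwich forces equality throughout.

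\textbf{Case 2.} For the lower bound, Jensen gives $s\,\E_{P_1}[\log r(Y)]\le \log\E_{P_1}[r(Y)^s]$, whence $f(s)\le \log(\E_{P_1}[r(Y)^s]/\E_{P_0}[r(Y)^s])$. A one-line binary computation shows $\E_{P_1}[r(Y)^s]-\E_{P_0}[r(Y)^s]=(P_1(1)-P_0(1))(r(1)^s-r(0)^s)\le 0$, using $P_1(1)\ge P_0(1)$ and $r(1)\le r(0)$, so $f(s)\le 0$ for all $s\ge 0$ and the supremum is attained at $s=0$. For the upper bound, I will exhibit one $\Pj\in\mathcal{M}_{\max}(\mathsf{q})$ with $P_{Y|X}=\wyx$ that makes the divergence in \eqref{eq:minimizer} vanish. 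Inspecting Definition~\ref{def:maximal} on the binary alphabet gives $\Sq(0,1)\ni 0$ and $\Sq(1,0)\ni 1$ in Case~2, with the other element excluded whenever the defining inequality is strict; consequently any feasible $\Pj$ must satisfy $\Pj(0,1\mid 1)=\Pj(1,0\mid 0)=0$, which combined with the $Y$-marginal forces $\Pj(0,0\mid 1)=P_1(0)$ and $\Pj(1,1\mid 0)=P_0(1)$. The remaining two free entries I set to $\Pj(0,0\mid 0)=P_1(0)$ and $\Pj(1,1\mid 1)=P_1(1)$; the residuals $\Pj(0,1\mid 0)=P_0(0)-P_1(0)$ and $\Pj(1,0\mid 1)=0$ are non-negative exactly because of the WLOG hypothesis. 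Computing the $\widehat Y$ marginals gives $P_{\widehat Y|X=0}=P_{\widehat Y|X=1}=P_1$, so the KL term vanishes and $\Rbar=0$. The sandwich $0\le \underline{\mathsf{R}}_{\mathsf{q},\delta}\le \mathsf{C}_{\mathsf{q},\delta}\le \Rbar=0$ then closes the case; the boundary subcase $\mathsf{q}(0,0)\mathsf{q}(1,1)=\mathsf{q}(0,1)\mathsf{q}(1,0)$ only enlarges $\mathcal{M}_{\max}(\mathsf{q})$, so the same $\Pj$ remains feasible.

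\textbf{Main obstacle.} The principal technical step is the closed-form optimisation of $f$ in Case~1 and recognising $f(s^*)$ as $\DD(P_1\Vert P_0)$; this is elementary but demands careful simplification. The Case~2 upper bound is the conceptually subtlest piece—finding the right element of $\mathcal{M}_{\max}(\mathsf{q})$—but the two forced zeros $\Pj(0,1\mid 1)=\Pj(1,0\mid 0)=0$ reduce the search to one degree of freedom per row, and the assignment above is essentially canonical.
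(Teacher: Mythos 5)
Your proposal is correct and follows the same overall strategy as the paper (evaluate $\underline{\mathsf{R}}_{\mathsf{q},\delta}$ by optimizing over $s$; exhibit a feasible member of $\mathcal{M}_{\max}(\mathsf{q})$ to pin $\Rbar$), but it streamlines several steps in ways worth flagging. In Case~1 the paper directly minimizes over the two-parameter family to show $\Rbar = t_\delta\,\DD(P_1\Vert P_0)$; you instead invoke the chain $\underline{\mathsf{R}}_{\mathsf{q},\delta}\le \mathsf{C}_{\mathsf{q},\delta}\le \Rbar\le \mathsf{C}_\delta$ already established after Theorem~\ref{thm:converse}, so once $\underline{\mathsf{R}}_{\mathsf{q},\delta}=\mathsf{C}_\delta$ is shown the sandwich does all the remaining work --- a genuine shortcut. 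Your first-order condition matches the paper's stationary point $s_0$, and the identity $\E_{P_0}[r(Y)^{s^*}]=P_0(0)r(0)^{s^*}/P_1(0)$ does collapse $f(s^*)$ to $\DD(P_1\Vert P_0)$ (I verified the algebra). In Case~2 the paper argues $\sup_s f(s)=0$ via the sign of $\mathrm{d}f/\mathrm{d}s$; your Jensen bound $f(s)\le \log\big(\E_{P_1}[r^s]/\E_{P_0}[r^s]\big)$ together with $\E_{P_1}[r^s]-\E_{P_0}[r^s]=(P_1(1)-P_0(1))(r(1)^s-r(0)^s)\le 0$ is cleaner and subsumes the boundary $r(1)=r(0)$ at no extra cost. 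For the Case~2 upper bound you choose a different feasible $\Pj$ than the paper (the paper takes $r_1=r_2=(P_0(0)-P_1(0))/2$, which forces both $\widehat{Y}$-marginals to the midpoint $(P_0+P_1)/2$, whereas yours forces both to $P_1$); both give $\DD(P_{\widehat{Y}|X=1}\Vert P_{\widehat{Y}|X=0})=0$, so this is an equally valid witness. Finally, the paper handles the strict and equality subcases of Case~2 separately; your observation that equality only enlarges $\mathcal{M}_{\max}(\mathsf{q})$ and hence preserves feasibility of the same $\Pj$ is a tidy unification. One trivial nit: only $\Pj(0,1\mid 0)=P_0(0)-P_1(0)\ge 0$ actually uses the WLOG hypothesis; $\Pj(1,0\mid 1)=0$ is non-negative unconditionally.
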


Theorem~\ref{thm:binary} is proved by evaluating $\underline{\mathsf{R}}_{\mathsf{q},\delta}$ and  $\bar{\mathsf{R}}_{\mathsf{q},\delta}$ for the binary-input binary-output setting, and the details can be found in Appendix~\ref{appendix:binary}.
In the classical mismatched decoding problem (without covertness constraints), it is also known that the mismatch capacity exhibits a dichotomy (i.e., equals either the channel capacity or zero) for binary-input binary-output channels~\cite{csiszar1995channel}. Thus, Theorem~\ref{thm:binary} can be viewed as a counterpart of the aforementioned classical result under the covert communication framework.

\subsection{Numerical evaluations of the lower and upper bounds when $|\mathcal{Y}| = 3$} \label{sec:ternary} 
When the cardinality of the output alphabet $|\mathcal{Y}| \ge 3$, the lower and upper bound do not match in general. In the following, we provide numerical evaluations of $\underline{\mathsf{R}}_{\mathsf{q},\delta}$ and  $\bar{\mathsf{R}}_{\mathsf{q},\delta}$ for a binary-input ternary-output setting.

\begin{example}[Binary-Input Ternary-Output]
{\em
We set the covertness parameter $\delta = 0.1$, and the channels and decoding metric $\mathsf{q}$ to be
\begin{align}
    \wyx = \begin{bmatrix}
0.6 & 0.2 & 0.2\\
0.2 & 0.2 & 0.6
\end{bmatrix}, \qquad  
\mathsf{q} = \begin{bmatrix}
u & 1 & 1\\
1 & 1 & 3
\end{bmatrix}, \quad \mathrm{and} \quad 
\wzx = \begin{bmatrix}
0.8 & 0.1 & 0.1\\
0.2 & 0.3 & 0.5
\end{bmatrix},
\end{align}
where $u \in \mathbb{R}$ is a variable. In Fig.~\ref{fig:2a}, we plot the lower bound $\underline{\mathsf{R}}_{\mathsf{q},\delta}$, upper bound $\bar{\mathsf{R}}_{\mathsf{q},\delta}$, and covert capacity $\mathsf{C}_{\delta}$ as $u$ increases from $0$ to $10$. As expected, the lower bound achieves maximum (and also achieves the covert capacity) when $u = 3$, which corresponds to the matched case where the decoding metric $\mathsf{q}$ is proportional to the channel law $\wyx$. Besides, we note that when $u < 1$, the upper bound $\bar{\mathsf{R}}_{\mathsf{q},\delta}$ is strictly better than the trivial upper bound $\mathsf{C}_{\delta}$.
}
\end{example}

\begin{figure}[t]
		\centering
		\includegraphics[width=7.6cm]{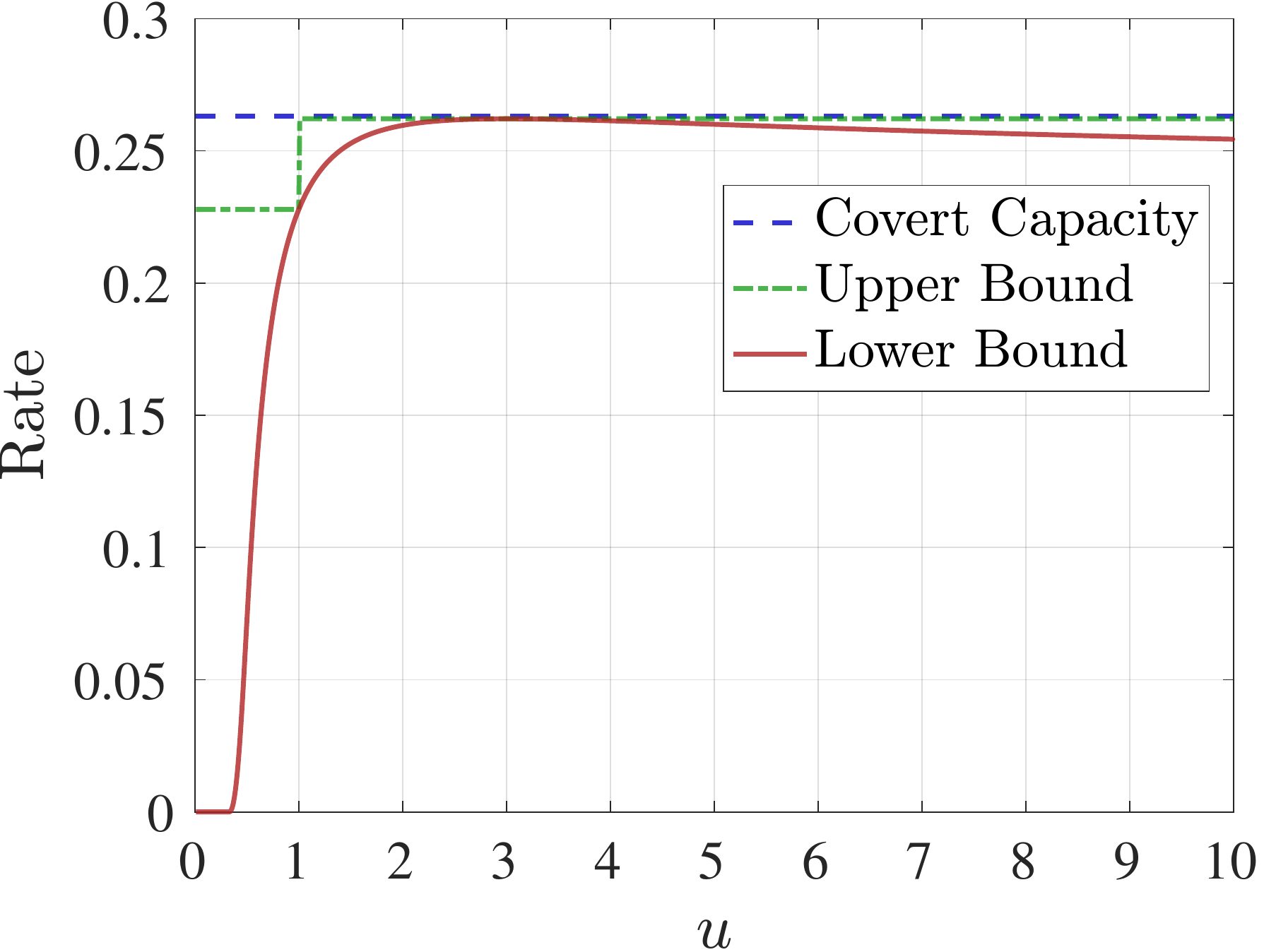}
		\caption{Plots of the lower bound, upper bound, and covert capacity for the binary-input ternary-output setting in Example 1}
		\label{fig:2a}
\end{figure}

\subsection{Covert communication with zero undetected error} \label{sec:zero}

In this subsection, we consider the problem of covert communication with zero undetected error. We say an \emph{undetected error} occurs if the decoder outputs an incorrect message, and an \emph{erasure} occurs if the decoder chooses to declare an error. Since undetected errors are often  more harmful, it is of interest to investigate the highest rate at which the undetected error can be \emph{exactly} zero and the probability of erasure vanishes as the blocklength grows. This problem is referred to as the zero undetected error problem or the erasures-only problem.
It has been noticed~\cite{csiszar1995channel} that the erasures-only problem is a special case of the mismatched decoding problem with decoding metric
\begin{align}
    \mathsf{q}(x,y) = \begin{cases}
    1, &\mathrm{if } \ \wyx(y|x) > 0, \\
    \xi, &\mathrm{if } \ \wyx(y|x) = 0,
    \end{cases} \label{eq:q34}
\end{align}
where $\xi \in (0,1)$ can be chosen arbitrarily.
With this decoding metric, the decoder chooses a codeword $\x$ if it is the only codeword satisfying $\mathsf{q}^n(\x,\y) = 1$ (or equivalently, $\wyxn(\y|\x) > 0$), and declares an error if there are multiple codewords satisfying $\mathsf{q}^n(\x,\y) = 1$. Since the correct codeword $\x$ always satisfies $\wyxn(\y|\x) > 0$, an undetected error would never occur, and the probability of error in this specific mismatched decoding problem is exactly the probability of erasures. Thus, we define the \emph{covert erasures-only capacity} $\mathsf{C}^{(\mathrm{eo})}_{\delta}$ as the covert mismatched capacity when $\mathsf{q}$ is particularized to the erasures-only metric in~\eqref{eq:q34}. 

For any BDMC $\wyx$, we define the set $\mathcal{D}$ as a collection of symbols $y \in \mathcal{Y}$ that can only be induced by the innocent symbol $X = 0$, i.e., $\mathcal{D} \triangleq \left\{y \in \mathcal{Y}: P_0(y) > 0, P_1(y) = 0 \right\}$. Also note that the set $\left\{y \in \mathcal{Y}: P_1(y) > 0, P_0(y) = 0 \right\}$ is empty, since  the channel $\wyx$ satisfies the absolute continuity assumption $P_1 \ll P_0$.

\begin{theorem} \label{thm:eo}
For any covertness parameter $\delta > 0$  and any pair of BDMCs $(\wyx,\wzx)$, the covert erasures-only capacity 
\begin{align}
\mathsf{C}^{(\mathrm{eo})}_{\delta} =   t_{\delta} \cdot \log\left(\frac{1}{P_0(\mathcal{Y}\setminus \mathcal{D})}\right). 
\end{align}
\end{theorem}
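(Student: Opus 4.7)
The plan is to specialize the bounds in Theorems~\ref{thm:achievability} and~\ref{thm:converse} to the erasures-only metric $\mathsf{q}$ of~\eqref{eq:q34} and show that both reduce to $t_\delta\log\bigl(1/P_0(\mathcal{Y}\setminus\mathcal{D})\bigr)$. To this end I will partition the output alphabet into $\mathcal{D}$, $\mathcal{B}\triangleq\{y:P_0(y)>0,\,P_1(y)>0\}$, and $\mathcal{C}\triangleq\{y:P_0(y)=0\}$; because $P_1\ll P_0$, the distribution $P_1$ is supported on $\mathcal{B}$ and $P_0(\mathcal{Y}\setminus\mathcal{D})=P_0(\mathcal{B})$. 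From~\eqref{eq:q34}, the pair $(\mathsf{q}(0,y),\mathsf{q}(1,y))$ equals $(1,\xi)$ on $\mathcal{D}$, $(1,1)$ on $\mathcal{B}$, and $(\xi,\xi)$ on $\mathcal{C}$, so the ratio $\mathsf{q}(1,y)/\mathsf{q}(0,y)$ is $\xi<1$ on $\mathcal{D}$ and $1$ elsewhere.

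For the lower bound, the first expectation in~\eqref{eq:lower} vanishes because $P_1$ is supported on $\mathcal{B}$, where the ratio is $1$, while
\begin{align}
\mathbb{E}_{P_0}\!\!\left[\frac{\mathsf{q}(1,Y)^s}{\mathsf{q}(0,Y)^s}\right]=P_0(\mathcal{D})\,\xi^s+P_0(\mathcal{Y}\setminus\mathcal{D})
\end{align}
is strictly decreasing in $s\ge 0$ with infimum $P_0(\mathcal{Y}\setminus\mathcal{D})$ as $s\to\infty$. Hence $\underline{\mathsf{R}}_{\mathsf{q},\delta}=t_\delta\log\bigl(1/P_0(\mathcal{Y}\setminus\mathcal{D})\bigr)$.

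For the upper bound, the first step is a short case analysis of $\Sq(y,\yh)$ from Definition~\ref{def:maximal}: one finds $\Sq(y,\yh)=\{1\}$ for $(y,\yh)\in\mathcal{D}\times(\mathcal{B}\cup\mathcal{C})$, $\Sq(y,\yh)=\{0\}$ for $(y,\yh)\in(\mathcal{B}\cup\mathcal{C})\times\mathcal{D}$, and $\Sq(y,\yh)=\{0,1\}$ otherwise. Combining these with the marginal constraints $P_{Y|X=0}=P_0$ and $P_{Y|X=1}=P_1$ forces every admissible $\Pj\in\mathcal{M}_{\max}(\mathsf{q})$ to satisfy
\begin{align}
\mathrm{(i)}\ P_{\widehat{Y}|X=1}(\yh)=0 \text{ for all } \yh\in\mathcal{D}, \qquad  \mathrm{(ii)}\ \sum_{\yh\in\mathcal{B}\cup\mathcal{C}}P_{\widehat{Y}|X=0}(\yh)\le P_0(\mathcal{B}),
\end{align}
where (i) follows from $P_1=0$ on $\mathcal{D}\cup\mathcal{C}$ together with the maximality constraint $\Pj(y,\yh|1)=0$ on $(\mathcal{B}\cup\mathcal{C})\times\mathcal{D}$, and (ii) follows from $P_0=0$ on $\mathcal{C}$ together with the maximality constraint $\Pj(y,\yh|0)=0$ on $\mathcal{D}\times(\mathcal{B}\cup\mathcal{C})$, which leaves only $y\in\mathcal{B}$ to contribute. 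Applying the log-sum inequality to the divergence restricted to the support $\mathcal{B}\cup\mathcal{C}$ of $P_{\widehat{Y}|X=1}$ then yields
\begin{align}
\DD\!\bigl(P_{\widehat{Y}|X=1}\,\Vert\,P_{\widehat{Y}|X=0}\bigr)\;\ge\;\log\frac{1}{P_0(\mathcal{Y}\setminus\mathcal{D})}.
\end{align}
To establish tightness, fix any $\yh^\star\in\mathcal{B}$ (which is non-empty since $P_1$ is a probability distribution) and construct $\Pj$ by setting $\Pj(y,\yh^\star|1)=P_1(y)$ and $\Pj(y,\yh^\star|0)=P_0(y)$ for every $y\in\mathcal{B}$, and by distributing the remaining $\mathcal{D}$-mass of $P_0$ on the diagonal of $\mathcal{D}\times\mathcal{D}$. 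Maximality is immediate on all supported pairs from the case analysis, the marginal constraint $P_{Y|X}=\wyx$ is satisfied by construction, and $P_{\widehat{Y}|X=1}$ is concentrated at $\yh^\star$ with $P_{\widehat{Y}|X=0}(\yh^\star)=P_0(\mathcal{B})$, giving divergence exactly $\log(1/P_0(\mathcal{Y}\setminus\mathcal{D}))$. Thus $\bar{\mathsf{R}}_{\mathsf{q},\delta}=t_\delta\log(1/P_0(\mathcal{Y}\setminus\mathcal{D}))$ and the theorem follows.

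The main technical content is the case analysis of $\Sq(y,\yh)$ together with the derivation of the structural constraints (i)--(ii); once these are in hand, the log-sum inequality and the explicit minimizer above deliver matching lower and upper bounds directly.
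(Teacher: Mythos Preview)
Your proof is correct and follows essentially the same approach as the paper's: both specialize the lower bound~\eqref{eq:lower} to the erasures-only metric (yielding $t_\delta\log(1/P_0(\mathcal{Y}\setminus\mathcal{D}))$ by letting $s\to\infty$), perform the same case analysis of $\Sq(y,\yh)$ to derive the two structural constraints (i)--(ii), lower-bound the divergence by Jensen/log-sum, and exhibit an explicit minimizer concentrated on a single $\yh^\star\notin\mathcal{D}$. The only cosmetic differences are that you make the three-way partition $\mathcal{D},\mathcal{B},\mathcal{C}$ explicit (the paper works with $\mathcal{D}$ versus $\mathcal{Y}\setminus\mathcal{D}$) and you place the residual $\mathcal{D}$-mass of $P_0$ on the diagonal rather than on a single $\yh^\dagger\in\mathcal{D}$.
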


Theorem~\ref{thm:eo} is proved by evaluating the lower and upper bounds for the covert mismatched capacity when particularizing the decoding metric to~\eqref{eq:q34}, and the details are deferred to Appendix~\ref{appendix:eo}.
To the best of our knowledge, a computable expression of \emph{erasures-only capacity} for standard (non-covert) communication over a BDMC is not known. In contrast, we show that for covert communication over BDMCs (with the aforementioned absolute continuity assumption), the covert erasures-only capacity  can be precisely characterized by a single-letter expression.    


\section{Proof of Theorem~\ref{thm:achievability} (Lower Bound)} \label{sec:achievability}
We first introduce our code design and provide a short proof sketch, and defer the detailed analyses to Subsections~\ref{sec:ppm}--\ref{sec:expurgate}.
Let the sizes of the message set and key set to be 
\begin{align}
    &\frac{\log |\mathcal{M}|}{\sqrt{n}} = \underline{\mathsf{R}}_{\mathsf{q},\delta} - \eta_1, \quad \mathrm{and}\\
    &\frac{\log |\mathcal{M}| + \log |\mathcal{K}|}{\sqrt{n}} = \max\left\{\underline{\mathsf{R}}_{\mathsf{q},\delta} - \eta_1, t_{\delta} \DD\left(Q_1 \Vert Q_0 \right)+ \eta_2 \right\},
\end{align}
where $\eta_1,\eta_2 \in (0,1)$ can be made  arbitrarily small. Note that no shared key is needed if $\underline{\mathsf{R}}_{\mathsf{q},\delta} - \eta_1 \ge t_{\delta} \DD\left(Q_1 \Vert Q_0 \right)+ \eta_2$ (corresponding to the scenario when the channel $\wzx$ is sufficiently noisy).

For each message-key pair $(m,k) \in \mathcal{M} \times \mathcal{K}$, we generate a codeword $\x(m,k) \in \mathcal{X}^n$ independently according to the so-called \emph{PPM distribution} $\ppmx$ (to be described in Subsection~\ref{sec:ppm} below). The code $\C$ contains all the codewords $\{\x_{m,k} \}$ for $m \in \mathcal{M}$ and $k \in \mathcal{K}$. For every $k' \in \mathcal{K}$,  the corresponding sub-code $\C_{k'}$ contains all the codewords $\{\x(m,k') \}_{m \in \mathcal{M}}$ indexed by $k'$, and its average probability of error $P^{\mathrm{(avg)}}_{\mathrm{err}}(\C_k)$ and maximal probability of error $P^{\mathrm{(max)}}_{\mathrm{err}}(\C_k)$ are respectively given by
\begin{align}
&P^{\mathrm{(avg)}}_{\mathrm{err}}(\C_k) = \frac{1}{|\mathcal{M}|} \sum_{m \in \mathcal{M}} P_{\mathrm{err}}(m,k),  \\
&P^{\mathrm{(max)}}_{\mathrm{err}}(\C_k) = \max_{m \in \mathcal{M}} P_{\mathrm{err}}(m,k).
\end{align}
Also note that $P^{\mathrm{(max)}}_{\mathrm{err}}(\C) = \max_{k \in \mathcal{K}}P^{\mathrm{(max)}}_{\mathrm{err}}(\C_k)$.

\underline{Proof Sketch}:
While the ultimate goal is to show that $\lim_{n \to \infty} P^{\mathrm{(max)}}_{\mathrm{err}}(\C) = 0$ (i.e., the maximal probabilities of error of \emph{all} sub-codes are vanishing), we instead choose to analyze the average probability of error of each sub-code as our first step, by using a modified random coding union (RCU) bound. The details are provided in Subsection~\ref{sec:average}.

Based on a random coding argument, in Subsection~\ref{sec:existence} we show the existence of a code $\C = \{\C_k\}_{k \in \mathcal{K}}$ such that 
\begin{enumerate}
    \item The average probabilities of error $P^{\mathrm{(avg)}}_{\mathrm{err}}(\C_k)$ of \emph{all} sub-codes are vanishing.  
    \item Every subset of the codebook\footnote{With a slight abuse of notation, we use $\C$ to denote both the code and the codebook.} $\mathcal{I} \subset \C$ with cardinality $|\mathcal{I}| = n^{-11}|\C|$ satisfies the ``resolvability'' property---its induced output distribution (i.e., the distribution induced by the uniformly distributed codewords in $\mathcal{I}$ and the channel $\wzx$) is close to the distribution $\ppmz$ (which, to be defined in~\eqref{eq:ppmz}, is induced by the PPM distribution $\ppmx$ and  channel $\wzx$).
\end{enumerate}
The second property is critical for our expurgation argument described below, since it ensures that \emph{every} subset of the code $\C$ satisfies the desired ``resolvability'' property (which is the key for achieving covertness).

We then construct a new code $\widetilde{\C} = \{\widetilde{\C}_k\}_{k \in \mathcal{K}}$ from the original code $\C= \{\C_k\}_{k \in \mathcal{K}}$, by expurgating $n^{-11}|\mathcal{M}|$ of the codewords that have the highest probabilities of error from each of the sub-code $\{\C_k\}_{k \in \mathcal{K}}$. We show in Subsection~\ref{sec:expurgate} that the new expurgated code $\widetilde{\C}$ has the following desired properties: 
\begin{enumerate}
    \item The maximal probabilities of error $P^{\mathrm{(max)}}_{\mathrm{err}}(\widetilde{\C}_k)$ of \emph{all} sub-code $\widetilde{\C}_k$ are vanishing;
    \item The output distribution $\widehat{Q}^n_{\widetilde{\C}}$ induced by $\widetilde{\C}$ satisfies the covertness constraint, i.e., $\DD(\widehat{Q}^n_{\widetilde{\C}} \Vert Q_0^{\otimes n}) \le \delta$;
    \item The message size of each sub-code $\widetilde{\C}_k$ is almost as large as that of $\C_k$, i.e.,
    \begin{align}
        \liminf_{n \to \infty} \frac{\log |\widetilde{\C}_k|}{\sqrt{n}} = \liminf_{n \to \infty} \frac{\log|\C_k|}{\sqrt{n}} = \liminf_{n \to \infty} \frac{\log|\mathcal{M}|}{\sqrt{n}} = \underline{\mathsf{R}}_{\mathsf{q},\delta} -\eta_1, 
    \end{align}
    where $\eta_1 > 0$ can be made arbitrarily small.
\end{enumerate}

\begin{remark} \label{remark:expurgation} {\em
It is worth noting that applying the expurgation technique makes the proof of covertness challenging. This is because although it is relatively standard to prove that the output distribution of the original code $\C$ satisfies the covertness constraint (e.g., through a channel resolvability argument), it is more challenging to do so for the expurgated code $\widetilde{\C}$ since its output distribution differs from that of the original code $\C$ (due to the expurgation process). To solve this issue, we use a recently developed result  in~\cite{tahmasbi2018first} showing that for \emph{every} subset of the original code $\C$ with a fixed cardinality, its output distribution  satisfies the covertness constraint (as shown in Lemma~\ref{lemma:subset} below). This eventually implies that the expurgated code $\widetilde{\C}$  satisfies the covertness constraint.  }
\end{remark}

\subsection{Pulse position modulation (PPM)} \label{sec:ppm}
We now formally introduced the PPM  distribution $\ppmx$ used in our code design. Let the Hamming weight of each codeword in the codebook be $l \triangleq\left \lfloor \sqrt{ \frac{(2\delta-n^{-1/3})n}{\chi_2(Q_1\Vert Q_0)} }  \right\rfloor$, and note that $\lim_{n \to \infty} l/t_{\delta}\sqrt{n} = 1$. We also define 
 $(w,s)$ as non-negative integers such that $w \triangleq \left \lfloor{n/l}\right \rfloor$ and $r \triangleq n - wl$. We use $\ux \in \mathcal{X}^w, \uy \in \mathcal{Y}^w, \uz \in \mathcal{Z}^w$ to denote vectors of length $w$. Let 
\begin{align}
\pwx(\ux) \triangleq \begin{cases}
1/w, & \text{if} \ \text{wt}_{\mathrm{H}}(\ux) = 1, \\
0, &\text{otherwise},
\end{cases} \notag
\end{align}
be the distribution on $\mathcal{X}^w$ such that $\pwx(\ux)$ is non-zero if and only if $\ux$ has Hamming weight one. 
For each length-$n$ vector $\x$ and each $i \in [1:l]$, we define $\ux^{(i)} \triangleq x_{(i-1)w+1}^{iw}$ as the length-$w$ subsequence that  comprises consecutive elements from $x_{(i-1)w+1}$ to $x_{iw}$. Thus, $\x$ can be represented as $\x = [\ux^{(1)}, \ldots, \ux^{(l)}, x_{wl+1}^n]$, where $x_{wl+1}^n$ is of length $r$. The PPM distribution is defined as 
\begin{align}
\ppmx(\x) \triangleq \prod_{i=1}^l \pwx(\ux^{(i)}) \cdot \mathbbm{1}\left\{\text{wt}_{\mathrm{H}}(x_{wl+1}^n) = 0\right\}. \label{eq:PPM}
\end{align}
That is, we require each generated vector $\x$ to contain exactly $l$ ones; in particular, each of the first $l$ intervals $[1:w], [w+1:2w], \ldots, [(l-1)w+1:lw]$ contains a single one, and the last interval $[wl+1:n]$ contains all zeros. The PPM output distribution $\ppmz$, induced by $\ppmx$ and the channel $\wzx$, takes the form 
\begin{align}
    \ppmz(\z)  &\triangleq \sum_{\x\in \mathcal{X}^n} \ppmx(\x) \wzxn(\z|\x). \label{eq:ppmz}
\end{align}

\subsection{Analysis of the average probability of error} \label{sec:average}
We first consider the \emph{expected} average probability of error (over the codebook generation) for each sub-code $\C_k$, where $k \in \mathcal{K}$.
By applying a modified RCU bound that is adapted to the decoding metric $\mathsf{q}$, one can show that for every  $\C_k$, 
\begin{align}
    \mathbb{E}\left(P^{\mathrm{(avg)}}_{\mathrm{err}}(\C_k) \right)&= \mathbb{E}\left[\PP\left( \bigcup_{m'\in \mathcal{M} \setminus m} \left\{\mathsf{q}^n(\X(m'),\Y) \ge \mathsf{q}^n(\X(m),\Y) \right\} \bigg| \X(m),\Y \right)   \right]   \\
    &\le \mathbb{E}\left[ \min\left\{ 1, (|\mathcal{M}|-1) \times  \PP\left( \mathsf{q}^n(\widetilde{\X},\Y) \ge \mathsf{q}^n(\X,\Y) \bigg| \X,\Y \right) \right\} \right] \\
    &\le \mathbb{E}\left[ \min\left\{ 1, |\mathcal{M}| \times  \mathbb{E}\left[ \left(\frac{\mathsf{q}^n(\widetilde{\X},\Y)}{\mathsf{q}^n(\X,\Y)} \right)^s \bigg| \X,\Y \right] \right\} \right] \\
    &\le \PP\left[\log|\mathcal{M}| + \log \mathbb{E} \left[\left(\frac{\mathsf{q}^n(\widetilde{\X},\Y)}{\mathsf{q}^n(\X,\Y)} \right)^s \bigg| \X,\Y\right] \ge \log \lambda \right] + \lambda  \label{eq:error}
\end{align}
for any $s \ge 0$ and $\lambda > 0$, where $(\X,\Y,\widetilde{\X}) \sim \ppmx(\x) \cdot \wyxn(\y|\x) \cdot \ppmx(\widetilde{\x})$. Note that 
\begin{align}
&-\log \mathbb{E} \left[\left(\frac{\mathsf{q}^n(\widetilde{\X},\Y)}{\mathsf{q}^n(\X,\Y)} \right)^s \bigg| \X,\Y\right]  \label{eq:xiao} \\
&=-\log \sum_{\widetilde{\x}} \ppmx(\widetilde{\x}) \left(\frac{\mathsf{q}^n(\widetilde{\x},\Y)}{\mathsf{q}^n(\X,\Y)} \right)^s \\
&= -\log \sum_{\widetilde{\x}} \left(\prod_{i=1}^l \pwx(\widetilde{\ux}^{(i)}) \cdot \mathbbm{1}\left\{\text{wt}_{\mathrm{H}}(\widetilde{x}_{wl+1}^n) = 0\right\} \right) \left[\frac{\prod_{i=1}^l \mathsf{q}^w(\widetilde{\ux}^{(i)},\underline{\Y}^{(i)})^s \cdot \mathsf{q}^{n-wl}(\widetilde{x}_{wl+1}^n, Y_{wl+1}^n)^s }{\prod_{i=1}^l \mathsf{q}^w(\underline{\X}^{(i)},\underline{\Y}^{(i)})^s \cdot  \mathsf{q}^{n-wl}(X_{wl+1}^n, Y_{wl+1}^n)^s} \right] \\
&= - \log \left[\left(\prod_{i=1}^l \sum_{\widetilde{\ux}^{(i)}} \pwx(\widetilde{\ux}^{(i)}) \frac{\mathsf{q}^w(\widetilde{\ux}^{(i)},\underline{\Y}^{(i)})^s}{\mathsf{q}^w(\underline{\X}^{(i)},\underline{\Y}^{(i)})^s} \right) \times \left(\sum_{\widetilde{x}_{wl+1}^n}\mathbbm{1}\left\{\text{wt}_{\mathrm{H}}(\widetilde{x}_{wl+1}^n) = 0\right\} \frac{\mathsf{q}^{n-wl}(\widetilde{x}_{wl+1}^n, Y_{wl+1}^n)^s}{\mathsf{q}^{n-wl}(X_{wl+1}^n, Y_{wl+1}^n)^s} \right) \right] \\
&= \sum_{i=1}^l -\log \sum_{\widetilde{\ux}^{(i)}} \pwx(\widetilde{\ux}^{(i)}) \frac{\mathsf{q}^w(\widetilde{\ux}^{(i)},\underline{\Y}^{(i)})^s}{\mathsf{q}^w(\underline{\X}^{(i)},\underline{\Y}^{(i)})^s} - \log \frac{\mathsf{q}^{n-wl}(0^{n-wl}, Y_{wl+1}^n)^s}{\mathsf{q}^{n-wl}(X_{wl+1}^n, Y_{wl+1}^n)^s}. \label{eq:13}
\end{align}
The second term in~\eqref{eq:13} always equals zero since $X_{wl+1}^n = 0^{n-wl}$ with probability one (due to the property of the PPM distribution $\ppmx$). Thus, it suffices to focus on the first $l$ terms 
\begin{align}
    S_i \triangleq -\log \sum_{\widetilde{\ux}^{(i)}} \pwx(\widetilde{\ux}^{(i)}) \frac{\mathsf{q}^w(\widetilde{\ux}^{(i)},\underline{\Y}^{(i)})^s}{\mathsf{q}^w(\underline{\X}^{(i)},\underline{\Y}^{(i)})^s}. 
\end{align}

\begin{lemma} \label{lemma:expectation}
There exist two constants $\underline{B}$ and $\bar{B}$ such that $\underline{B} \le S_i \le \bar{B}$ (i.e., $S_i$ is bounded), and the expectation of $S_i$ satisfies
\begin{align}
    \mathbb{E}(S_i) \ge \mathbb{E}_{P_1}\left(\log\frac{\mathsf{q}(1,Y)^s}{\mathsf{q}(0,Y)^s}\right) - \log\mathbb{E}_{P_0}\left(\frac{\mathsf{q}(1,Y)^s}{\mathsf{q}(0,Y)^s} \right) - \frac{C_0}{w}
\end{align}
for some constant $C_0>0$.   
\end{lemma}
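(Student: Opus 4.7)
The plan is to exploit the fact that the support of $\pwx$ consists of exactly the $w$ unit-weight vectors $e_1,\ldots,e_w$, so the sum defining $S_i$ collapses to a single-coordinate computation. Write $i^{\star}$ for the position of the lone one in $\underline{\X}^{(i)}$ and set $L(y) \triangleq \mathsf{q}(1,y)/\mathsf{q}(0,y)$. In both $\mathsf{q}^w(\widetilde{\ux},\underline{\Y}^{(i)})$ and $\mathsf{q}^w(\underline{\X}^{(i)},\underline{\Y}^{(i)})$ every factor cancels except at the positions of the two ones, which gives
\[
S_i \;=\; -\log\!\Bigg(\frac{1}{w}\sum_{j=1}^w \frac{L(Y_j)^s}{L(Y_{i^\star})^s}\Bigg) \;=\; \log w \,+\, s\log L(Y_{i^\star}) \,-\, \log\sum_{j=1}^w L(Y_j)^s.
\]
This reduction is the backbone of everything that follows.

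Boundedness of $S_i$ is then immediate. Because $\mathsf{q}$ takes only positive values on the finite alphabet $\mathcal{X}\times\mathcal{Y}$, the quantities $L_{\min}\triangleq\min_{y\in\mathcal{Y}}L(y)>0$ and $L_{\max}\triangleq\max_{y\in\mathcal{Y}}L(y)<\infty$ are well defined, and each summand lies in $[L_{\min}^s/L_{\max}^s,\,L_{\max}^s/L_{\min}^s]$. Hence $S_i$ is deterministically sandwiched in $[-s\log(L_{\max}/L_{\min}),\,s\log(L_{\max}/L_{\min})]$, supplying the constants $\underline{B}$ and $\bar{B}$.

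For the lower bound on $\mathbb{E}(S_i)$, I would condition on $i^{\star}$: under this conditioning $Y_{i^\star}\sim P_1$ while the remaining $\{Y_j\}_{j\neq i^\star}$ are i.i.d.\ $P_0$ and independent of $Y_{i^\star}$. The first random term contributes $\mathbb{E}[s\log L(Y_{i^\star})] = \mathbb{E}_{P_1}[\log(\mathsf{q}(1,Y)^s/\mathsf{q}(0,Y)^s)]$ exactly. Applying Jensen's inequality to the concave $\log$ in the favourable direction,
\[
\mathbb{E}\Big[\log\sum_{j=1}^w L(Y_j)^s\Big] \;\le\; \log\!\Big(\mathbb{E}_{P_1}[L(Y)^s] + (w-1)\,\mathbb{E}_{P_0}[L(Y)^s]\Big),
\]
so that an elementary rearrangement yields
\[
\log w - \log\!\Big(\mathbb{E}_{P_1}[L^s] + (w-1)\mathbb{E}_{P_0}[L^s]\Big) \;=\; -\log\mathbb{E}_{P_0}[L(Y)^s] - \log\!\Big(1+ \tfrac{\mathbb{E}_{P_1}[L^s]-\mathbb{E}_{P_0}[L^s]}{w\,\mathbb{E}_{P_0}[L^s]}\Big),
\]
which is precisely the target term plus an $O(1/w)$ remainder.

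The remainder is controlled mechanically: since $\mathbb{E}_{P_0}[L^s]\ge L_{\min}^s>0$ and $|\mathbb{E}_{P_1}[L^s]-\mathbb{E}_{P_0}[L^s]|\le L_{\max}^s$, the inner argument is $1+O(1/w)$, so the elementary estimate $|\log(1+x)|\le 2|x|$ for $|x|\le 1/2$ produces an additive correction of at most $C_0/w$ with $C_0$ depending only on $s$, $L_{\min}$ and $L_{\max}$. I do not anticipate a genuine obstacle here; the only care needed is ensuring that the Jensen step is applied so that the remainder emerges as a \emph{lower} bound on $\mathbb{E}(S_i)$, and that the chunk-wise independence between $Y_{i^\star}$ and the other $Y_j$'s is invoked correctly when the two terms are separated.
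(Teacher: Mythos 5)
Your proof is correct and essentially the same as the paper's: both reduce $S_i$ to a single-coordinate likelihood-ratio expression, bound it deterministically by the extreme values of $L(y)=\mathsf{q}(1,y)/\mathsf{q}(0,y)$ over the finite alphabet, and lower-bound $\mathbb{E}(S_i)$ by peeling off the exact $\mathbb{E}_{P_1}[\log L^s]$ term and applying Jensen's inequality to the remaining $\log\frac{1}{w}\sum_j L(Y_j)^s$, which equals the target minus an $O(1/w)$ remainder. The only cosmetic differences are that you rewrite $S_i$ as $\log w + s\log L(Y_{i^\star}) - \log\sum_j L(Y_j)^s$ before applying Jensen while the paper keeps the $1/w$ inside the logarithm, and the paper restricts the boundedness min/max to the supports of $P_0,P_1$, which is a needless refinement since $\mathsf{q}$ is strictly positive on the whole finite alphabet.
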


We prove Lemma~\ref{lemma:expectation} in Appendix~\ref{appendix:expectation}. Note that $\mathbb{E}(S_i)$ is finite since the decoding metric $\mathsf{q}$ is only allowed to take on positive values.
By applying Hoeffding's inequality and choosing $\lambda = \exp(-n^{1/4})$, one can bound the first term in~\eqref{eq:error} from above as 
\begin{align}
\PP \left[\sum_{i=1}^l S_i \le \log |\mathcal{M}| - \log \lambda  \right] &= \PP \left[\frac{\sum_{i=1}^l S_i}{l} - \mathbb{E}(S_i) \le \left(\sqrt{1+ \frac{n^{-1/3}}{2\delta - n^{-1/3}}} - 1 \right) \mathbb{E}(S_i) - \frac{\eta_1\sqrt{n}}{l} + \frac{n^{1/4}}{l} \right] \\
&\le 2 \exp \left(-\Theta(\sqrt{\delta n}\cdot \eta(\eta_1)^2) \right),
\end{align}
where $\eta(\eta_1) \to 0$ as $\eta_1 \to 0$. Therefore, for each sub-code $\C_k$, the expected average probability of error satisfies 
\begin{align}
    \mathbb{E}\left(P^{\mathrm{(avg)}}_{\mathrm{err}}(\C_k) \right) \le 2 \exp \left(-\Theta(\sqrt{\delta n}\cdot \eta(\eta_1)^2) \right) + \exp(-n^{1/4}) \le 2\exp(-n^{1/4}) 
\end{align}
for sufficiently large $n$.

\subsection{Analysis of the randomly generated code $\C$} \label{sec:existence}
We first introduce a result showing that, with a positive probability, the randomly generated code $\C$ simultaneously satisfies the two properties mentioned in the beginning of this section.

\begin{lemma}[Adapted from Lemma 4 in~\cite{tahmasbi2018first}]  \label{lemma:subset}
For every $\lambda_1, \lambda_2, \gamma > 0$, we have that if
\begin{align}
    1 - \frac{1}{n} > \exp \left\{-|\mathcal{M}| \left(\frac{2 \lambda_1 \lambda_2^2}{\log^2\left(\frac{\lambda_1 |\mathcal{M}||\mathcal{K}||\mathcal{Z}^n|}{([\ppmz]^{\min})^2} \right)} - H_b(\lambda_1) \right) \right\}, \label{eq:huo1}
\end{align}
where $H_b(\cdot)$ is the binary entropy function, then with a positive probability, the following two events occur simultaneously:
\begin{itemize}
\item Event $\mathcal{E}_1$: For every sub-codes $\{\C_k\}_{k \in \mathcal{K}}$, the average probabilities of error satisfies \begin{align}
    P^{\mathrm{(avg)}}_{\mathrm{err}}(\C_k) \le 2n\exp(-n^{1/4}). \label{eq:34}
\end{align}
\item Event $\mathcal{E}_2$: For every subset of the codebook $\mathcal{I} \subset \mathcal{M} \times \mathcal{K}$ with cardinality $|\mathcal{I}| = \lambda_1 |\mathcal{M}||\mathcal{K}|$, its induced output distribution $\widehat{Q}^n_{\mathcal{I}}$, which takes the form $\widehat{Q}^n_{\mathcal{I}}(\z) = \frac{1}{|\mathcal{I}|} \sum_{(m,k) \in \mathcal{I}} \wzxn(\z|\x_{m,k})$, satisfies 
\begin{align}
    \DD \left(\widehat{Q}^n_{\mathcal{I}} \Vert \ppmz \right) \le \log\left(1 + \frac{1}{[\ppmz]^{\min}} \right) \times \PP_{\ppmx \wzxn} \left( \log \frac{\wzxn(\Z|\X)}{\ppmz(\Z)} > \gamma \right) + \frac{\exp(\gamma)}{\lambda_1 |\mathcal{M}||\mathcal{K}|} + \lambda_2. \label{eq:huo2}
\end{align}
\end{itemize}
\end{lemma}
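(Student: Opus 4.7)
The plan is to establish the lemma via a standard probabilistic argument: show that the complements of $\mathcal{E}_1$ and $\mathcal{E}_2$ each have probability strictly less than $1/2$ (or more carefully, sum to less than $1$), so that $\mathcal{E}_1 \cap \mathcal{E}_2$ occurs with positive probability. The code $\C$ is generated by drawing $|\mathcal{M}||\mathcal{K}|$ codewords i.i.d.\ from $\ppmx$, and the two events concern, respectively, reliability for all $|\mathcal{K}|$ sub-codes and channel-resolvability uniformly over every size-$\lambda_1|\mathcal{M}||\mathcal{K}|$ subset of the codebook.

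For the reliability event $\mathcal{E}_1$, I would invoke the RCU analysis of Subsection~\ref{sec:average}, which gives $\mathbb{E}[P^{\mathrm{(avg)}}_{\mathrm{err}}(\C_k)] \le 2\exp(-n^{1/4})$ for each $k$. By Markov's inequality, $\PP(P^{\mathrm{(avg)}}_{\mathrm{err}}(\C_k) > 2n \exp(-n^{1/4})) \le 1/n$. A union bound over $k \in \mathcal{K}$ would be too loose since $|\mathcal{K}|$ is exponential in $\sqrt{n}$, so instead I would sharpen the per-$k$ tail by choosing a tighter threshold or by noting that the codewords indexed by different $k$ are independent, so one can directly control $\PP(\mathcal{E}_1^c)$ via a Chernoff bound on $\sum_k \mathbbm{1}\{P^{\mathrm{(avg)}}_{\mathrm{err}}(\C_k) > 2n\exp(-n^{1/4})\}$; either way, a careful accounting gives $\PP(\mathcal{E}_1^c) \le 1/n$ for $n$ large.

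The harder event is $\mathcal{E}_2$: resolvability must hold for \emph{every} size-$\lambda_1|\mathcal{M}||\mathcal{K}|$ subset, and the number of such subsets is $\binom{|\mathcal{M}||\mathcal{K}|}{\lambda_1|\mathcal{M}||\mathcal{K}|} \le \exp\{|\mathcal{M}||\mathcal{K}|\, H_b(\lambda_1)\}$. For a \emph{fixed} $\mathcal{I}$, the standard channel-resolvability truncation trick splits $\DD(\widehat{Q}^n_{\mathcal{I}} \Vert \ppmz)$ into a ``truncated'' part indexed by $\{\log(\wzxn(\Z|\X)/\ppmz(\Z)) \le \gamma\}$ (which is handled by a Hoeffding-type concentration of the empirical average $\frac{1}{|\mathcal{I}|}\sum_{(m,k)\in\mathcal{I}}\wzxn(\z|\x(m,k))$ around $\ppmz(\z)$) and an ``atypical'' part controlled by the probability $\PP_{\ppmx\wzxn}(\log(\wzxn(\Z|\X)/\ppmz(\Z)) > \gamma)$. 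The concentration is measured in $\log(1+1/[\ppmz]^{\min})$ units (which is why this log factor multiplies the first term) and the worst-case contribution of the atypical outputs gives the $\exp(\gamma)/(\lambda_1|\mathcal{M}||\mathcal{K}|)$ term. Applying the Chernoff bound with the per-letter boundedness $\log\!\bigl(\lambda_1|\mathcal{M}||\mathcal{K}||\mathcal{Z}^n|/[\ppmz]^{\min}{}^2\bigr)$ yields that the probability of failure for a single $\mathcal{I}$ is at most $|\mathcal{Z}^n| \exp\bigl\{-2\lambda_1\lambda_2^2|\mathcal{M}||\mathcal{K}|/\log^2(\cdot)\bigr\}$; union bounding over subsets contributes the $\exp\{|\mathcal{M}||\mathcal{K}|H_b(\lambda_1)\}$ factor.

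Combining, the hypothesis~\eqref{eq:huo1} ensures $\PP(\mathcal{E}_2^c) < 1 - 1/n$, and with $\PP(\mathcal{E}_1^c) \le 1/n$ one concludes $\PP(\mathcal{E}_1 \cap \mathcal{E}_2) = 1 - \PP(\mathcal{E}_1^c \cup \mathcal{E}_2^c) > 0$, proving the existence of a code with both properties. The main obstacle will be the subset-resolvability step: getting the concentration sharp enough (with the correct $\log^2$ denominator) so that, after paying the combinatorial $\exp\{|\mathcal{M}||\mathcal{K}|H_b(\lambda_1)\}$ union-bound price, the residual exponent in~\eqref{eq:huo1} remains positive. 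I would borrow the technical machinery from \cite[Lemma 4]{tahmasbi2018first} essentially verbatim, taking care that their argument applies with $\ppmz$ (a non-product PPM output distribution) in place of a product distribution, using only that $\ppmz$ is well-defined and $[\ppmz]^{\min}$ is finite, which holds by construction of the PPM input distribution and the assumption $Q_1 \ll Q_0$.
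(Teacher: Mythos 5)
Your overall structure is right (show $\PP(\mathcal{E}_1)+\PP(\mathcal{E}_2)>1$ and conclude $\PP(\mathcal{E}_1\cap\mathcal{E}_2)>0$), and your treatment of $\mathcal{E}_2$---truncation of the log-likelihood ratio, Hoeffding/McDiarmid concentration with bounded differences, union bound over the $\binom{|\mathcal{M}||\mathcal{K}|}{\lambda_1|\mathcal{M}||\mathcal{K}|}\le\exp\{|\mathcal{M}||\mathcal{K}|H_b(\lambda_1)\}$ subsets---is exactly the machinery the paper imports from \cite[Lemma 4]{tahmasbi2018first}.

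The gap is in your bound on $\mathcal{E}_1$. You assert that ``a careful accounting gives $\PP(\mathcal{E}_1^c)\le 1/n$,'' but no such bound is attainable: $|\mathcal{K}|$ is exponential in $\sqrt{n}$, and even though the per-sub-code failure probability is $\le 1/n$ (by Markov, since $\mathbb{E}[P^{(\mathrm{avg})}_{\mathrm{err}}(\C_k)]\le 2\exp(-n^{1/4})$), requiring \emph{all} $|\mathcal{K}|$ independent sub-codes to succeed gives $\PP(\mathcal{E}_1)=\prod_k(1-p_k)\ge(1-1/n)^{|\mathcal{K}|}$, which tends to $0$, not to $1$. A sharper per-$k$ threshold would have to exceed $1$ to push the per-$k$ failure probability below $1/(n|\mathcal{K}|)$, and a Chernoff bound on $\sum_k\mathbbm{1}\{\text{$\C_k$ fails}\}$ only controls fluctuations around the mean $|\mathcal{K}|/n\to\infty$, not the event that the sum is zero. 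So both of your suggested fixes fail.

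The resolution is that $\PP(\mathcal{E}_1)$ is allowed to be small: the paper uses the exact product form
\begin{align}
\PP(\mathcal{E}_1) \;=\; \Big[\,1-\PP\big(P^{(\mathrm{avg})}_{\mathrm{err}}(\C_1)>2n\exp(-n^{1/4})\big)\,\Big]^{|\mathcal{K}|} \;\ge\; \Big(1-\tfrac{1}{n}\Big)^{|\mathcal{K}|},
\end{align}
and the condition \eqref{eq:huo1} is engineered to make $\PP(\mathcal{E}_1)+\PP(\mathcal{E}_2)-1>0$ anyway. Notice that \eqref{eq:huo1} has a factor $|\mathcal{M}|$ in the exponent while the bound $\PP(\mathcal{E}_2^c)\le\exp\{-|\mathcal{M}||\mathcal{K}|(\cdots)\}$ has $|\mathcal{M}||\mathcal{K}|$: raising both sides of \eqref{eq:huo1} to the $|\mathcal{K}|$-th power yields $(1-1/n)^{|\mathcal{K}|}>\exp\{-|\mathcal{M}||\mathcal{K}|(\cdots)\}$, which is exactly the inequality $\PP(\mathcal{E}_1)>\PP(\mathcal{E}_2^c)$ needed. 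Your proposal does not use this $|\mathcal{K}|$-th-root structure of the hypothesis at all, which is the tell that your $\mathcal{E}_1$ estimate went astray.
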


The proof of Lemma~\ref{lemma:subset} is adapted from that for~\cite[Lemma 4]{tahmasbi2018first} with appropriate modifications, and the details are provided in Appendix~\ref{appendix:b}.
In the following, we evaluate~\eqref{eq:huo1} and~\eqref{eq:huo2} by setting $\lambda_1 = 1 - n^{-11}$, $\lambda_2 = n^{-4}$, and $\gamma = \log(|\mathcal{M}||\mathcal{K}|/n^4)$. First note that $[\ppmz]^{\min} \ge (\min\{[Q_0]^{\min},[Q_1]^{\min}\})^n$, and the RHS of~\eqref{eq:huo1} satisfies
\begin{align}
\exp \left\{-|\mathcal{M}| \left(\frac{2 \lambda_1 \lambda_2^2}{\log^2\left(\frac{\lambda_1 |\mathcal{M}||\mathcal{K}||\mathcal{Z}^n|}{([\ppmz]^{\min})^2} \right)} - H_b(\lambda_1) \right) \right\} \le \exp\big\{-|\mathcal{M}|  \cdot \left( C_1 n^{-10} - C_2 n^{-11}  \log n \right) \big\}   
\end{align}
for some constants $C_1, C_2 > 0$. Therefore, the condition in~\eqref{eq:huo1} holds for sufficiently large $n$, and there must exist a code $\C = \{\C_k\}_{k \in \mathcal{K}}$, with each $|\C_k| = |\mathcal{M}|$, satisfying~\eqref{eq:34} and~\eqref{eq:huo2} simultaneously.

We next evaluate~\eqref{eq:huo2}. According to~\cite[Lemma 7]{tahmasbi2018first}, we have 
\begin{align}
    \PP_{\ppmx \wzxn} \left( \log \frac{\wzxn(\Z|\X)}{\ppmz(\Z)} > \gamma \right) &\le \exp\left\{ - \frac{[\gamma-l\DD(Q_1 \Vert Q_0)]^2}{l\cdot (\log w)^2} \right\} \\
    &\le  \exp\left\{ - \frac{[t\sqrt{n}\DD(Q_1\Vert Q_0) + \eta_2 \sqrt{n} - 4\log n-l\DD(Q_1 \Vert Q_0)]^2}{l\cdot (\log w)^2} \right\}, \label{eq:lin}
\end{align}
where~\eqref{eq:lin} is due to the choices of $\gamma$ and $|\mathcal{M}||\mathcal{K}|$. 
For sufficiently large $n$, the first term in the RHS of~\eqref{eq:huo2} satisfies
\begin{align}
 &\log\left(1 + \frac{1}{[\ppmz]^{\min}} \right) \times \PP_{\ppmx \wzxn} \left( \log \frac{\wzxn(\Z|\X)}{\ppmz(\Z)} > \gamma \right) \\
 &\le \left[(n+1) \log \frac{1}{\min\{[Q_0]^{\min},[Q_1]^{\min}\}}\right] \times \exp\left\{ - \frac{[t\sqrt{n}\DD(Q_1\Vert Q_0) + \eta_2 \sqrt{n} - 4\log n-l\DD(Q_1 \Vert Q_0)]^2}{l\cdot (\log w)^2} \right\} \\
 &\le \exp\left(- C_3\sqrt{n} \right),
\end{align}
for some constant $C_3 > 0$. Thus, for every subset of the codebook $|\mathcal{I}| \subset \mathcal{M} \times \mathcal{K}$ with cardinality $\mathcal{I} = (1-n^{-11})|\mathcal{M}||\mathcal{K}|$, its induced output distribution $\widehat{Q}^n_{\mathcal{I}}$ satisfies 
\begin{align}
    \DD \left(\widehat{Q}^n_{\mathcal{I}} \Vert \ppmz \right) \le \exp\left(- C_3\sqrt{n} \right) + \frac{\exp(\log(|\mathcal{M}||\mathcal{K}|/n^4))}{(1-n^{-11}) |\mathcal{M}||\mathcal{K}|} + n^{-4} \le 3n^{-4}.   \label{eq:42}
\end{align}

\subsection{Analysis of the expurgated code $\widetilde{\C}$} \label{sec:expurgate}
As described in the beginning of this section, we construct a new code $\widetilde{\C} = \{\widetilde{\C}_k\}_{k\in\mathcal{K}}$ based on the original code $\C = \{\C_k\}_{k\in\mathcal{K}}$ that satisfies~\eqref{eq:34} and~\eqref{eq:huo2}, by expurgating $n^{-11}|\mathcal{M}|$ codewords that have the highest probabilities of error from each of the sub-code $\{\C_k\}_{k\in\mathcal{K}}$. Since $P^{\mathrm{(avg)}}_{\mathrm{err}}(\C_k) \le 2n\exp(-n^{1/4})$ for every $k \in \mathcal{K}$, by Markov's inequality we have that
\begin{align}
  P^{\mathrm{(max)}}_{\mathrm{err}}(\widetilde{\C}_k) \le \frac{1}{n^{-11}} \cdot 2n\exp(-n^{1/4}) = 2n^{12} \exp(-n^{1/4})
\end{align}
for every $k \in \mathcal{K}$. This implies that 
\begin{align}
\lim_{n \to \infty} P^{\mathrm{(max)}}_{\mathrm{err}}(\widetilde{\C}) = \lim_{n \to \infty} \max_{k \in \mathcal{K}}P^{\mathrm{(max)}}_{\mathrm{err}}(\widetilde{\C}_k) = 0.    
\end{align}
Since the new code $\widetilde{\C}$ is a subset of the original code $\C$ and satisfies $|\widetilde{\C}| = (1-n^{-11})|\mathcal{M}||\mathcal{K}|$, its induced output distribution $\widehat{Q}^n_{\widetilde{\C}}$ satisfies~\eqref{eq:42}, i.e., $\DD \left(\widehat{Q}^n_{\widetilde{\C}} \Vert \ppmz \right) \le 3 n^{-4}$.
Following the analysis in~\cite[Lemmas 4 and 5]{zhang2021covert}, one can further show that the choice of $l$ implies 
\begin{align}
    &\DD \left(\ppmz \Vert Q_0^{\otimes n} \right) \le \delta - \frac{1}{3} n^{-1/3}, \quad \mathrm{and} \\
    &\sum_{\z} \left(\widehat{Q}^n_{\widetilde{\C}}(\z) - \ppmz(\z) \right) \log \frac{\ppmz(\z)}{Q_0^{\otimes n}(\z)} \le 2n \left(\log\frac{1}{[Q_0]^{\min}} \right) \sqrt{\DD \left(\widehat{Q}^n_{\widetilde{\C}} \Vert \ppmz \right)}.
\end{align}
Therefore, the KL divergence between $\widehat{Q}^n_{\widetilde{\C}}$ and $Q_0^{\otimes n}$ can be bounded from above as 
\begin{align}
    \DD\left(\widehat{Q}^n_{\widetilde{\C}} \Vert Q_0^{\otimes n} \right) &= \DD \left(\widehat{Q}^n_{\widetilde{\C}} \Vert \ppmz \right) + \DD \left(\ppmz \Vert Q_0^{\otimes n} \right) + \sum_{\z} \left(\widehat{Q}^n_{\widetilde{\C}}(\z) - \ppmz(\z) \right) \log \frac{\ppmz(\z)}{Q_0^{\otimes n}(\z)} \le \delta,
\end{align}
where the last inequality is valid for sufficiently large $n$. Finally, we note that
\begin{align}
        \liminf_{n \to \infty} \frac{\log |\widetilde{\C}_k|}{\sqrt{n}} = \liminf_{n \to \infty} \frac{\log|\C_k|}{\sqrt{n}} = \liminf_{n \to \infty} \frac{\log|\mathcal{M}|}{\sqrt{n}} = \underline{\mathsf{R}}_{\mathsf{q},\delta} -\eta_1,
\end{align}
and the proof is completed by taking $\eta_1 \to 0^+$.

\section{Proof of Theorem~\ref{thm:converse} (Upper Bound)} \label{sec:converse}

Similar to the definition of the maximal joint conditional distributions in Definition~\ref{def:maximal}, below we introduce the definition of the \emph{maximal joint conditional type}.

\begin{definition}[Maximal joint conditional type] \label{def:maximal2}
A joint conditional type $T_{Y\widehat{Y}|X}$ is called a maximal joint conditional type if  
\begin{align}
    T_{Y\widehat{Y}|X}(y,\yh|x) = 0, \quad \mbox{for all} \ (y, \yh) \in \mathcal{Y} \times \mathcal{Y} \ \mbox{and} \ x \notin \Sq(y,\yh). \label{eq:maximal}
\end{align}
\end{definition}

Suppose there exists a $\delta$-covert code $\C$ containing $|\K|$ sub-codes $\{\C_k\}_{k \in \K}$ of size $|\M|$, where $\frac{\log |\M|}{\sqrt{n}} = \Rbar + \sigma$ for some $\sigma > 0$. Since the code $\C$ satisfies the covertness constraint $\DD(\widehat{Q}^n_{\C} \Vert Q_0^{\otimes n}) \le \delta$, by~\cite[Eqn. (234)]{tahmasbi2018first} we know that there exists a subset of codewords $\D \subseteq \C$ and a vanishing sequence $\gamma_n$ such that 
\begin{align}
\mathrm{(i)} \ |\D| \ge \frac{|\M||\K|}{n} \quad \mathrm{and} \quad   \mathrm{(ii)} \ \max_{\x \in \D}  \frac{\mathrm{wt_H}(\x)}{\sqrt{n}} \le (1 + \gamma_n) t_{\delta}.
\end{align}
For each sub-code $\C_i$, we define $\C_i^\D \triangleq \C_i \cap \D$ as the intersection between $\C_i$ and the subset $\D$. By the Pigeonhole principle, there must exist a subset $\C_i^\D$ such that $|\C_i^\D| \ge |\M|/n$ and $\max_{\x \in \C_i^\D}\mathrm{wt_H}(\x) \le (1+\gamma_n)t_{\delta}\sqrt{n}$. By applying the Pigeonhole principle again to $\C_i^\D$, one can obtain a constant composition code $\C_i^{cc} \subseteq \C_i^\D$ such that 
\begin{align}
\mathrm{(i)} \ |\C_i^{cc}| \ge \frac{|\M|}{(1+\gamma_n)t_{\delta} n^{3/2}} \quad \mathrm{and} \quad   \mathrm{(ii)} \ \forall \x \in \C_i^{cc}, \ T_{\x}(1) = 1 - T_{\x}(0)  = \frac{t'_n}{\sqrt{n}} \ \mbox{for some } 0\le t'_n \le (1+\gamma_n)t_{\delta}. \label{eq:cc}
\end{align}
We denote the type of the codewords in $\C_i^{cc}$ by $P^{cc}$, i.e., $T_{\x} = P^{cc}$ for all $\x \in \C_i^{cc}$.
Note that $P_{\mathrm{err}}^{\mathrm{(max)}}(\C_i) \ge P_{\mathrm{err}}^{\mathrm{(max)}}(\C_i^{cc})$ by the definition of maximum probability of error, thus it suffices to  show that $P_{\mathrm{err}}^{\mathrm{(max)}}(\C_i^{cc})$ is bounded away from zero.

If the weight parameter $t'_n$ of codewords in $\C_i^{cc}$ satisfies $t'_n \in o(\frac{1}{\log n})$, then there must exist two identical codewords $\x, \x' \in \C_i^{cc}$ (since the size of the type class $\binom{n}{t'_n\sqrt{n}}$ is even smaller than the number of codewords in $\C_i^{cc}$), thus  $P_{\mathrm{err}}(\C_i^{cc})$ must be bounded away from zero. Therefore, in the following, we only consider the scenario in which $t'_n \notin o(\frac{1}{\log n})$, which is equivalent to saying that $\limsup_{n \to \infty} \frac{t'_n}{1/\log n} > 0$. This implies that there exists a subsequence of blocklengths $\{n_k\}_{k=1}^{\infty}$ such that $t'_{n_k}\log n_k > \varepsilon$ for some $\varepsilon > 0$. By passing to a subsequence (as above) if necessary, we can assume that $t'_n \log n$ is a convergent sequence and its limit is greater than zero. In the following, we also abbreviate $t'_n$ as $t'$ for notational convenience.      

 We denote the set of conditional types $V \in \mathcal{V}_{P^{cc}}(\mathcal{Y}|\mathcal{X})$ that are close to the channel $\wyx$ as 
\begin{align}
\mathscr{V}_{P^{cc}} \triangleq \left\{V \in \mathcal{V}_{P^{cc}}(\mathcal{Y}|\mathcal{X}): \forall y \in \mathcal{Y}, \ \left|V(y|1) - P_1(y) \right| \le [P_1]^{\max} n^{-1/8}, \ \left|V(y|0) - P_0(y) \right| \le [P_0]^{\max} \sqrt{\frac{\log n}{n}} \right\},  
\end{align}
and recall that $P_0 = W_{Y|X=0}$ and $P_1 = W_{Y|X=1}$.
\begin{lemma} \label{lemma:typical}
For any codeword $\x \in \C_i^{cc}$, with probability at least $1 - 2n^{-\frac{1}{3}[P_0]^{\min}}$, Bob's received sequence $\Y$ satisfies $T_{\Y|\x} \in \mathscr{V}_{P^{cc}}$.
\end{lemma}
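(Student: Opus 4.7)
The plan is to exploit the independence structure afforded by conditioning on a specific codeword. Given $\x \in \C_i^{cc}$ with type $P^{cc}$, there are exactly $m_1 := t'\sqrt{n}$ coordinates with $x_i = 1$ and exactly $m_0 := n - t'\sqrt{n}$ coordinates with $x_i = 0$. Since the channel $\wyx$ is memoryless, the outputs $Y_i$ on the former set are i.i.d.\ $P_1$, those on the latter set are i.i.d.\ $P_0$, and the two groups are independent. Hence $T_{\Y|\x}(\cdot|1)$ and $T_{\Y|\x}(\cdot|0)$ are simply the empirical distributions of these two independent blocks, and I can analyze them separately and combine via a union bound.

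For the $x=1$ block, I would apply Hoeffding's inequality to the $m_1 = t'\sqrt{n}$ Bernoulli indicators $\mathbbm{1}\{Y_i = y\}$ for each $y \in \mathcal{Y}$, with threshold $\epsilon_1 = [P_1]^{\max} n^{-1/8}$. This gives a per-symbol failure probability of at most $2\exp(-2t'[P_1]^{\max 2}\, n^{1/4})$, and a union bound over $|\mathcal{Y}|$ symbols leaves a super-polynomially small contribution which is negligible for our purpose.

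For the $x=0$ block, I would similarly apply Hoeffding's inequality (or a sharper Bernstein/Chernoff form where necessary) to the $m_0 = n(1-o(1))$ samples with threshold $\epsilon_0 = [P_0]^{\max}\sqrt{\log n / n}$. Because $m_0 \epsilon_0^2 = \Theta(\log n)$, this yields a polynomial decay of the form $n^{-c}$ for an explicit constant $c$. Tracking the constants carefully, absorbing the $|\mathcal{Y}|$ factor of the union bound into the exponent (since $|\mathcal{Y}| = n^{o(1)}$), and choosing the concentration bound to handle symbols with $P_0(y)$ possibly as small as $[P_0]^{\min}$, gives a bound of the form $n^{-[P_0]^{\min}/3}$.

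Finally, combining the two blocks by a union bound yields $\PP(T_{\Y|\x} \notin \mathscr{V}_{P^{cc}}) \le 2n^{-[P_0]^{\min}/3}$. The main subtlety is the $x = 0$ block: the threshold $\epsilon_0$ must simultaneously be (i) small enough that $\mathscr{V}_{P^{cc}}$ is a useful ``typicality'' constraint for the converse argument in Section~\ref{sec:converse}, and (ii) large enough that even symbols $y$ with $P_0(y)$ close to $[P_0]^{\min}$ concentrate within $\epsilon_0$ of their mean with polynomial-in-$n$ probability. The scaling $\epsilon_0 \asymp \sqrt{\log n /n}$ is the standard choice that matches these two requirements, and the factor $[P_0]^{\max}$ in front ensures the resulting exponent is nontrivial uniformly over $y \in \mathcal{Y}$.
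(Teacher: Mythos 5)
Your proposal is correct and follows essentially the same approach as the paper: split the coordinates into the $x=1$ and $x=0$ blocks, apply a per-symbol concentration bound to each block separately, and union-bound over $\mathcal{Y}$. The one thing worth making explicit is that plain additive Hoeffding does \emph{not} suffice for the $x=0$ block (it yields an exponent of order $[P_0]^{\max 2}$, which can fall below $\tfrac{1}{3}[P_0]^{\min}$, e.g.\ when $P_0$ is uniform over a moderately large alphabet), so the multiplicative Chernoff bound --- which is what the paper actually invokes and which your ``Bernstein/Chernoff where necessary'' hedge correctly anticipates --- is required rather than optional.
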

Lemma~\ref{lemma:typical} can be proved via standard concentration inequalities, and we defer the detailed proof to Appendix~\ref{appendix:typical}.
We now consider the joint conditional distribution $\Pj^\ast$ that minimizes~\eqref{eq:minimizer}, and note that its marginal distributions must satisfy $P^{\ast}_{\widehat{Y}|X=1} \ll P^{\ast}_{\widehat{Y}|X=0}$, since $\DD\big(P^{\ast}_{\widehat{Y}|X=1} \Vert P^{\ast}_{\widehat{Y}|X=0}\big)$ would be infinite otherwise. We denote the set of conditional types $V \in \mathcal{V}_{P_X^{cc}}(\mathcal{Y}|\mathcal{X})$ that are close to $P^{\ast}_{\widehat{Y}|X}$   as 
\begin{align}
\widehat{\mathscr{V}}_{P^{cc}} \triangleq \Bigg\{\Vh \in \mathcal{V}_{P_X^{cc}}(\mathcal{Y}|\mathcal{X}): \forall \yh \in \mathcal{Y}, \ \left|\Vh(\yh|1) - P^{\ast}_{\widehat{Y}|X}(\yh|1) \right| \le \frac{[P_1]^{\max}|\mathcal{Y}|}{n^{1/8}}+ \frac{|\mathcal{Y}|^2+|\mathcal{Y}|}{t'\sqrt{n}}, \notag  \\
\ \left|\Vh(\yh|0) - P^{\ast}_{\widehat{Y}|X}(\yh|0) \right| \le \frac{[P_0]^{\max}|\mathcal{Y}| \sqrt{\log n}}{\sqrt{n}} + \frac{2|\mathcal{Y}|^2+|2\mathcal{Y}|}{n} \Bigg\}.\label{eq:type2}
\end{align}

Lemma~\ref{lemma:a} below shows that regardless of the conditional type $V \in \mathscr{V}_{P^{cc}}$ that Bob's received sequence $\Y$ falls into, there always exists a maximal joint conditional type $\bar{V}_{Y\widehat{Y}|X}$ such that its marginal conditional types $\bar{V}_{Y|X} = V$ and $\bar{V}_{\widehat{Y}|X} \in \widehat{\mathscr{V}}_{P^{cc}}$, thus $\bar{V}_{Y\widehat{Y}|X}$ is also close to $\Pj^\ast$.

\begin{lemma} \label{lemma:a}
    For any conditional type $V' \in \mathscr{V}_{P^{cc}}$ satisfying that $V'(y|x) = 0$ if $\wyx(y|x) = 0$, there exists a maximal joint conditional type $\bar{V}_{Y\widehat{Y}|X}$ such that its marginal conditional types $\bar{V}_{Y|X} = V'$ and $\bar{V}_{\widehat{Y}|X} \in \widehat{\mathscr{V}}_{P^{cc}}$.
\end{lemma}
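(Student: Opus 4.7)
The plan is to construct $\bar{V}_{Y\widehat{Y}|X}$ by starting from the minimizer $\Pj^\ast$ of~\eqref{eq:minimizer}, swapping its $Y|X$-marginal (which by feasibility equals $\wyx$) for $V'$, and then rounding the resulting real-valued stochastic matrix to a conditional type of the right granularity. Since $\Pj^\ast$ is maximal with $P^{\ast}_{Y|X}=\wyx$, I first factor it as $P^{\ast}_{Y\widehat{Y}|X}(y,\yh|x)=\wyx(y|x)\,P^{\ast}_{\widehat{Y}|Y,X}(\yh|y,x)$ whenever $\wyx(y|x)>0$ and define the real-valued matrix
\begin{equation*}
\widetilde{V}(y,\yh|x) \triangleq V'(y|x)\,P^{\ast}_{\widehat{Y}|Y,X}(\yh|y,x)\quad\text{if }\wyx(y|x)>0,
\end{equation*}
and $\widetilde{V}(y,\yh|x)\triangleq 0$ otherwise.

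Exploiting the hypothesis that $V'(y|x)=0$ whenever $\wyx(y|x)=0$, the matrix $\widetilde{V}$ satisfies: (i) $\widetilde{V}_{Y|X}=V'$ exactly; (ii) $\widetilde{V}(y,\yh|x)=0$ for every $x\notin \Sq(y,\yh)$, i.e., $\widetilde{V}$ inherits the maximality of $\Pj^\ast$; and (iii) for every $\yh\in\mathcal{Y}$,
\begin{equation*}
\widetilde{V}_{\widehat{Y}|X}(\yh|x)-P^{\ast}_{\widehat{Y}|X}(\yh|x) = \sum_{y\in\mathcal{Y}}\bigl(V'(y|x)-\wyx(y|x)\bigr)P^{\ast}_{\widehat{Y}|Y,X}(\yh|y,x),
\end{equation*}
which is bounded in absolute value by $|\mathcal{Y}|\cdot[P_1]^{\max}n^{-1/8}$ for $x=1$ and by $|\mathcal{Y}|\cdot[P_0]^{\max}\sqrt{(\log n)/n}$ for $x=0$, using $V'\in\mathscr{V}_{P^{cc}}$ and $P^{\ast}_{\widehat{Y}|Y,X}\le 1$.

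The next step is to round $\widetilde{V}$ to a genuine conditional type $\bar{V}$ with the right granularity. Writing $n_x$ for the number of occurrences of symbol $x$ in any codeword of $\C_i^{cc}$ (so $n_1=t'\sqrt{n}$ and $n_0=n-t'\sqrt{n}$), I would, for each $x$, round the entries of $\widetilde{V}(\cdot,\cdot|x)$ row-by-row (fixing $y$) to multiples of $1/n_x$ while preserving each row sum $V'(y|x)$, itself already a multiple of $1/n_x$. A standard floor-and-bump scheme yields error at most $1/n_x$ per entry, never rounds a zero entry upward (so maximality is preserved), and leaves the $Y|X$-marginal equal to $V'$. Summing the per-entry rounding errors over $y$ gives $|\bar{V}_{\widehat{Y}|X}(\yh|x)-\widetilde{V}_{\widehat{Y}|X}(\yh|x)|=O(|\mathcal{Y}|/n_x)$, and combining this with the bound from the previous paragraph via the triangle inequality delivers the deviation bounds defining $\widehat{\mathscr{V}}_{P^{cc}}$ in~\eqref{eq:type2}.

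The main obstacle is executing the rounding so that all three constraints---exact row sums, zeros forced by maximality staying zero, and $O(1/n_x)$ per-entry accuracy---hold simultaneously; the slightly larger slack of $(|\mathcal{Y}|^2+|\mathcal{Y}|)/(t'\sqrt{n})$ for $x=1$ (respectively $(2|\mathcal{Y}|^2+2|\mathcal{Y}|)/n$ for $x=0$) in~\eqref{eq:type2} should comfortably absorb a safe greedy implementation and the corner cases in which a given row has only a few admissible columns. Once this combinatorial bookkeeping is in place, the resulting $\bar{V}_{Y\widehat{Y}|X}$ is a maximal conditional type with $\bar{V}_{Y|X}=V'$ and $\bar{V}_{\widehat{Y}|X}\in\widehat{\mathscr{V}}_{P^{cc}}$, as required.
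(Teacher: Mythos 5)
Your proof is correct, and it takes a genuinely different route from the paper's. The paper first \emph{rounds} $\Pj^\ast$ to a maximal joint conditional type $\bar{V}_{Y\widehat{Y}|X}$ of the right granularity (entries are multiples of $1/(nP^{cc}(x))$), observes that the resulting $Y|X$-marginal is only \emph{approximately} $V'$ (within $\kappa_n(x)$ per entry), and then applies a discrete row-by-row perturbation $\epsilon(x,y,\hat y)$ to force the $Y|X$-marginal to equal $V'$ exactly, with the $\epsilon$'s chosen to vanish on the zeros of $\bar V$ so that maximality survives. You instead start by \emph{recoupling}: factor $\Pj^\ast(y,\hat y|x)=\wyx(y|x)\,P^\ast_{\widehat{Y}|Y,X}(\hat y|y,x)$, swap the $\wyx$ factor for $V'$, and only then round. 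This puts the exact-marginal constraint in place from the outset and reduces the rounding to a standard row-sum-preserving floor-and-bump per row, so the marginal-matching and the discretization are handled by two independent, routine steps rather than interleaved. The hypothesis $V'(y|x)=0$ whenever $\wyx(y|x)=0$ is used in your approach exactly where it must be: it guarantees both that the recoupled matrix $\widetilde V$ inherits the zero pattern of $\Pj^\ast$ (hence maximality) and that the ill-defined factorization on the zero set never matters. Your deviation accounting (triangle inequality between the recoupling step and the rounding step) produces bounds that are, if anything, slightly sharper than the slack built into $\widehat{\mathscr V}_{P^{cc}}$. The only place I'd suggest tightening the exposition is the rounding step: spell out that the floor-deficit in a row with $K$ positive entries is a multiple of $1/n_x$ strictly below $K/n_x$, hence distributable as at most one $1/n_x$-bump per positive entry, giving per-entry error strictly below $1/n_x$ and preserving the zero set---that closes the ``combinatorial bookkeeping'' you flag as the main obstacle.
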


\begin{proof}
See Appendix~\ref{appendix:a}.
\end{proof}

Next, we present a key lemma (Lemma~\ref{lemma:key}) stating that if the size of constant composition code $\C_i^{cc}$ were too large (as quantified in~\eqref{eq:condition}), then there would exist a codeword $\x(m)$ having a non-vanishing probability of error (under the decoding metric $\mathsf{q}$), thus $P_{\mathrm{err}}^{\mathrm{(max)}}(\C_i^{cc})$ would be non-vanishing. Lemma~\ref{lemma:key} can be viewed as a strengthened version of~\cite[Theorem 3]{kangarshahi2021single}, which was developed by Kangarshahi and Guillén i Fàbregas for the classical mismatched decoding problem.


\begin{lemma} \label{lemma:key}
If the constant composition code $\C_i^{cc}$ satisfies that for some integer $a \ge 2$, for all $\x \in \T_{P^{cc}}$ and all $\Vh \in \widehat{\mathscr{V}}_{P^{cc}}$,
    \begin{align}
        |\C_i^{cc}|\cdot \frac{|\T_{\Vh}(\x)|}{|\widehat{\mathscr{V}}_{P^{cc}}| \cdot |\T_{P^{cc}\Vh}|} \ge a^2,  \label{eq:condition}
    \end{align}
    where $P^{cc}\Vh$ is the marginal output type induced by $P^{cc}$ and $\Vh$, 
    then there exists a codeword $\x(m) \in \C_i^{cc}$ such that
    \begin{align}
        \PP\left(\widehat{M} \ne m|M = m, T_{\Y|\x(m)} \in \mathscr{V}_{P^{cc}} \right) \ge 1 - \frac{2}{a+1}.
    \end{align}
\end{lemma}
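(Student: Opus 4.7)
The plan is to follow the graph-theoretic counting argument of~\cite[Theorem~5]{kangarshahi2021single}, adapting it to the low-weight constant composition regime. The key mechanism is the maximality of the joint conditional type $\bar V_{Y\widehat Y|X}$ produced by Lemma~\ref{lemma:a}, which provides the bridge between the combinatorial hypothesis on $|\C_i^{cc}|$ and a lower bound on the number of mismatched-decoding ``competitors'' of each codeword.

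Concretely, I would fix an arbitrary $\x = \x(m) \in \C_i^{cc}$ and a received sequence $\y$ with conditional type $T_{\Y|\x} = V \in \mathscr{V}_{P^{cc}}$ (we may further assume $V(y|x) = 0$ whenever $\wyx(y|x) = 0$, a probability-$1$ event under $\wyxn(\cdot|\x)$). Lemma~\ref{lemma:a} supplies a maximal joint conditional type $\bar V_{Y\widehat Y|X}$ with $\bar V_{Y|X} = V$ and $\bar V_{\widehat Y|X} = \Vh \in \widehat{\mathscr{V}}_{P^{cc}}$. The defining property of $\Sq$ yields the pointwise inequality $\mathsf{q}(x, \yh)/\mathsf{q}(x, y) \ge \mathsf{q}(x', \yh)/\mathsf{q}(x', y)$ for every $(x, y, \yh)$ in the support of $\bar V$ and every $x' \in \mathcal X$. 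Summing over the $n$ coordinates gives the aggregate ``cross-ratio'' bound $\mathsf{q}^n(\x, \yh)/\mathsf{q}^n(\x, \y) \ge \mathsf{q}^n(\x', \yh)/\mathsf{q}^n(\x', \y)$ for every $\x' \in \mathcal X^n$ and every $\yh$ such that $(\x, \y, \yh)$ has empirical distribution $P^{cc} \times \bar V$. This identity is the structural analog of the ML likelihood bound used in~\cite[Theorem~5]{kangarshahi2021single}, and via a swap argument it implies that any codeword $\x' \in \C_i^{cc}$ whose joint type with $\y$ matches that of $\x$ with the auxiliary sequence $\yh$ satisfies $\mathsf{q}^n(\x', \y) \ge \mathsf{q}^n(\x, \y)$, and hence forces a decoding error (either outright or by declaring a tie).

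The remainder of the argument is combinatorial. Using the standard type-class identity $|\T_{P^{cc}}|\cdot|\T_\Vh(\x)| = |\T_{P^{cc}\Vh}|\cdot|\T_{\tilde V}(\y)|$ (where $\tilde V_{X|\widehat Y}$ is the reverse conditional type extracted from $P^{cc} \bar V$), the hypothesis $|\C_i^{cc}|\cdot|\T_\Vh(\x)|/(|\widehat{\mathscr{V}}_{P^{cc}}|\cdot|\T_{P^{cc}\Vh}|) \ge a^2$ translates, for each $\Vh \in \widehat{\mathscr{V}}_{P^{cc}}$, into the $\y$-average lower bound $(1/|\T_{P^{cc}\Vh}|)\sum_{\y \in \T_{P^{cc}\Vh}}|\C_i^{cc} \cap \T_{\tilde V}(\y)| \ge a^2|\widehat{\mathscr{V}}_{P^{cc}}|$. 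A bipartite averaging identical in structure to~\cite[Lemma~2]{kangarshahi2021single}, applied to the incidence graph between codewords and received sequences possessing $\ge a$ competitors, then produces a codeword $\x(m) \in \C_i^{cc}$ for which, conditional on $T_{\Y|\x(m)} \in \mathscr{V}_{P^{cc}}$, the received $\Y$ admits at least $a+1$ competitors with probability $\ge 1 - 2/(a+1)$. Since each competitor causes a decoding error, the stated bound on the conditional maximum error probability follows.

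The main obstacle is the counting step in the low-weight regime. In~\cite{kangarshahi2021single} the composition $P_X$ has all entries of order $\Theta(1)$ and the full simplex $\mathcal V_{P_X}(\mathcal Y|\mathcal X)$ of conditional types participates in the counting, so the type-class identities are standard. In our setting $P^{cc}(1) = t'/\sqrt n \to 0$, so $|\T_\Vh(\x)|$ and $|\T_{P^{cc}}|$ are exponentially smaller than their classical analogs, and we must restrict the conditional-type window to the shrunken neighborhood $\widehat{\mathscr{V}}_{P^{cc}}$ of $P^\ast_{\widehat Y|X}$ defined in~\eqref{eq:type2}. Lemma~\ref{lemma:a} is engineered precisely so that every maximal joint conditional type arising from some $V \in \mathscr{V}_{P^{cc}}$ has its $\widehat Y$-marginal in this window, so that the swap argument closes for every admissible $V$. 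The delicate technical work is to verify that the type-class identities survive the $1+o(1)$ corrections induced by the sub-linear composition weight, and this verification is the strengthening over~\cite[Theorem~5]{kangarshahi2021single} that the authors highlight.
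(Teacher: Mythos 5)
Your high-level strategy is aligned with the paper's: adapt the Kangarshahi--Guill\'en i F\`abregas framework, use Lemma~\ref{lemma:a} to produce a maximal joint conditional type $\bar V_{Y\widehat Y|X}$ whose $\widehat Y$-marginal lies in $\widehat{\mathscr{V}}_{P^{cc}}$, and use the cross-ratio/maximality property (Claim~\ref{lemma:link}) to translate a type conflict between a received $\y$ and another codeword $\x'$ into a mismatched-decoding error. The closing step ($\frac{a-1}{a}$ fraction of $\y$'s have $a$ competitors, each such $\y$ gives error probability $\ge \frac{a}{a+1}$, product gives $1-\frac{2}{a+1}$) is also in the right spirit.

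However, your central combinatorial step is not executed, and what you do say would produce the weaker result the paper is explicitly trying to avoid. You propose to apply ``bipartite averaging identical in structure to [Lemma~2 of kangarshahi2021single]''; but that argument, applied verbatim, requires the hypothesis with $\min_{\Vh}|\T_{\Vh}(\x)|$ in the numerator and $\max_{\Vh}|\T_{P^{cc}\Vh}|$ in the denominator (i.e., the condition in~\eqref{eq:condition2}), which the paper shows fails in the covert regime precisely because $|\T_{\Vh}(\x)|$ and $|\T_{P^{cc}\Vh}|$ share the same leading exponent and the $\min/\max$ gap over $\widehat{\mathscr{V}}_{P^{cc}}$ is not negligible at the relevant $\sqrt n$ scale. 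The strengthening in Lemma~\ref{lemma:key} is exactly the replacement of this $\min/\max$ by a per-$\Vh$ condition, and it requires a new combinatorial idea that your proposal does not supply: in the contradiction argument one must partition $\C_i^{cc}$ into disjoint subsets $\C_i^{cc}(j)$ according to which $\Vh^j \in \widehat{\mathscr{V}}_{P^{cc}}$ the hypothesized bad set $\A_{\x}$ lives in, then double-count the incidences $(\x,\yyh)$ \emph{within each} $\C_i^{cc}(j)$ (each $\yyh \in \T_{P^{cc}\Vh^j}$ can appear in at most $a$ of the sets $\{\A_{\x}\}_{\x\in\C_i^{cc}(j)}$), and finally invoke the pigeonhole principle to find a single $j'$ with $|\C_i^{cc}(j')| \ge |\C_i^{cc}|/|\widehat{\mathscr{V}}_{P^{cc}}|$ large enough to clash with the per-$\Vh^{j'}$ lower bound. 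Your per-$\Vh$ $\y$-average bound $\ge a^2|\widehat{\mathscr{V}}_{P^{cc}}|$ is a true consequence of~\eqref{eq:condition}, but it alone does not yield a single codeword that is simultaneously bad, because different $\Vh$'s could implicate disjoint sub-collections of codewords; the partition-plus-pigeonhole is what pins down one bad $\x(m)$.

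You also mislocate the difficulty when you write that ``the delicate technical work is to verify that the type-class identities survive the $1+o(1)$ corrections.'' The identity $|\T_{P^{cc}}|\cdot|\T_\Vh(\x)| = |\T_{P^{cc}\Vh}|\cdot|\T_{\tilde V}(\y)|$ you invoke is an exact combinatorial identity, not an asymptotic one, so there are no corrections to control there. The genuine obstacle is the combinatorial refinement described above, not any analytic approximation.
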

\begin{proof}
See Appendix~\ref{appendix:key}.
\end{proof}

Roughly speaking, the term $|\T_{P^{cc}\Vh}|/|\T_{\Vh}(\x)|$ in~\eqref{eq:condition} corresponds to the mutual information of the conditional type $\widehat{V} \in \widehat{\mathscr{V}}_{P^{cc}}$, thus one can interpret~\eqref{eq:condition} as the condition that the rate exceeds the mutual information. We remark that the key difference between Lemma~\ref{lemma:key} and~\cite[Theorem 3]{kangarshahi2021single} lies in the condition~\eqref{eq:condition}, where~\cite[Theorem 3]{kangarshahi2021single}  instead requires that 
\begin{align}
    |\C_i^{cc}|\cdot \frac{\min_{\Vh \in \widehat{\mathscr{V}}_{P^{cc}}} |\T_{\Vh}(\x)|}{|\widehat{\mathscr{V}}_{P^{cc}}| \cdot \left(\max_{\Vh \in \widehat{\mathscr{V}}_{P^{cc}}} |\T_{P^{cc}\Vh}| \right)} \ge a^2.  \label{eq:condition2}
\end{align}
In standard (non-covert) communication, the effect of the minimization and maximization over $\Vh \in \widehat{\mathscr{V}}_{P^{cc}}$ is essentially negligible, thus $\max_{\Vh \in \widehat{\mathscr{V}}_{P^{cc}}} |\T_{P^{cc}\Vh}|/\min_{\Vh \in \widehat{\mathscr{V}}_{P^{cc}}} |\T_{\Vh}(\x)|$ can approximately be interpreted as the mutual information of \emph{any} conditional type $\Vh \in \widehat{\mathscr{V}}_{P^{cc}}$.
However, due to the stringent constraint on the Hamming weight of $\x$ in covert communication, the two quantities $|\T_{\Vh}(\x)|$ and $|\T_{P^{cc}\Vh}|$ have the same first-order term and only differ in the second-order terms. This further implies that the effect of the minimization and maximization over $\Vh \in \widehat{\mathscr{V}}_{P^{cc}}$ is \emph{not} negligible, and in fact, the constant composition code $\C_i^{cc}$ does not satisfy~\eqref{eq:condition2} for any $a \ge 2$. To overcome this challenge, we provide a strengthened result in Lemma~\ref{lemma:key}, where the condition~\eqref{eq:condition} is more relaxed compared to the condition~\eqref{eq:condition2} in~\cite[Theorem 3]{kangarshahi2021single}.

Next, we show that the constant composition code $\C_i^{cc}$ satisfies Lemma~\ref{lemma:key} for $a = \exp((\sigma/4)n)$. For $\Vh \in \widehat{\mathscr{V}}_{P^{cc}}$, we denote its output types corresponding to $X= 0$ and $X=1$ respectively as $\Vh_0$ and $\Vh_1$, thus one can show that for any $\x \in \T_{P^{cc}}$, 
\begin{align}
    \left|\T_{\Vh}(\x)\right| &\ge (n+1)^{-|\mathcal{X}||\mathcal{Y}|} \exp\left\{n \left(\frac{t'}{\sqrt{n}}H(\Vh_1) + \left(1- \frac{t'}{\sqrt{n}} \right)H(\Vh_0) \right) \right\}, \quad \mathrm{and}  \label{eq:xi1}\\
    \left|\T_{P^{cc}\Vh}\right| &\le \exp\left\{n H\left(\frac{t'}{\sqrt{n}}\Vh_1 + \left(1 - \frac{t'}{\sqrt{n}} \right) \Vh_0 \right) \right\}. 
\end{align}
By Taylor expansions and some simple algebras (as detailed in Appendix~\ref{appendix:calculation1}), one can show that  
\begin{align}
&H\left(\frac{t'}{\sqrt{n}}\Vh_1 + \left(1 - \frac{t'}{\sqrt{n}} \right) \Vh_0 \right) - \frac{t'}{\sqrt{n}}H(\Vh_1) - \left(1- \frac{t'}{\sqrt{n}} \right)H(\Vh_0)  \le \frac{t'}{\sqrt{n}} \DD\left(\Vh_1 \big\Vert \Vh_0 \right) + \frac{C_4\cdot (t')^2}{n} \label{eq:c}
\end{align}
for some constant $C_4 > 0$. Since each $\Vh \in \widehat{\mathscr{V}}_{P^{cc}}$ is close to $P^{\ast}_{\widehat{Y}|X}$, we show in Appendix~\ref{appendix:d} that there exists a vanishing sequence $\mu_n$ such that
\begin{align}
\DD\left(\Vh_1 \big\Vert \Vh_0 \right) \le \DD\left(P^{\ast}_{\widehat{Y}|X=1} \Vert P^{\ast}_{\widehat{Y}|X=0} \right) + \mu_n. \label{eq:xi2}
\end{align}
Thus, by combining~\eqref{eq:xi1}-\eqref{eq:xi2} and noting that $|\widehat{\mathscr{V}}_{P^{cc}}| \le (n+1)^{|\mathcal{X}||\mathcal{Y}|}$, we have 
\begin{align}
&|\C_i^{cc}|\cdot \frac{|\T_{\Vh}(\x)|}{|\widehat{\mathscr{V}}_{P^{cc}}| \cdot |\T_{P^{cc}\Vh}|} \notag \\
&\ge \frac{|\M|}{(1+\gamma_n)t_{\delta}n^{3/2}} \exp\left\{-t'\sqrt{n} \left[\DD\left(P^{\ast}_{\widehat{Y}|X=1} \Vert P^{\ast}_{\widehat{Y}|X=0}\right) + \mu_n \right] - C_4\cdot (t')^2 \right\} (n+1)^{-2|\mathcal{X}||\mathcal{Y}|}  \\
&= \exp\left\{ \sqrt{n} (\Rbar + \sigma) - t'\sqrt{n}\left[\DD\left(P^{\ast}_{\widehat{Y}|X=1} \Vert P^{\ast}_{\widehat{Y}|X=0}\right) + \mu_n \right] - C_4\cdot (t')^2 \right\} \frac{(n+1)^{-2|\mathcal{X}||\mathcal{Y}|}}{(1+\gamma_n)tn^{3/2}} \\
& \ge \exp\left\{\sqrt{n}\left[t_{\delta}\DD\left(P^{\ast}_{\widehat{Y}|X=1} \Vert P^{\ast}_{\widehat{Y}|X=0}\right) + \sigma\right] - (1+\gamma_n)t_{\delta}\sqrt{n}\left[\DD\left(P^{\ast}_{\widehat{Y}|X=1} \Vert P^{\ast}_{\widehat{Y}|X=0}\right) + \mu_n\right] - C_4\cdot (t')^2 \right\} \frac{(n+1)^{-2|\mathcal{X}||\mathcal{Y}|}}{(1+\gamma_n)t_{\delta}n^{3/2}} \\
& \ge \exp\{(\sigma/2)\sqrt{n} \},
\end{align}
for sufficiently large $n$. Thus, the constant composition code $\C_i^{cc}$ satisfies Lemma~\ref{lemma:key} for $a = \exp((\sigma/4)n)$, and there exists a codeword $\x(m)$ such that 
    \begin{align}
        \PP\left(\widehat{M} \ne m|M = m, T_{\y|\x(m)} \in \mathscr{V}_{P^{cc}} \right) \ge 1 - \frac{2}{\exp((\sigma/4)n)+1}.
    \end{align}
Finally, the maximum probability of error $P_{\mathrm{err}}^{\mathrm{(max)}}(\C_i^{cc})$ can be bounded from below as 
\begin{align}
    P_{\mathrm{err}}^{\mathrm{(max)}}(\C_i^{cc}) &\ge \PP(\widehat{M} \ne m|M = m) \\
    &\ge \PP\left(\widehat{M} \ne m|M = m, T_{\Y|\x(m)} \in \mathscr{V}_{P^{cc}} \right) \cdot \PP\left(T_{\Y|\x(m)} \in \mathscr{V}_{P^{cc}}|M = m \right) \\
    &\ge \left(1 - \frac{2}{\exp((\sigma/4)n)+1} \right) \times \left(1 - 2n^{-\frac{1}{3}[P_0]^{\min}} \right),
\end{align}
which tends to one as $n$ tends to infinity. This completes the proof of the upper bound.

\appendices

\section{Proof of Theorem~\ref{thm:binary}} \label{appendix:binary}

To simplify the expression of the achievable rate $\underline{\mathsf{R}}_{\mathsf{q},\delta}$, we first define 
$f(s) \triangleq  \mathbb{E}_{P_1}\!\!\left[\log\frac{\mathsf{q}(1,Y)^s}{\mathsf{q}(0,Y)^s}\right] - \log \mathbb{E}_{P_0}\!\!\left[\frac{\mathsf{q}(1,Y)^s}{\mathsf{q}(0,Y)^s} \right]$,
and note that $\underline{\mathsf{R}}_{\mathsf{q},\delta} = t_{\delta} \cdot \left(\sup_{s \ge 0} f(s)\right)$. The derivative of $f(s)$ is 
\begin{align}
    \frac{\mathrm{d} f(s)}{\mathrm{d} s} = \left(\log \frac{\mathsf{q}(0,0)\mathsf{q}(1,1)}{\mathsf{q}(0,1)\mathsf{q}(1,0)} \right) \times \left(\frac{1}{1 + \frac{1-P_0(0)}{P_0(0)} \frac{\mathsf{q}(0,0)^s \mathsf{q}(1,1)^s}{\mathsf{q}(0,1)^s \mathsf{q}(1,0)^s}}  - P_1(0)\right). \label{eq:derivative}
\end{align}
\paragraph{When $\mathsf{q}(0,0)   \mathsf{q}(1,1) > \mathsf{q}(0,1)  \mathsf{q}(1,0)$} One can check that the  derivative 
    \begin{align}
    \begin{cases}
    \frac{\mathrm{d} f(s)}{\mathrm{d} s} \ge 0, &\mbox{when } \ 0 \le s < s_0, \\
    \frac{\mathrm{d} f(s)}{\mathrm{d} s} = 0, &\mbox{when } \ s = s_0, \\
    \frac{\mathrm{d} f(s)}{\mathrm{d} s} \le 0, &\mbox{when } \ s > s_0,
    \end{cases}
    \quad \mbox{where} \ s_0 = \frac{\log \frac{P_0(0)P_1(1)}{P_0(1)P_1(0)}}{\log\frac{\mathsf{q}(0,0)\mathsf{q}(1,1)}{\mathsf{q}(0,1)\mathsf{q}(1,0)}}.
    \end{align}
Note that $s_0$ is non-negative since the numerator $\log \frac{P_0(0)P_1(1)}{P_0(1)P_1(0)} \ge 0$ when the assumption $P_0(0)+P_1(1)\ge P_0(1)+P_1(0)$ holds. Thus, $f(s)$ achieves its maximum when $s = s_0$, and
\begin{align}
\underline{\mathsf{R}}_{\mathsf{q},\delta} =  t_{\delta} \cdot f(s_0) =  t_{\delta} \cdot  \DD(P_1 \Vert P_0).
\end{align}
Next, we analyze the upper bound $\bar{\mathsf{R}}_{\mathsf{q},\delta}$ in Theorem~\ref{thm:converse}. By noting that $\Sq(0,0) = \Sq(1,1) = \{0,1\}$, $\Sq(0,1) = \{1\}$, $\Sq(1,0) = \{0\}$, one can verify that the set of distributions $\{P_{Y\widehat{Y}|X} \in \mathcal{M}_{\max}(\mathsf{q}): P_{Y|X} = \wyx\}$ can be characterized by
\begin{align*}
\begin{bmatrix}
P_{Y\widehat{Y}|X}(00|0) = P_0(0) & P_{Y\widehat{Y}|X}(01|0) = 0\\
P_{Y\widehat{Y}|X}(10|0) = r_1 & P_{Y\widehat{Y}|X}(11|0) = P_0(1)-r_1
\end{bmatrix} \ \mathrm{and} \ 
\begin{bmatrix}
P_{Y\widehat{Y}|X}(00|1) = P_1(0)-r_2 &  P_{Y\widehat{Y}|X}(01|1) = r_2\\
P_{Y\widehat{Y}|X}(10|1) = 0 & P_{Y\widehat{Y}|X}(11|1) = P_1(1)
\end{bmatrix}
\end{align*}
for $r_1 \in [0,P_0(0)]$ and $r_2 \in [0, P_1(0)]$. Thus, the marginal distribution $P_{\widehat{Y}|X}$ takes the form
\begin{align}
  P_{\widehat{Y}|X}  = \begin{bmatrix}
P_0(0) + r_1 & P_0(1) - r_1\\
P_1(0)-r_2 & P_1(1) + r_2
\end{bmatrix}, \label{eq:matrix}
\end{align}
in which the rows represent $X$ and the columns represent $\hat{Y}$.
Note that  
\begin{align}
    \DD(P_{\widehat{Y}|X=1} \Vert P_{\widehat{Y}|X=0}) &= (P_1(0)-r_2)\log \frac{P_1(0)-r_2}{P_0(0)+r_1} + (P_1(1)+r_2)\log \frac{P_1(1)+r_2}{P_0(1)-r_1} \label{eq:hei1} \\
    &=\DD(\mathrm{Bern}(P_1(0)-r_2) \Vert \mathrm{Bern}(P_0(0)+r_1)) \label{eq:hei2} \\
    &\ge \DD(P_1 \Vert P_0), \label{eq:hei3}
\end{align}
where~\eqref{eq:hei2} holds since $P_1(1)+r_2 = 1 - (P_1(0)-r_2)$ and $P_0(1)-r_1 = 1 - (P_0(0)+r_1)$, and~\eqref{eq:hei3} can be verified by calculating partial derivatives with respect to $r_1$ and $r_2$.  The equality in~\eqref{eq:hei3} is achieved when $r_1 = r_2 =0$. Thus, the upper bound $\bar{\mathsf{R}}_{\mathsf{q},\delta} = t_{\delta} \cdot \DD(P_1 \Vert P_0)$.   

\paragraph{When $\mathsf{q}(0,0)   \mathsf{q}(1,1) < \mathsf{q}(0,1)  \mathsf{q}(1,0)$} The second term in~\eqref{eq:derivative} is at least $P_0(0) - P_1(0)$ for all $s \ge 0$, and $P_0(0) - P_1(0)$ is non-negative when the assumption $P_0(0) + P_1(1) \ge P_0(1) + P_1(0)$ holds. Thus, $\frac{\mathrm{d} f(s)}{\mathrm{d} s} \le 0$ for all $s \ge 0$, and 
\begin{align}
\underline{\mathsf{R}}_{\mathsf{q},\delta} = t_{\delta} \cdot f(0) = 0.    
\end{align}
Next, we examine the upper bound $\bar{\mathsf{R}}_{\mathsf{q},\delta}$. By noting that $\Sq(0,0) = \Sq(1,1) = \{0,1\}$, $\Sq(0,1) = \{0\}$, $\Sq(1,0) = \{1\}$, one can check that the set $\{P_{Y\widehat{Y}|X} \in \mathcal{M}_{\max}(\mathsf{q}): P_{Y|X} = \wyx\}$ can be characterized by
\begin{align*}
\begin{bmatrix}
P_{Y\widehat{Y}|X}(00|0) = P_0(0)-r_1 & P_{Y\widehat{Y}|X}(01|0) = r_1\\
P_{Y\widehat{Y}|X}(10|0) = 0 & P_{Y\widehat{Y}|X}(11|0) = P_0(1)
\end{bmatrix} \ \mathrm{and} \ 
\begin{bmatrix}
P_{Y\widehat{Y}|X}(00|1) = P_1(0) &  P_{Y\widehat{Y}|X}(01|1) = 0\\
P_{Y\widehat{Y}|X}(10|1) = r_2 & P_{Y\widehat{Y}|X}(11|1) = P_1(1) - r_2
\end{bmatrix}
\end{align*}
for $r_1 \in [0,P_0(0)]$ and $r_2 \in [0, P_1(1)]$. We then have 
\begin{align}
  P_{\widehat{Y}|X}  = \begin{bmatrix}
P_0(0) - r_1 & P_0(1) + r_1\\
P_1(0)+r_2 & P_1(1) - r_2
\end{bmatrix}, \label{eq:matrix}
\end{align}
where rows represent $X$ and columns represent $\widehat{Y}$. Note that the assumption $P_0(0)+P_1(1)\ge P_0(1)+P_1(0)$ implies $P_0(0) \ge P_1(0)$, $P_1(1) \ge P_0(1)$, and $P_0(0)-P_1(0) = P_1(1)+P_0(1)$. 
Thus, it is valid to set $r_1 = r_2 = \frac{P_0(0)-P_1(0)}{2} = \frac{P_1(1) - P_0(1)}{2}$, in which case the two rows in~\eqref{eq:matrix} are identical,  yielding $\DD(P_{\widehat{Y}|X=1} \Vert P_{\widehat{Y}|X=0}) = 0$. Therefore, the upper bound $\bar{\mathsf{R}}_{\mathsf{q},\delta} = 0$. 

\paragraph{When $\mathsf{q}(0,0) \mathsf{q}(1,1) = \mathsf{q}(0,1)  \mathsf{q}(1,0)$} It is clear that $\frac{\mathrm{d} f(s)}{\mathrm{d} s} = 0$ for all $s \ge 0$, thus $\underline{\mathsf{R}}_{\mathsf{q},\delta} = t_{\delta} \cdot f(0) = 0$.    
To examine the upper bound $\bar{\mathsf{R}}_{\mathsf{q},\delta}$, we first note that $\Sq(0,0) = \Sq(1,1) = \Sq(0,1) = \Sq(1,0) = \{0,1\}$, and thus the set $\{P_{Y\widehat{Y}|X} \in \mathcal{M}_{\max}(\mathsf{q}): P_{Y|X} = \wyx\}$ can be characterized by 
\begin{align*}
\begin{bmatrix}
P_{Y\widehat{Y}|X}(00|0) = P_0(0)-r_1 & P_{Y\widehat{Y}|X}(01|0) = r_1\\
P_{Y\widehat{Y}|X}(10|0) = r'_1 & P_{Y\widehat{Y}|X}(11|0) = P_0(1) - r'_1
\end{bmatrix} \ \mathrm{and} \ 
\begin{bmatrix}
P_{Y\widehat{Y}|X}(00|1) = P_1(0) -r'_2 &  P_{Y\widehat{Y}|X}(01|1) = r'_2\\
P_{Y\widehat{Y}|X}(10|1) = r_2 & P_{Y\widehat{Y}|X}(11|1) = P_1(1) - r_2
\end{bmatrix}
\end{align*}
for $r_1 \in [0,P_0(0)], r'_1 \in [0,P_0(1)], r'_2 \in [0,P_1(0)], r_2 \in [0,P_1(1)]$. Setting $r_1 = r_2 = \frac{P_0(0)-P_1(0)}{2} = \frac{P_1(1) - P_0(1)}{2}$ and $r'_1 = r'_2 = 0$ yields $\DD(P_{\widehat{Y}|X=1} \Vert P_{\widehat{Y}|X=0}) = 0$. Therefore, the upper bound $\bar{\mathsf{R}}_{\mathsf{q},\delta} = 0$.

\section{Proof of Theorem~\ref{thm:eo}} \label{appendix:eo}

To evaluate the lower bound $\underline{\mathsf{R}}_{\mathsf{q},\delta}$, we substitute the decoding metric in~\eqref{eq:q34} to the expression in~\eqref{eq:lower}, yielding that
\begin{align}
    \underline{\mathsf{R}}_{\mathsf{q},\delta} &= t_{\delta} \cdot \left(\sup_{s \ge 0} \sum_{y \in \mathcal{Y}\setminus \mathcal{D}}P_1(y) \log \frac{\mathsf{q}(1,y)^s}{\mathsf{q}(0,y)^s} - \log \sum_{y \in \mathcal{Y}}P_0(y)\frac{\mathsf{q}(1,y)^s}{\mathsf{q}(0,y)^s} \right) \\
    &= t_{\delta} \cdot \Bigg(\sup_{s \ge 0} -\log \bigg[ P_0(\mathcal{Y}\setminus \mathcal{D}) +  \sum_{y \in \mathcal{D}}P_0(y) \cdot \xi^s \bigg] \Bigg) \label{eq:li} \\
    &= t_{\delta} \cdot \log\left(1/P_0(\mathcal{Y}\setminus \mathcal{D}) \right),  
\end{align}
where~\eqref{eq:li} holds since $\mathsf{q}(1,y) = \mathsf{q}(0,y) = 1$ for every $y \in \mathcal{Y}\setminus \mathcal{D}$, and $\mathsf{q}(1,y) = \xi$ and $\mathsf{q}(0,y) = 1$ for every $y \in \mathcal{D}$. This means that the rate $t_{\delta} \cdot \log\left(1/P_0(\mathcal{Y}\setminus \mathcal{D})\right)$ is achievable.

Next, we evaluate the upper bound $\bar{\mathsf{R}}_{\mathsf{q},\delta}$ in~\eqref{eq:minimizer}. Based on the decoding metric in~\eqref{eq:q34}, one can characterize the sets $\mathcal{S}_{\mathsf{q}}(y,\hat{y})$ for different pairs $(y,\hat{y}) \in \mathcal{Y} \times \mathcal{Y}$ as follows:
\begin{itemize}
    \item If $y \notin \mathcal{D}$ and $\hat{y} \in \mathcal{D}$, then $\mathcal{S}_{\mathsf{q}}(y,\hat{y}) = \{0\}$ and $\mathcal{S}_{\mathsf{q}}(\hat{y},y) = \{1\}$;
    \item If both $y \notin \mathcal{D}$ and $\hat{y} \notin \mathcal{D}$, then $\mathcal{S}_{\mathsf{q}}(y,\hat{y}) = \{0,1\}$;
    \item If both $y \in \mathcal{D}$ and $\hat{y} \in \mathcal{D}$, then $\mathcal{S}_{\mathsf{q}}(y,\hat{y}) = \{0,1\}$.
\end{itemize}
Given these sets $\mathcal{S}_{\mathsf{q}}(y,\hat{y})$, one can check that any joint conditional distribution $P_{Y\widehat{Y}|X}$ belonging to the set 
\begin{align}
\{P_{Y\widehat{Y}|X} \in \mathcal{M}_{\max}(\mathsf{q}): P_{Y|X} = \wyx\}   \label{eq:39}
\end{align}
also satisfies the following two properties:
\begin{align}
\sum_{\hat{y} \in \mathcal{Y}\setminus \mathcal{D}}P_{\widehat{Y}|X=1}(\hat{y}) = 1 \quad \mbox{and} \quad \sum_{\hat{y} \in \mathcal{Y}\setminus \mathcal{D}}P_{\widehat{Y}|X=0}(\hat{y}) \le  P_0(\mathcal{Y}\setminus\mathcal{D}). \label{eq:property}
\end{align}
Thus, for any joint conditional distribution $P_{Y\widehat{Y}|X}$ belonging to the set in~\eqref{eq:39}, we have 
\begin{align}
\DD\left(P_{\widehat{Y}|X=1} \Vert P_{\widehat{Y}|X=0}\right) &= \sum_{\hat{y}\in \mathcal{Y}} P_{\widehat{Y}|X=1}(\hat{y}) \log \frac{P_{\widehat{Y}|X=1}(\hat{y})}{P_{\widehat{Y}|X=0}(\hat{y})}\\
&= \sum_{\hat{y}\in \mathcal{Y} \setminus \mathcal{D}} P_{\widehat{Y}|X=1}(\hat{y}) \log \frac{P_{\widehat{Y}|X=1}(\hat{y})}{P_{\widehat{Y}|X=0}(\hat{y})} \label{eq:zou} \\
&= - \sum_{\hat{y}\in \mathcal{Y} \setminus \mathcal{D}} P_{\widehat{Y}|X=1}(\hat{y}) \log \frac{P_{\widehat{Y}|X=0}(\hat{y})}{P_{\widehat{Y}|X=1}(\hat{y})} \\
&\ge - \log \sum_{\hat{y}\in \mathcal{Y} \setminus \mathcal{D}} P_{\widehat{Y}|X=1}(\hat{y}) \frac{P_{\widehat{Y}|X=0}(\hat{y})}{P_{\widehat{Y}|X=1}(\hat{y})} \label{eq:jensen} \\
&\ge \log(1/P_0(\mathcal{Y}\setminus\mathcal{D})), \label{eq:last}
\end{align}
where~\eqref{eq:zou} follows from the first property stated in~\eqref{eq:property}, inequality~\eqref{eq:jensen} follows from Jensen's inequality, and inequality~\eqref{eq:last} is due to the second property stated  in~\eqref{eq:property}. Moreover, one can check that the following joint conditional distribution $P'_{Y\widehat{Y}|X}$, whose non-zero entries are given by
\begin{align}
    &\mbox{For a specific } y^{\ast} \notin \mathcal{D}: \ \left\{P'_{Y\widehat{Y}|X}(yy^{\ast}|0) = P_0(y)\right\}_{y \notin \mathcal{D}}, \ \left\{P'_{Y\widehat{Y}|X}(yy^{\ast}|1) = P_1(y)\right\}_{y \notin \mathcal{D}}, \\
    &\mbox{For a specific } y^{\dagger} \in \mathcal{D}: \ \left\{P'_{Y\widehat{Y}|X}(yy^{\dagger}|0) = P_0(y)\right\}_{y \in \mathcal{D}},
\end{align}
also belongs to the set in~\eqref{eq:39}, and this distribution satisfies  
\begin{align}
    \DD\left(P'_{\widehat{Y}|X=1} \Vert P'_{\widehat{Y}|X=0}\right) = \log(1/P_0(\mathcal{Y}\setminus\mathcal{D})). \label{eq:fa2}
\end{align}
Combining~\eqref{eq:last} and~\eqref{eq:fa2}, we note that the upper bound $\bar{\mathsf{R}}_{\mathsf{q},\delta}$ exactly equals $t_{\delta}\cdot \log(1/P_0(\mathcal{Y}\setminus\mathcal{D}))$, which matches the lower bound $\underline{\mathsf{R}}_{\mathsf{q},\delta}$.

\section{Proof of Lemma~\ref{lemma:expectation}} \label{appendix:expectation}
We first prove that the random variable $S_i$ is bounded. By symmetry, one can set $\underline{\X}^{(i)} = e_1$, where $e_j$ is a weight-one length-$w$ vector with the $j$-th element being $1$. We then rewrite $S_i$ as 
\begin{align}
S_i = -\log \sum_{j=1}^w \frac{1}{w} \frac{\mathsf{q}^w(e_j,\underline{\Y}^{(i)})^s}{\mathsf{q}^w(e_1,\underline{\Y}^{(i)})^s} &= -\log\left[ \frac{1}{w}\sum_{j=1}^w \frac{\mathsf{q}(0,\underline{Y}_1^{(i)})^s }{\mathsf{q}(1,\underline{Y}_1^{(i)})^s}\frac{\mathsf{q}(1,\underline{Y}_j^{(i)})^s }{\mathsf{q}(0,\underline{Y}_j^{(i)})^s} \right], \label{eq:57}
\end{align}
where $\underline{Y}_1^{(i)} \sim P_1$ and $\underline{Y}_j^{(i)} \sim P_0$ for $2 \le j \le w$ (since $\underline{\X}^{(i)} = e_1$). It is clear that for any realization of $\underline{\Y}^{(i)}$, \begin{align}
    -\log\left[ \left(\max_{y:P_1(y)>0} \frac{\mathsf{q}(0,y)^s}{\mathsf{q}(1,y)^s} \right) \left(\max_{y:P_0(y)>0} \frac{\mathsf{q}(1,y)^s}{\mathsf{q}(0,y)^s} \right) \right] \le S_i \le     -\log\left[ \left(\min_{y:P_1(y)>0} \frac{\mathsf{q}(0,y)^s}{\mathsf{q}(1,y)^s} \right) \left(\min_{y:P_0(y)>0} \frac{\mathsf{q}(1,y)^s}{\mathsf{q}(0,y)^s} \right) \right], 
\end{align}
which means that $S_i$ is bounded since $\mathsf{q}$ takes on values on the positive real line.

Next, we calculate the expectation of $S_i$:
\begin{align}
    \mathbb{E}(S_i) &= \frac{1}{w}\sum_{k=1}^w \sum_{\underline{y}^{(i)}} \wyxw(\underline{y}^{(i)}|e_k) \cdot \left[-\log \sum_{j=1}^w \frac{1}{w}\frac{\mathsf{q}^w(e_j,\underline{y}^{(i)})^s}{\mathsf{q}^w(e_k, \underline{y}^{(i)})^s} \right] \\
    &= \sum_{\underline{y}^{(i)}} \wyxw(\underline{y}^{(i)}|e_1) \cdot \left[-\log \sum_{j=1}^w \frac{1}{w}\frac{\mathsf{q}^w(e_j,\underline{y}^{(i)})^s}{\mathsf{q}^w(e_1, \underline{y}^{(i)})^s} \right] \label{eq:sym}\\
    &= - \sum_{\underline{y}^{(i)}_1}P_1(\underline{y}^{(i)}_1) \sum_{(\underline{y}^{(i)})_2^w} P_0^{\otimes w-1}\left((\underline{y}^{(i)})_2^w\right) \cdot  \log \frac{1}{w}\sum_{j=1}^w \frac{\mathsf{q}(0,\underline{y}^{(i)}_1)^s}{\mathsf{q}(1,\underline{y}^{(i)}_1)^s} \frac{\mathsf{q}(1,\underline{y}^{(i)}_j)^s}{\mathsf{q}(0,\underline{y}^{(i)}_j)^s} \\
    &= \left(-\sum_{\underline{y}^{(i)}_1}P_1(\underline{y}^{(i)}_1) \log\frac{\mathsf{q}(0,\underline{y}^{(i)}_1)^s}{\mathsf{q}(1,\underline{y}^{(i)}_1)^s} \right) -  \left(\sum_{\underline{y}^{(i)}_1}P_1(\underline{y}^{(i)}_1) \sum_{(\underline{y}^{(i)})_2^w} P_0^{\otimes w-1}\left((\underline{y}^{(i)})_2^w\right) \log\frac{1}{w}\sum_{j=1}^w \frac{\mathsf{q}(1,\underline{y}^{(i)}_j)^s}{\mathsf{q}(0,\underline{y}^{(i)}_j)^s}  \right), \label{eq:jian}
\end{align}
where~\eqref{eq:sym} is due to symmetry. Note that the first term in~\eqref{eq:jian} equals $\sum_{y}P_1(y) \log\frac{\mathsf{q}(1,y)^s}{\mathsf{q}(0,y)^s}$. The second term in~\eqref{eq:jian} can be bounded using Jensen's inequality as follows:
\begin{align}
&\sum_{\underline{y}^{(i)}_1}P_1(\underline{y}^{(i)}_1)\sum_{(\underline{y}^{(i)})_2^w} P_0^{\otimes w-1}\left((\underline{y}^{(i)})_2^w\right) \log\frac{1}{w}\sum_{j=1}^w \frac{\mathsf{q}(1,\underline{y}^{(i)}_j)^s}{\mathsf{q}(0,\underline{y}^{(i)}_j)^s} \notag \\
&\le \log \left[ \sum_{\underline{y}^{(i)}_1}P_1(\underline{y}^{(i)}_1) \sum_{(\underline{y}^{(i)})_2^w} P_0^{\otimes w-1}\left((\underline{y}^{(i)})_2^w\right) \frac{1}{w}\sum_{j=1}^w \frac{\mathsf{q}(1,\underline{y}^{(i)}_j)^s}{\mathsf{q}(0,\underline{y}^{(i)}_j)^s}  \right] \\
&= \log \left[\frac{w-1}{w} \left( \sum_{y}P_0(y) \frac{\mathsf{q}(1,y)^s}{\mathsf{q}(0,y)^s} \right) + \frac{1}{w}\left( \sum_{y}P_1(y) \frac{\mathsf{q}(1,y)^s}{\mathsf{q}(0,y)^s} \right) \right] \\
&\le \log \left( \sum_{y}P_0(y) \frac{\mathsf{q}(1,y)^s}{\mathsf{q}(0,y)^s} \right) + \frac{C_0}{w} \label{eq:qi}
\end{align}
for some constant $C_0 > 0$.
Combining~\eqref{eq:jian} and~\eqref{eq:qi}, we complete the proof of Lemma~\ref{lemma:expectation}.

\section{Proof of Lemma~\ref{lemma:subset}} \label{appendix:b}

    First, the probability of $\mathcal{E}_1$ can be bounded from below as
    \begin{align}
        \PP(\mathcal{E}_1) &= \PP\left( \bigcap_{k \in \mathcal{K}} \left\{P^{\mathrm{(avg)}}_{\mathrm{err}}(\C_k) \le 2n\exp(-n^{1/4}) \right\} \right)  \\
        &= \left[1- \PP\left(P^{\mathrm{(avg)}}_{\mathrm{err}}(\C_1) > 2n\exp(-n^{1/4}) \right) \right]^{|\mathcal{K}|} \label{eq:feng1} \\
        &\ge \left(1 - \frac{1}{n} \right)^{|\mathcal{K}|}, \label{eq:feng2}
    \end{align}
where~\eqref{eq:feng1} holds since the sub-codes are generated independently, and~\eqref{eq:feng2} follows from Markov's inequality and the fact that $\mathbb{E}(P^{\mathrm{(avg)}}_{\mathrm{err}}(\C_1)) \le 2\exp(-n^{1/4})$. The rest of the proof essentially follows from that of~\cite[Lemma 4]{tahmasbi2018first}. As shown in~\cite[Eqns.(94)-(100)]{tahmasbi2018first}, the probability of $\mathcal{E}_2$ can be bounded from below as  
\begin{align}
   \PP(\mathcal{E}_2) \ge  1 - \exp \left\{-|\mathcal{M}||\mathcal{K}| \left( \frac{2 \lambda_1 \lambda_2^2}{\log^2\left(\frac{\lambda_1 |\mathcal{M}||\mathcal{K}||\mathcal{Z}^n|}{([\ppmz]^{\min})^2} \right)} - H_b(\lambda_1) \right) \right\}.
\end{align}
Therefore, 
\begin{align}
\PP(\mathcal{E}_1 \cap \mathcal{E}_2) \ge 1 - \PP(\mathcal{E}_1^c) - \PP(\mathcal{E}_2^c) &= \PP(\mathcal{E}_1) + \PP(\mathcal{E}_2) - 1 \\ 
&\ge \left(1 - \frac{1}{n} \right)^{|\mathcal{K}|} - \exp \left\{-|\mathcal{M}||\mathcal{K}| \left(\frac{2 \lambda_1 \lambda_2^2}{\log^2\left(\frac{\lambda_1 |\mathcal{M}||\mathcal{K}||\mathcal{Z}^n|}{([\ppmz]^{\min})^2} \right)} - H_b(\lambda_1) \right) \right\}, 
\end{align}
which is positive when the condition in~\eqref{eq:huo1} holds.

\section{Proof of Lemma~\ref{lemma:typical}} \label{appendix:typical}
Without loss of generality, we consider a specific codeword $\x \in \C_i^{cc}$ such that its type $T_{\x} = P^{cc}$, $x_1 = \cdots = x_{t'\sqrt{n}} = 1$, and $x_{t'\sqrt{n}+1} = \cdots = x_n = 0$.  For every $y \in \mathcal{Y}$, the expected value of $\sum_{j=1}^{t'\sqrt{n}}\mathbbm{1}\{(x_j,Y_j)=(1,y)\}$ is $t'\sqrt{n}P_1(y)$, and by applying the Chernoff bound we have that for any $\epsilon_1 > 0$,
\begin{align}
    \PP\left( \left|\sum_{j=1}^{t'\sqrt{n}}\mathbbm{1}\{(x_j,Y_j)=(1,y)\} - t'\sqrt{n}P_1(y) \right| \ge \epsilon_1 t'\sqrt{n}P_1(y) \right) \le 2 \exp \left\{-\frac{1}{3} \epsilon_1^2 t'\sqrt{n}P_1(y) \right\}.
\end{align}
Setting $\epsilon_1 = n^{-1/8}$, we obtain that with probability at least $1 - 2\exp\left\{-\frac{1}{3}t'n^{1/4}[P_1]^{\min} \right\}$, 
\begin{align}
\left|T_{\Y|\x}(y|1) - P_1(y) \right| =   \left|\frac{\left|\sum_{j=1}^{t'\sqrt{n}}\mathbbm{1}\{(x_j,Y_j)=(1,y)\}\right|}{t'\sqrt{n}} - P_1(y) \right| \le P_1(y) n^{-1/8} \le  [P_1]^{\max}n^{-1/8}. \label{eq:app1}
\end{align}

Similarly, for every $y \in \mathcal{Y}$, the expected  value of $\sum_{j=t'\sqrt{n}+1}^{n} \mathbbm{1}\{(x_j,Y_j)=(0,y)\}$ is $(n-t'\sqrt{n})P_0(y)$, and by applying the Chernoff bound we have that for any $\epsilon_2 > 0$, 
\begin{align}
    \PP\left( \left|\sum_{j=t'\sqrt{n}+1}^{n}\mathbbm{1}\{(x_j,Y_j)=(0,y)\} - (n-t'\sqrt{n})P_0(y) \right| \ge \epsilon_2 (n-t'\sqrt{n})P_0(y) \right) \le 2 \exp \left\{-\frac{1}{3} \epsilon_2^2 (n-t'\sqrt{n})P_0(y) \right\}. 
\end{align}
Setting $\epsilon_2 = \sqrt{(\log n)/n}$, we obtain that with probability at least $1 - 2n^{-\frac{1}{3} [P_0]^{\min}}$, 
\begin{align}
\left|T_{\Y|\x}(y|0) - P_0(y) \right| =   \left|\frac{\left|\sum_{j=t'\sqrt{n}+1}^{n}\mathbbm{1}\{(x_j,Y_j)=(0,y)\}\right|}{n - t'\sqrt{n}} - P_0(y) \right| \le P_0(y) \sqrt{\frac{\log n}{n}} \le [P_0]^{\max} \sqrt{\frac{\log n}{n}}. \label{eq:app2}
\end{align}
Combining~\eqref{eq:app1} and~\eqref{eq:app2}, we complete the proof of Lemma~\ref{lemma:typical}.

\section{Proof of Lemma~\ref{lemma:a}} \label{appendix:a}
First, we construct a joint conditional type $\bar{V}_{Y\widehat{Y}|X}$ that corresponds to the joint conditional distribution $\Pj^{\ast}$. For every $x \in \mathcal{X}$ and $y,\yh \in \mathcal{Y}$, we set $\bar{V}_{Y\widehat{Y}|X}(y,\yh|x)$ to be either 
\begin{align}
\frac{\left\lfloor nP^{cc}(x)\Pj^\ast(y,\yh|x)\right\rfloor}{nP^{cc}(x)} \quad \mathrm{or} \quad \frac{\left\lceil nP^{cc}(x)\Pj^\ast(y,\yh|x)\right\rceil}{nP^{cc}(x)}   
\end{align}
such that $\sum_{y,\yh}\bar{V}_{Y\widehat{Y}|X}(y,\yh|x) = 1$. By construction, $\bar{V}_{Y\widehat{Y}|X}$ is a maximal joint conditional type, and every pair $(y,\yh) \in \mathcal{Y} \times \mathcal{Y}$ satisfies 
\begin{align}
\left|\bar{V}_{Y\widehat{Y}|X}(y,\yh|x) -  \Pj^\ast(y,\yh|x)\right| \le  \frac{1}{nP^{cc}(x)}.   
\end{align}
Furthermore, if we consider the marginals of 
$\bar{V}_{Y\widehat{Y}|X}$ and $\Pj^\ast$, we have
\begin{align}
    &\left| \bar{V}_{Y|X}(y|x) - P_{Y|X}^\ast(y|x) \right| \le \sum_{\yh} \left| \bar{V}_{Y\widehat{Y}|X}(y,\yh|x) - \Pj^\ast(y,\yh|x) \right| \le \frac{|\mathcal{Y}|}{nP^{cc}(x)}, \\
    &\left| \bar{V}_{\widehat{Y}|X}(\yh|x) - P_{\widehat{Y}|X}^\ast(\yh|x) \right| \le \sum_{y} \left| \bar{V}_{Y\widehat{Y}|X}(y,\yh|x) - \Pj^\ast(y,\yh|x) \right| \le \frac{|\mathcal{Y}|}{nP^{cc}(x)}. \label{eq:15}
\end{align}
For simplicity we define $\kappa_n(0) \triangleq \frac{|\mathcal{Y}|}{n - t'\sqrt{n}} + [P_0]^{\max}\sqrt{\frac{\log n}{n}}$ and $\kappa_n(1) \triangleq \frac{|\mathcal{Y}|}{t'\sqrt{n}} + [P_1]^{\max}n^{-1/8}$. Since the conditional type $V' \in \mathscr{V}_{P^{cc}}$ and by noting that $P_{Y|X}^\ast = \wyx$, we have that for all $y \in \mathcal{Y}$,
\begin{align}
   \left| \bar{V}_{Y|X}(y|x) - V'(y|x) \right| \le \left| \bar{V}_{Y|X}(y|x) - \wyx(y|x) \right| + \left| \wyx(y|x) - V'(y|x) \right|  \le \kappa_n(x).
\end{align}
We now construct another joint conditional type $V_{Y\widehat{Y}|X}$ as follows: 
\begin{itemize}
\item For $y \in \mathcal{Y}$ such that $\bar{V}_{Y|X}(y|x) \le V'(y|x)$, we add non-negative real numbers $\{\epsilon(x,y,\yh)\}_{\yh \in \mathcal{Y}}$ to the elements in $\{\bar{V}_{Y\widehat{Y}|X}(y,\yh|x)\}_{\yh \in \mathcal{Y}}$ such that $V_{Y\widehat{Y}|X}(y,\yh|x) \triangleq \bar{V}_{Y\widehat{Y}|X}(y,\yh|x) + \epsilon(x,y,\yh)$. We choose $\{\epsilon(x,y,\yh)\}_{\yh \in \mathcal{Y}}$ in such a way that (i) $0 \le \epsilon(x,y,\yh) \le \kappa_n(x)$, (ii) $\epsilon(x,y,\yh) = 0$ if $\bar{V}_{Y\widehat{Y}|X}(y,\yh|x) = 0$, and (iii) $\sum_{\yh} \epsilon(x,y,\yh) = V'(y|x) - \bar{V}_{Y|X}(y|x)$. 
    
\item For $y \in \mathcal{Y}$ such that $\bar{V}_{Y|X}(y|x) > V'(y|x)$, we add non-positive real numbers $\{\epsilon(x,y,\yh)\}_{\yh \in \mathcal{Y}}$ to the elements in $\{\bar{V}_{Y\widehat{Y}|X}(y,\yh|x)\}_{\yh \in \mathcal{Y}}$ such that $V_{Y\widehat{Y}|X}(y,\yh|x) \triangleq \bar{V}_{Y\widehat{Y}|X}(y,\yh|x) + \epsilon(x,y,\yh)$. We choose $\{\epsilon(x,y,\yh)\}_{\yh \in \mathcal{Y}}$ in such a way that (i) $-\kappa_n(x) \le \epsilon(x,y,\yh) \le 0$, (ii) $\bar{V}_{Y\widehat{Y}|X}(y,\yh|x) + \epsilon(x,y,\yh) \ge 0$ for all $\yh \in \mathcal{Y}$, and (iii) $\sum_{\yh} \epsilon(x,y,\yh) = V'(y|x) - \bar{V}_{Y|X}(y|x)$. 
\end{itemize}
The properties of $\{\epsilon(x,y,\yh)\}_{\yh \in \mathcal{Y}}$ ensure that (i) $V_{Y\widehat{Y}|X}$ is a maximal joint conditional type and (ii) the marginal conditional type $V_{Y|X}$ satisfies
\begin{align}
V_{Y|X}(y|x) = \sum_{\yh} V_{Y\widehat{Y}|X}(y,\yh|x) = V'(y|x), \quad \forall x \in \mathcal{X}, y \in \mathcal{Y}.
\end{align}
Finally, we examine the other marginal conditional type $V_{\widehat{Y}|X}$ as follows:
\begin{align}
\left|V_{\widehat{Y}|X}(\yh|x) - P^{\ast}_{\widehat{Y}|X}(\yh|x) \right| &\le \sum_{y} \left|V_{Y\widehat{Y}|X}(y,\yh|x) - P^{\ast}_{Y\widehat{Y}|X}(y,\yh|x) \right|    \\
&\le \sum_{y} \left|\bar{V}_{Y\widehat{Y}|X}(y,\yh|x) - P^{\ast}_{Y\widehat{Y}|X}(y,\yh|x)\right| + \sum_{y} |\epsilon(x,y,\yh)| \\
&\le \frac{|\mathcal{Y}|}{nP^{cc}(x)} + |\mathcal{Y}|\cdot\kappa_n(x), \label{eq:20}
\end{align}
where~\eqref{eq:20} follows from the inequality in~\eqref{eq:15} as well as the fact that $|\epsilon(x,y,\yh)| \le \kappa_n(x)$. This completes the proof of Lemma~\ref{lemma:a}.

\section{Proof of Lemma~\ref{lemma:key}} \label{appendix:key}

We label all the elements in $\widehat{\mathscr{V}}_{P^{cc}}$ by $\left\{\Vh^{1}, \Vh^{2}, \ldots, \Vh^{|\widehat{\mathscr{V}}_{P^{cc}}|}\right\}$. The first step is to show that there exists a codeword $\x(m) \in \C_i^{cc}$ and a collection of sets $\B_j \subset \T_{\Vh^j}(\x(m))$ for $j \in \{1,2,\ldots, |\widehat{\mathscr{V}}_{P^{cc}}|\}$ such that
\begin{enumerate}
    \item $|\B_j| \ge \frac{a-1}{a}\left|\T_{\Vh^j}(\x(m)) \right|$;
    \item For all $\yyh \in \B_j$, the code $\C_i^{cc}$ contains $a$ other codewords $\x'(1), \ldots, \x'(a)$ that have exactly the same conditional type with $\x(m)$, i.e., $T_{\yyh|\x'(1)} = \cdots = T_{\yyh|\x'(a)} = T_{\yyh|\x(m)}$. 
\end{enumerate}
We prove the above argument by contradiction. Suppose for every $\x \in \C_i^{cc}$, there is a set $\A_{\x} \subset \T_{\Vh^j}(\x)$ for some $j \in \{1,2,\ldots, |\widehat{\mathscr{V}}_{P^{cc}}|\}$ such that 
\begin{enumerate}
    \item $|\A_{\x}| > \frac{1}{a}\left|\T_{\Vh^j}(\x) \right|$;
    \item For all $\yyh \in \A_{\x}$, there are at most $a-1$ other codewords $\x'(1), \ldots, \x'(a-1) \in \C_i^{cc}$ satisfying $T_{\yyh|\x'(1)} = \cdots = T_{\yyh|\x'(a-1)} = T_{\yyh|\x}$.
\end{enumerate}
We then partition the constant composition code $\C_i^{cc}$ into $|\widehat{\mathscr{V}}_{P^{cc}}|$ disjoint subsets $\C_i^{cc}(j)$, where $\C_i^{cc}(j) \triangleq \left\{\x \in \C_i^{cc}: \A_{\x} \subset \T_{\Vh^j}(\x) \right\}.$
For every $\yyh \in \T_{P^{cc}\Vh^j}$, one can prove that it is a member of at most $a$ sets $\{\A_{\x}\}_{\x \in \C_i^{cc}(j)}$ that corresponds to codewords in a single $\C_i^{cc}(j)$. This is because if $\yyh$ were belonging to $\A_{\x(1)}\cap\A_{\x(2)}\cap\cdots\cap \A_{\x(a+1)}$ (where $\x(1),\x(2),\ldots, \x(a+1) \in \C_i^{cc}(j)$), then such $\yyh$ would satisfy $\yyh\in \A_{\x(1)}$ but there are $a$ other codewords $\x(2), \ldots, \x(a+1) \in \C_i^{cc}$ such that $T_{\y|\x(1)} = T_{\yyh|\x(2)} = \ldots = T_{\yyh|\x(a+1)}$, thus violating the second property that $\A_{\x(1)}$ should satisfy. Then for every $j \in \{1,2,\ldots, |\widehat{\mathscr{V}}_{P^{cc}}|\}$, we have 
\begin{align}
    \sum_{\x \in \C_i^{cc}(j)} |\A_{\x}| &=  \sum_{\x \in \C_i^{cc}(j)} \sum_{\yyh \in \T_{P^{cc}\Vh^j}} \mathbbm{1}\left\{\yyh \in \A_{\x} \right\} =\sum_{\yyh \in \T_{P^{cc}\Vh^j}} \sum_{\x \in \C_i^{cc}(j)} \mathbbm{1}\left\{\yyh \in \A_{\x} \right\} \le  a \times \left|\T_{P^{cc}\Vh^j} \right|. \label{eq:contra1}
\end{align}
On the other hand, by the first property of $\A_{\x}$ we have
\begin{align}
\sum_{\x \in \C_i^{cc}(j)} |\A_{\x}| > \sum_{\x \in \C_i^{cc}(j)} \frac{1}{a} \left|\T_{\Vh^j}(\x) \right| &\ge \sum_{\x \in \C_i^{cc}(j)} \frac{|\widehat{\mathscr{V}}_{P^{cc}}|}{|\C_i^{cc}|}\times a \times \left|\T_{P^{cc}\Vh^j} \right| \\
&=  \frac{|\widehat{\mathscr{V}}_{P^{cc}}||\C_i^{cc}(j)|}{|\C_i^{cc}|}\times a \times \left|\T_{P^{cc}\Vh^j} \right|. 
\end{align}
By the Pigeonhole principle, there must exist a $\C_i^{cc}(j')$ (where $j' \in \{1,2,\ldots, |\widehat{\mathscr{V}}_{P^{cc}}|\}$) such that $|\C_{i}^{cc}(j')| \ge |\C_i^{cc}|/|\widehat{\mathscr{V}}_{P^{cc}}|$, thus 
\begin{align}
   \sum_{\x \in \C_i^{cc}(j')} |\A_{\x}| > a \times \left|\T_{P^{cc}\Vh^{j'}} \right|.  \label{eq:contra2}
\end{align}
Therefore, a contradiction for $\C_i^{cc}(j')$ arises due to~\eqref{eq:contra1} and~\eqref{eq:contra2}. This means that when~\eqref{eq:condition} holds, there must exist a ``bad'' codeword $\x(m)$ such that for all conditional type $\Vh^j \in \widehat{\mathscr{V}}_{P^{cc}}$, a large fraction of $\yyh$ sequences in the corresponding type class cause a \emph{type conflict} with $a$ other codewords (i.e., there are $a$ codewords $\x'(1), \ldots, \x'(a)$ satisfying $T_{\yyh|\x'(1)} = \cdots = T_{\yyh|\x'(a)} = T_{\yyh|\x(m)}$). 

The rest of the proof is essentially due to~\cite{kangarshahi2021single}. Now, suppose the aforementioned codeword $\x(m)$ is transmitted, and Bob's received sequence $\Y$ belongs to a conditional type $V \in \mathscr{V}_{P^{cc}}$ (which occurs with high probability). By Lemma~\ref{lemma:a}, one can find a maximal joint conditional type $V_{Y\widehat{Y}|X}$ such that its marginal conditional types $V_{Y|X} = V$ and $V_{\widehat{Y}|X} \in \widehat{\mathscr{V}}_{P^{cc}}$. As noted in~\cite{kangarshahi2021single}, any pair of sequences $(\y,\yyh)$ in the maximal joint conditional type has the following property.
\begin{claim}[Lemma 3 of~\cite{kangarshahi2021single}] \label{lemma:link}
If $(\y,\yyh) \in \mathcal{T}_{V_{Y\widehat{Y}|X}}(\x(m))$, and $\yyh$ has a type conflict with another codeword $\x' \in \C_i^{cc}$, i.e., $T_{\yyh|\x'} = T_{\yyh|\x(m)} = V_{\widehat{Y}|X}$, then $\mathsf{q}^n(\x,\y) \le \mathsf{q}^n(\x',\y)$.
\end{claim}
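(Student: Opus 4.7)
The plan is to establish the claim by combining two ingredients: a coordinate-wise ratio inequality coming from the maximality of $V_{Y\widehat{Y}|X}$, and an absolute equality of $\mathsf{q}^n(\cdot,\yyh)$ values coming from the type-conflict hypothesis. Neither ingredient is individually hard; the content of the lemma is that they combine to cancel the $\yyh$-factors and yield the claimed comparison on the $\y$-side.

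First, I would unpack what $(\y,\yyh)\in\mathcal{T}_{V_{Y\widehat{Y}|X}}(\x(m))$ says coordinate-wise. By definition, the joint type of $(\x(m),\y,\yyh)$ equals $P^{cc}\times V_{Y\widehat{Y}|X}$, so for every $i\in\{1,\ldots,n\}$ one has $V_{Y\widehat{Y}|X}(y_i,\yh_i\mid x(m)_i)>0$. Maximality (Definition~\ref{def:maximal2}) forces $x(m)_i\in\Sq(y_i,\yh_i)$, and by the defining optimization of $\Sq$ (Definition~\ref{def:maximal}), $x(m)_i$ maximizes the ratio $\mathsf{q}(\cdot,\yh_i)/\mathsf{q}(\cdot,y_i)$ over $\mathcal{X}$. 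Applied to the competing symbol $x'_i$, this gives the pointwise inequality $\mathsf{q}(x(m)_i,\yh_i)/\mathsf{q}(x(m)_i,y_i)\ \ge\ \mathsf{q}(x'_i,\yh_i)/\mathsf{q}(x'_i,y_i)$. Since $\mathsf{q}$ is strictly positive, multiplying these $n$ inequalities lifts them to
$$\frac{\mathsf{q}^n(\x(m),\yyh)}{\mathsf{q}^n(\x(m),\y)}\ \ge\ \frac{\mathsf{q}^n(\x',\yyh)}{\mathsf{q}^n(\x',\y)}.$$

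Second, I would use the type-conflict hypothesis to cancel the $\yyh$-terms. Because $\x(m),\x'\in\T_{P^{cc}}$ share the same marginal type and $T_{\yyh|\x(m)}=T_{\yyh|\x'}=V_{\widehat{Y}|X}$, the joint types $T_{\x(m),\yyh}$ and $T_{\x',\yyh}$ coincide; since $\log \mathsf{q}^n(\cdot,\cdot)$ is a linear functional of the joint empirical distribution of its two arguments, this yields $\mathsf{q}^n(\x(m),\yyh)=\mathsf{q}^n(\x',\yyh)$. Substituting this equality into the ratio inequality from the previous step and cross-multiplying (legitimate by positivity of $\mathsf{q}$) produces $\mathsf{q}^n(\x(m),\y)\le\mathsf{q}^n(\x',\y)$, which is the claimed inequality, reading the apparent $\x$ in the statement as $\x(m)$.

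There is no real technical obstacle here: both ingredients are immediate consequences of the definitions, and positivity of $\mathsf{q}$ means one never has to worry about zero denominators or sign flips. The only conceptual point worth highlighting is why the maximal-type construction of Lemma~\ref{lemma:a} is indispensable at this point of the argument. It is precisely maximality that guarantees $x(m)_i$ locally optimizes $\mathsf{q}(\cdot,\yh_i)/\mathsf{q}(\cdot,y_i)$ at \emph{every} coordinate, and therefore that the pointwise ratio inequality goes in the correct direction uniformly in $i$; without this one would have the $\yyh$-equality from the type conflict but no lever to convert it into a comparison on the $\y$-side.
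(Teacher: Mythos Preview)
Your proof is correct and is precisely the standard argument from \cite{kangarshahi2021single} that the paper defers to (the paper does not give its own proof of this claim, merely citing Lemma~3 of that reference). The two ingredients you identify---the coordinate-wise ratio inequality from maximality and the equality $\mathsf{q}^n(\x(m),\yyh)=\mathsf{q}^n(\x',\yyh)$ from the shared joint type---are exactly what is used, and your reading of $\x$ as $\x(m)$ is the intended one.
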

Claim~\ref{lemma:link} implies that for any $\y \in \mathcal{T}_{V}(\x(m))$, if one can find a $\yyh$ such that (i) $(\y,\yyh) \in \mathcal{T}_{V_{Y\widehat{Y}|X}}(\x(m))$, and (ii) $\yyh$ has a type conflict with $a$ other codewords, then the probability of decoding error when receiving $\y$ (under the decoding metric $\mathsf{q}$) is at least $\frac{a}{a+1}$. In fact,~\cite[Lemma 2]{kangarshahi2021single} shows that at least $\frac{a-1}{a}$ fraction of sequences $\y \in\mathcal{T}_V(\x(m))$ satisfies the above two properties simultaneously.\footnote{Roughly speaking, this is because for the ``bad'' codeword $\x(m)$, most sequences (of fraction $\frac{a-1}{a}$) in $\mathcal{T}_{V_{\widehat{Y}|X}}$ have a type conflict with at least $a$ other codewords.} Thus, when Bob's received sequence $\Y \in V$ for some $V \in \mathscr{V}_{P^{cc}}$, we have 
\begin{align}
        \PP\left(\widehat{M} \ne m|M = m, T_{\Y|\x(m)} =V \right) \ge \frac{a-1}{a+1} = 1 - \frac{2}{a+1}.
    \end{align}

\section{Proof of Eqn.~\eqref{eq:c}} \label{appendix:calculation1}
By using Taylor expansions, we have
\begin{align}
&H\left(\frac{t'}{\sqrt{n}}\Vh_1 + \left(1 - \frac{t'}{\sqrt{n}} \right) \Vh_0 \right) - \frac{t'}{\sqrt{n}}H(\Vh_1) - \left(1- \frac{t'}{\sqrt{n}} \right)H(\Vh_0) \notag \\
&= -\left(1 - \frac{t'}{\sqrt{n}} \right) \sum_{y}\Vh_0(y) \log \left[1 - \frac{t'}{\sqrt{n}} + \frac{t}{\sqrt{n}}\frac{\Vh_1(y)}{\Vh_0(y)} \right] - \frac{t'}{\sqrt{n}} \sum_{y}\Vh_1(y)\log\left[\frac{t'}{\sqrt{n}} + \left(1 -\frac{t'}{\sqrt{n}} \right)\frac{\Vh_0(y)}{\Vh_1(y)} \right]  \\
&= -\left(1 - \frac{t'}{\sqrt{n}} \right) \sum_{y}\Vh_0(y) \left[\frac{t'}{\sqrt{n}}\frac{\Vh_1(y) - \Vh_0(y)}{\Vh_0(y)} + \mathcal{O}\left(\frac{(t')^2}{n}\right) \right] \notag \\
&\qquad\qquad\qquad\qquad\qquad\qquad\qquad\qquad - \frac{t'}{\sqrt{n}} \sum_{y}\Vh_1(y)\log\left[\frac{\Vh_0(y)}{\Vh_1(y)}\left(1 + \frac{t'}{\sqrt{n}}\left(1 - \frac{\Vh_1(y)}{\Vh_0(y)}\right) \right) \right] \\
&\le \frac{t'}{\sqrt{n}} \DD\left(\Vh_1 \big\Vert \Vh_0 \right) + \frac{C_4 \cdot (t')^2}{n}
\end{align}
for some constant $C_4 > 0$.

\section{Proof of Eqn.~\eqref{eq:xi2}} \label{appendix:d}
For simplicity we define $\tau^1_n \triangleq \frac{[P_1]^{\max}|\mathcal{Y}|}{n^{1/8}}+ \frac{|\mathcal{Y}|^2+|\mathcal{Y}|}{t'\sqrt{n}}$ and $\tau^0_n \triangleq \frac{[P_0]^{\max}|\mathcal{Y}| \sqrt{\log n}}{\sqrt{n}} + \frac{2|\mathcal{Y}|^2+|2\mathcal{Y}|}{n}$.
For each $\Vh \in \widehat{\mathscr{V}}_{P^{cc}}$, we have
\begin{align}
\DD\left(\Vh_1 \big\Vert \Vh_0 \right) &= \sum_{\yh:\Vh_1(\yh) \ge \Vh_0(\yh)} \Vh_1(\yh) \log \frac{\Vh_1(\yh)}{\Vh_0(\yh)} + \sum_{\yh:\Vh_1(\yh) < \Vh_0(\yh)} \Vh_1(\yh) \log \frac{\Vh_1(\yh)}{\Vh_0(\yh)} \\
&\le \sum_{\yh:\Vh_1(\yh) \ge \Vh_0(\yh)} \left( P^{\ast}_{\widehat{Y}|X}(\yh|1) + \tau^1_n \right) \log \frac{P^{\ast}_{\widehat{Y}|X}(\yh|1)+\tau^1_n}{P^{\ast}_{\widehat{Y}|X}(\yh|0)-\tau^0_n} + \sum_{\yh:\Vh_1(\yh) < \Vh_0(\yh)} \left( P^{\ast}_{\widehat{Y}|X}(\yh|1) - \tau^1_n \right) \log \frac{P^{\ast}_{\widehat{Y}|X}(\yh|1)+ \tau^1_n}{P^{\ast}_{\widehat{Y}|X}(\yh|0)-\tau^0_n} \label{eq:bye3} \\
&\le \sum_{\yh\in \mathcal{Y}}P^{\ast}_{\widehat{Y}|X}(\yh|1) \log\frac{P^{\ast}_{\widehat{Y}|X}(\yh|1)+\tau^1_n}{P^{\ast}_{\widehat{Y}|X}(\yh|0)-\tau^0_n} + \tau^1_n \sum_{\yh \in \mathcal{Y}} \left|\log \frac{P^{\ast}_{\widehat{Y}|X}(\yh|1)+\tau^1_n}{P^{\ast}_{\widehat{Y}|X}(\yh|0)-\tau^0_n} \right|, \label{eq:bye1}
\end{align}
where~\eqref{eq:bye3} follows from the definition of $\widehat{\mathscr{V}}_{P^{cc}}$.
By applying Taylor expansions, we have
\begin{align}
\log\frac{P^{\ast}_{\widehat{Y}|X}(\yh|1)+\tau^1_n}{P^{\ast}_{\widehat{Y}|X}(\yh|0)-\tau^0_n} \le \log \frac{P^{\ast}_{\widehat{Y}|X}(\yh|1)}{P^{\ast}_{\widehat{Y}|X}(\yh|0)} + C_5 \tau^1_n + C_6 \tau^0_n \label{eq:bye2}
\end{align}
for some constants $C_5, C_6 > 0$. Combining~\eqref{eq:bye1} and~\eqref{eq:bye2}, we eventually obtain that there exists a vanishing sequence $\mu_n$ (depending on $\tau^1_n$ and $\tau^0_n$) such that \begin{align}
    \DD\left(\Vh_1 \big\Vert \Vh_0 \right) \le \DD\left(P^{\ast}_{\widehat{Y}|X=1} \Vert P^{\ast}_{\widehat{Y}|X=0} \right) + \mu_n.
\end{align}

\ifCLASSOPTIONcaptionsoff
  \newpage
\fi

\bibliographystyle{IEEEtran}
\bibliography{reference}

\end{document}